\definecolor{mygreen}{RGB}{10,110,230}
\definecolor{myred}{RGB}{10,110,230}
\renewcommand{\epsilon}{\varepsilon}
\DeclareMathOperator{\E}{\ensuremath{\normalfont \textbf{E}}}
\newcommand{\hiddencomment}[1]{}
\newcommand{\mc}[1]{\ensuremath{\mathcal{#1}}}
\newcommand{\K}[0]{\ensuremath{o(n^{6/5})}}
\newcommand{\thelb}[0]{\ensuremath{n^{1.2 - o(1)}}}
\newcommand{\yes}[0]{\ensuremath{\mathsf{YES}}}
\newcommand{\no}[0]{\ensuremath{\mathsf{NO}}}
\newcommand{\yesdist}[0]{\ensuremath{\mathcal{D}_{\yes}}}
\newcommand{\nodist}[0]{\ensuremath{\mathcal{D}_{\no}}}
\newcommand{\dist}[0]{\ensuremath{\mathcal{D}}}
\DeclareMathOperator{\poly}{poly}
\crefname{lemma}{Lemma}{Lemmas}
\crefname{theorem}{Theorem}{Theorems}
\crefname{property}{Property}{Properties}
\crefname{claim}{Claim}{Claims}
\crefname{result}{Result}{Results}
\crefname{definition}{Definition}{Definitions}
\crefname{observation}{Observation}{Observations}
\crefname{proposition}{Proposition}{Propositions}
\crefname{assumption}{Assumption}{Assumptions}
\crefname{line}{Line}{Lines}
\crefname{figure}{Figure}{Figures}
\crefname{equation}{}{}
\crefname{section}{Section}{Sections}
\crefname{appendix}{Appendix}{Appendices}
\crefname{algCounter}{Algorithm}{Algorithms}
\Crefname{algCounter}{Algorithm}{Algorithms}
\newtheorem{theorem}{Theorem}[section]
\newtheorem{lemma}[theorem]{Lemma}
\newtheorem{proposition}[theorem]{Proposition}
\newtheorem{corollary}[theorem]{Corollary}
\newtheorem{definition}[theorem]{Definition}
\newtheorem{claim}[theorem]{Claim}
\newtheorem{observation}[theorem]{Observation}
\newtheorem{remark}[theorem]{Remark}
\newtheorem*{remark*}{Remark}
\definecolor{mylightgray}{RGB}{230,230,230}
\algnewcommand{\IIf}[2]{\textbf{if} #1 \textbf{then} #2}
\algnewcommand{\EndIIf}{\unskip\ \algorithmicend\ \algorithmicif}
\newenvironment{graytbox}{
\par\addvspace{0.1cm}
\begin{tcolorbox}[width=\textwidth,
                  boxsep=5pt,
                  left=1pt,
                  right=1pt,
                  top=2pt,
                  bottom=2pt,
                  boxrule=0pt,
                  arc=0pt,
                  colback=mylightgray,
                  colframe=black,
                  ]
}{
\end{tcolorbox}
}
\newenvironment{whitetbox}{
\par\addvspace{0.1cm}
\begin{tcolorbox}[width=\textwidth,
                  boxsep=5pt,
                  left=1pt,
                  right=1pt,
                  top=2pt,
                  bottom=2pt,
                  boxrule=1pt,
                  arc=0pt,
                  colframe=black,
                  colback=white
                  ]
}{
\end{tcolorbox}
}
\newenvironment{myproof}{
\vspace{-0.5cm}
\begin{proof}
}{
\end{proof}
}
\newcounter{algCounter}
\renewcommand{\paragraph}{%
  \@startsection{paragraph}{4}%
  {\z@}{10pt}{-1em}%
  {\normalfont\normalsize\bfseries}%
}
 \title{Sublinear Time Algorithms and Complexity of Approximate Maximum Matching}
\author{
Soheil Behnezhad\\{\em Northeastern University} \and
Mohammad Roghani\\{\em Stanford University} \and
Aviad Rubinstein\thanks{Supported by NSF CCF-1954927 and a David and Lucile Packard Fellowship.}\\{\em Stanford University}
}
\date{}
\begin{document}

\maketitle

\thispagestyle{empty}

\begin{abstract}
     Sublinear time algorithms for approximating maximum matching size have long been studied. Much of the progress over the last two decades on this problem has been on the algorithmic side. For instance, an algorithm of \citet*{behnezhad2021} obtains a 1/2-approximation in $\widetilde{O}(n)$ time for $n$-vertex graphs. A more recent algorithm by \citet*{behnezhadsublinear22} obtains a slightly-better-than-1/2 approximation in $O(n^{1+\epsilon})$ time (for arbitrarily small constant $\varepsilon>0$). On the lower bound side, \citet*{ParnasRon07} showed 15 years ago that obtaining any constant approximation of maximum matching size requires $\Omega(n)$ time. Proving any super-linear in $n$ lower bound, even for $(1-\epsilon)$-approximations, has remained elusive since then.
    
     \smallskip\smallskip
     In this paper, we prove the first super-linear in $n$ lower bound for this problem. We show that at least $n^{1.2 - o(1)}$ queries in the adjacency list model are needed for obtaining a $(\frac{2}{3} + \Omega(1))$-approximation of the maximum matching size. This holds even if the graph is bipartite and is promised to have a matching of size $\Theta(n)$. Our lower bound argument builds on techniques such as {\em correlation decay} that to our knowledge have not been used before in proving sublinear time lower bounds.
     
     \smallskip\smallskip
     We complement our lower bound by presenting two algorithms that run in strongly sublinear time of $n^{2-\Omega(1)}$. The first algorithm achieves a $(\frac{2}{3}-\varepsilon)$-approximation (for any arbitrarily small constant $\varepsilon>0$); this significantly improves prior close-to-1/2 approximations. 
     Our second algorithm obtains an even better approximation factor of $(\frac{2}{3}+\Omega(1))$ for bipartite graphs. This breaks $2/3$-approximation which has been a barrier in various settings of the matching problem, and importantly shows that our $n^{1.2-o(1)}$ time lower bound for $(\frac{2}{3}+\Omega(1))$-approximations cannot be improved all the way to $n^{2-o(1)}$.
\end{abstract}

{
\clearpage
\hypersetup{hidelinks}
\vspace{1cm}
\renewcommand{\baselinestretch}{0.1}
\setcounter{tocdepth}{2}
\tableofcontents{}
\thispagestyle{empty}
\clearpage
}

\setcounter{page}{1}
\clearpage

\section{Introduction}

We study algorithms for estimating the size of {\em maximum matching}. Recall that a {\em matching} is a set of edges no two of which share an endpoint, and a {\em maximum matching} is a matching of the largest size. This problem has been studied for decades. There are traditional algorithms that can obtain an arbitrary good approximation of the solution in time linear in the input size (see e.g. the celebrated algorithm of Hopcroft and Karp \cite{HopcroftK73}). A natural question, that has received significant attention over the past two decades, is whether it is possible to estimate the size of maximum matching in {\em sublinear time} in the input size \cite{ParnasRon07,NguyenOnakFOCS08,YoshidaYISTOC09,OnakSODA12,KapralovSODA20,ChenICALP20,behnezhad2021,behnezhadsublinear22}.

Since any sublinear time algorithm can read only a small fraction of the input, it is important to specify how the input is represented. Our focus in this work is particularly on the standard {\em adjacency list} model. Here, the algorithm can specify a vertex $v$ of its choice and any integer $i$, and in response receives the $i$-th neighbor of $v$ stored in an arbitrarily ordered list, or receives NULL if $v$ has less than $i$ neighbors.\footnote{Another common model is the {\em adjacency matrix} model where the algorithm can specify a pair of vertices $(u, v)$ and in response gets whether they are adjacent or not.} Given a graph $G$ represented in this model, the goal is to provide an estimate $\widetilde{\mu}(G)$ of the size $\mu(G)$ of the maximum matching of $G$ such that $\widetilde{\mu}(G)$ is ``close'' to $\mu(G)$ while the total number of queries made to the graph is much smaller than the size of the graph.

\subsection{Known Bounds}

It is not hard to see that to provide an exact estimate satisfying $\widetilde{\mu}(G) = \mu(G)$, one has to make $\Omega(n^2)$ queries to the graph, even if the algorithm is randomized (with constant success probability). Thus, any sublinear time algorithm must resort to approximations. Indeed, multiple algorithms have been devised that run in sublinear time and approximate maximum matching size. Earlier results, pioneered in the works of \citet*{ParnasRon07,NguyenOnakFOCS08} and \citet*{YoshidaYISTOC09} focused mainly on bounded-degree graphs. The more recent works of \cite{ChenICALP20,KapralovSODA20,behnezhad2021,behnezhadsublinear22} run in sublinear time even on general $n$-vertex graph. For instance, a result of \citet*{behnezhad2021} shows an (almost) 1/2-approximation in the adjacency list model can be obtained in $\widetilde{O}(n)$ time. More recently, \citet*{behnezhadsublinear22} broke the half-approximation barrier, obtaining a (slightly) better $(\frac{1}{2}+\Omega(1))$-approximation in $O(n^{1+\epsilon})$-time. With the 1/2-approximation now broken, a natural next step, posed explicitly in \cite{behnezhadsublinear22} as an open problem, is determining the best approximation achievable in $n^{2-\Omega(1)}$ time.

The situation on the lower bound side is much less understood. The only known lower bound is that $\Omega(n)$ time is needed for obtaining any constant approximation of maximum matching, which was proved 15 years ago by \citet*{ParnasRon07}. While this lower bound was (nearly) matched by \cite{behnezhad2021} for 1/2-approximations, it is not known whether it is optimal or can be improved for better approximations. In particular, the current state of affairs leave it possible to obtain even a $(1-\epsilon)$-approximation, for any fixed $\epsilon > 0$, in just $O(n)$ time. Not only such a result would be amazing on its own, but as we will later discuss, it will have deep consequences in the study of dynamic graphs. It is also worth noting that in their beautiful work, \citet*{YoshidaYISTOC09} showed existence of an $O(n) + \Delta^{O(1/\epsilon^2)}$ time algorithm that obtains a $(1-\epsilon)$-approximation\footnote{More precisely, \cite{YoshidaYISTOC09} give a multiplicative-additive $(1-\epsilon, o(n))$ approximation in $\Delta^{O_\epsilon(1)}$ time.  The claimed bound follows by slightly tweaking their algorithm using techniques developed in \cite{behnezhad2021} for multiplicative approximations.}. While this is not a sublinear time algorithm for the full range of $\Delta$, it runs in $O(n)$ time for $\Delta = n^{O(\epsilon^2)}$. This shows that any potential $\omega(n)$ time lower bound must be proved on graphs of large degree.

\subsection{Our Results}

\paragraph{Improved Lower Bound:} In this work, we give the first super-linear in $n$ lower bound for the sublinear matching problem. We show that not only $(1-\epsilon)$-approximations are not achievable in $O(n)$ time, but in fact any better than $2/3$-approximation requires at least $n^{1.2-o(1)}$ time.

\begin{graytbox}
    \begin{theorem}\label{thm:lb}
        For any fixed $\alpha > 0$, any (possibly randomized) algorithm obtaining a $(2/3+\alpha)$-approximation of the size of maximum needs to make at least $n^{1.2 - o(1)}$ adjacency list queries to the graph. This holds even if the graph is bipartite and has a matching of size $\Theta(n)$.
    \end{theorem}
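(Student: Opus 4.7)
The plan is to prove \Cref{thm:lb} via Yao's minimax principle. It suffices to exhibit two distributions $\yesdist$ and $\nodist$, supported on $n$-vertex bipartite graphs with $\mu(G) = \Theta(n)$, such that $\E_{G \sim \yesdist}[\mu(G)] \geq (1-o(1)) \mu^*$ while $\E_{G \sim \nodist}[\mu(G)] \leq (2/3 + o(1)) \mu^*$ for some $\mu^* = \Theta(n)$, and no deterministic adjacency-list algorithm making fewer than $n^{1.2-o(1)}$ queries can distinguish them with non-negligible advantage. Any randomized algorithm that delivered a $(2/3+\alpha)$-approximation with better query complexity would, in particular, be forced to distinguish the two.

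The construction I would target overlays a matching-gap structure on a random bipartite graph of average degree $d = n^{1/5}$. The choice $d = n^{1/5}$ is dictated by the target exponent: a budget of $n^{6/5 - o(1)}$ queries suffices to inspect only the first-level neighborhoods of $n^{1-o(1)}$ vertices. To produce the $2/3$ gap, the YES distribution should contain many short (length-$3$) augmenting paths relative to a natural ``canonical'' matching, while the NO distribution should contain essentially none; by the classical Hopcroft--Karp analysis, a matching with no $3$-augmenting path is already within $2/3$ of optimal, so this qualitative structural difference delivers the required quantitative matching-size gap.

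The heart of the proof is a \emph{correlation decay} argument. For both distributions, I would show that the event ``vertex $v$ is covered by a (near-)maximum matching'' has its probability determined, up to additive $o(1)$, by the $O(1)$-radius neighborhood of $v$. Using this, the algorithm's view under $\yesdist$ and $\nodist$ can be coupled: that view is a union of at most $n^{1-o(1)}$ local snapshots of radius $O(1)$, and correlation decay renders each individual snapshot statistically indistinguishable between the two distributions. A standard hybrid/chain-rule argument then bounds the total variation of the entire transcripts by $o(1)$, precluding any constant-advantage distinguisher.

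The main obstacle will be establishing correlation decay in this combinatorial matching setting. Unlike classical Gibbs-measure correlation decay (on trees or in high-temperature spin systems), here one must reason about the distribution of maximum matchings conditioned on the partial edge information revealed by the queries. I expect this to demand a bespoke coupling argument -- potentially inspired by the monomer--dimer model or by local augmenting-path rewriting -- together with a concentration bound showing that, except with probability $n^{-0.2-o(1)}$ per query, no single query reveals a ``global witness'' that separates the two cases. Calibrating the degree to $d = n^{1/5}$ precisely so that the per-query revealing probability stays below $n^{-0.2}$ is what ultimately bottlenecks the algorithm at $n^{1.2 - o(1)}$ queries; controlling this probability against an \emph{adaptive} querier, in the presence of weak but nonzero long-range correlations induced by the matching constraint, is where I expect the bulk of the technical work to lie.
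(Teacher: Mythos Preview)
Your proposal has the right scaffolding --- Yao's principle, a $2/3$ matching-size gap between two distributions, degree parameter $d = n^{1/5}$, and correlation decay as the engine of indistinguishability --- but the core mechanism you describe is not the one that actually works, and the gap is substantive.

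You propose to show correlation decay for the event ``vertex $v$ is covered by a (near-)maximum matching,'' determined by the $O(1)$-radius neighborhood. This is not the right quantity: in the paper's construction, essentially every non-dummy vertex is covered in both $\yesdist$ and $\nodist$, so matching-coverage carries no distinguishing signal. What is hidden from the algorithm is something more primitive: each vertex in a designated set $A \cup B$ has a secret \emph{label} ($A$ or $B$), and the two distributions differ only in whether a certain perfect matching lies inside $A \times A$ (\yes) or inside $A \times B$ (\no). The correlation decay is about these labels: along any path the algorithm explores, the conditional distribution of the next vertex's label, given the label of the current vertex, is a two-state Markov chain that mixes to stationarity in $\Theta(\log n)$ steps. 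This is what makes the $A$/$B$ identity of any discovered vertex essentially independent of conditioning far up the tree --- and hence makes the single structural difference between \yes{} and \no{} invisible. Your ``monomer--dimer / augmenting-path rewriting'' picture is aiming at the wrong random object.

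Two further structural ingredients are missing from your plan and are not optional. First, every vertex must be padded with $\Theta(n)$ edges to a dummy set $T$; without this, a single adjacency-list query reveals a real neighbor, and $n^{1+o(1)}$ queries would suffice. Second --- and this is the subtle point the paper's techniques section emphasizes --- one must add $\Theta(\epsilon d)$ random $B$--$B$ edges. Without them, the $A/B$ label chain is \emph{periodic} (an $A$ vertex always goes to $B$ and vice versa in \nodist), and a parity-of-random-walk attack distinguishes \yes{} from \no{} in $\widetilde{O}(n)$ time. The $B$--$B$ edges break periodicity and are precisely what makes the label chain mix; your proposal gives no indication of this obstacle or its fix. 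Finally, the explored subgraph is not a union of $O(1)$-radius balls but a rooted forest of depth up to $\poly\log n$; the paper first proves this forest structure (using that $o(n^{2/5})$ edges of the real graph are ever seen) and then runs the label-mixing argument down its branches.
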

\end{graytbox}

Our proof of \cref{thm:lb} relies crucially on {\em correlation decay}. To our knowledge, this is the first application of correlation decay in proving sublinear time lower bounds. We give an informal overview of our techniques in proving \cref{thm:lb} in \cref{sec:techniques}, and discuss why correlation decay is extremely helpful for us. The formal proof of \cref{thm:lb} is then presented in \cref{sec:lb}.

We note that the essence of the $\Omega(n)$ lower bound of \citet*{ParnasRon07} is that $o(n)$ queries are not enough to even see all neighbors of a single vertex. Using this, \cite{ParnasRon07} constructs an input distribution where no $o(n)$ time algorithm can see any edge of any (approximately) optimal matching. Indeed one key challenge that our lower bound of \cref{thm:lb} overcomes is to show that even though there are, say, $O(n^{1.1})$ time algorithms that ``see'' as many as $n^{\Omega(1)}$ edges of an optimal matching, they are still unable to obtain a $(2/3+\Omega(1))$-approximation.

\paragraph{Improved Upper Bounds:} We also present two algorithms that run in strictly sublinear time of $n^{2-\Omega(1)}$. Our first result is an algorithm that works on general (i.e., not necessarily bipartite) graphs and obtains an (almost) 2/3-approximation. This significantly improves prior close-to-1/2 approximations of \cite{behnezhad2021,behnezhadsublinear22}.

\begin{graytbox}
    \begin{theorem}\label{thm:2/3}
        For any fixed $\epsilon > 0$, there are algorithms for approximating the maximum matching size of any (general) $n$-vertex graph that take $n^{2-\Omega_\epsilon(1)}$ time and obtain
        \begin{itemize}[itemsep=0pt, topsep=5pt]
            \item a multiplicative $(2/3-\epsilon)$-approximation in the adjacency list model, and 
            \item a multiplicative-additive $(2/3-\epsilon, o(n))$-approximation in the adjacency matrix model.
        \end{itemize}
    \end{theorem}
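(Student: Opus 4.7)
The plan is to simulate, in sublinear time, an offline algorithm that achieves a $(2/3 - \varepsilon)$-approximation of $\mu(G)$. The offline algorithm is the classical one: compute a maximal matching $M$, then augment $M$ along a (nearly) maximal vertex-disjoint packing of length-$3$ augmenting paths. Since a matching without augmenting paths of length at most $2k-1$ is a $\tfrac{k}{k+1}$-approximation, taking $k=2$ yields the desired $(2/3 - O(\varepsilon))\,\mu(G)$ bound. The algorithm then estimates $|M| + (\text{number of augmentations})$ without actually constructing the matchings.

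The sublinear simulation proceeds in three stages. First, I would use the random-greedy simulation framework of \citet{behnezhad2021} to obtain a \emph{local oracle} for a maximal matching $M$: given any vertex $v$, decide in $\poly(1/\varepsilon, \log n)$ expected time whether $v \in V(M)$ and identify its mate. Second, I would sample $\Tht(n^{1-\delta})$ matched edges (by sampling random vertices and invoking the oracle), and for each sampled $(v,w)\in M$ test whether both $v$ and $w$ have unmatched neighbors, using the oracle composed with adjacency-list queries. Third, to avoid over-counting augmentations that share endpoints, I would estimate the size of a \emph{maximal vertex-disjoint packing} of length-$3$ augmentations by running a second rank-based random greedy on the implicit ``augmentation graph'' whose vertices are the augmentable matched edges and whose edges encode conflicts at shared unmatched endpoints; each evaluation would again reduce, via the Behnezhad-style recursion, to expected $\poly(1/\varepsilon)$ recursive oracle calls.

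The principal obstacle will be bounding the cost of Stage 3, since testing for a conflict at an unmatched vertex may require enumerating all of its unmatched neighbors---potentially $\Omega(n)$ of them. I would overcome this by \emph{truncating} each vertex's unmatched-neighborhood at $n^{\delta}$ and arguing, via a charging argument against the matched edges affected by truncation, that the resulting bias in the packing estimate is at most $O(\varepsilon)\cdot\mu(G)$. For the adjacency matrix version, the additional $o(n)$ additive slack permits direct scans of low-degree vertices (below the threshold $n^{1-\delta'}$) and uniform neighbor-sampling of high-degree vertices through matrix queries, so the same three-stage pipeline yields a $(2/3-\varepsilon, o(n))$-approximation in $n^{2-\Omega(1)}$ time. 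Finally, a simple dichotomy on $\mu(G)$ (handling the sparse regime $\mu(G) = o(n)$ by a trivial scan within the additive slack) closes both bounds.
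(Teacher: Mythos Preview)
Your plan is genuinely different from the paper's and, as written, has a real gap at the oracle-composition step. The paper does not simulate ``maximal matching plus one round of length-$3$ augmentation'' on $G$; instead it first \emph{sparsifies}. It builds a relaxed EDCS $H$ on a random $n^{-\varepsilon^3}$-fraction of the edges, shows that the set $U$ of underfull (property-(P2)-violating) unsampled edges has size $\widetilde O(\mu(G)\,n^{\varepsilon^3})$, and then runs the Levi--Ron $(1-\varepsilon)$-approximate matching LCA on $G'=H\cup U$. Because $G'$ has, after discarding $o(\mu(G))$ high-degree vertices, maximum degree $\widetilde O(n^{\varepsilon^3})$, each LCA query costs $n^{1+O(\varepsilon)}$; and the EDCS lemma guarantees $\mu(G')\ge(2/3-\varepsilon)\mu(G)$. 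The sparsification is the entire point: the augmenting-path work is hidden inside a black-box LCA that is only affordable because the degrees have already been driven down to $n^{o(1)}$.

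In your plan there is no sparsification, and this is where it breaks. First, the \citet{behnezhad2021} oracle does not answer in $\poly(1/\varepsilon,\log n)$ time; its expected cost is $\widetilde O(\bar d)$, which is $\widetilde\Theta(n)$ on dense inputs. That alone is survivable for estimating $|M|$, but the composition in Stages~2--3 is not. To decide whether a matched endpoint $v$ has an unmatched neighbor you must probe candidates from $v$'s (adversarially ordered) adjacency list, each probe costing another $\widetilde O(\bar d)$ oracle call; and $v$ may have $\Theta(n)$ neighbors of which only $O(1)$ are unmatched, placed at arbitrary positions. Truncating at $n^{\delta}$ does not control the loss: nothing prevents $\Omega(\mu(G))$ matched endpoints from each having their sole unmatched neighbor beyond position $n^{\delta}$, so ``charging against the matched edges affected by truncation'' gives no bound on the number of lost augmentations. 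Stage~3 inherits the same problem amplified, since a single unmatched vertex can lie on $\Theta(n)$ length-$3$ augmenting paths, making the conflict-graph degree---and hence any Behnezhad-style recursion on it---linear even after truncation. Without first reducing the relevant degrees to $n^{o(1)}$, which is exactly what the EDCS step buys, there is no evident way to keep the total cost below $n^{2-\Omega(1)}$.
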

\end{graytbox}

\begin{remark*} A subquadratic time multiplicative $O(1)$-approximation is impossible in the adjacency matrix model since even distinguishing an empty graph from one including only one random edge requires $\Omega(n^2)$ adjacency matrix queries. Therefore, a multiplicative-additive approximation (defined formally in \cref{sec:prelim}) is all we can hope for under adjacency matrix queries.
\end{remark*}

Although \cref{thm:2/3} significantly improves prior approximations and comes close to the 2/3 barrier, but it does not break it. As such, one may still wonder whether the $n^{1.2-o(1)}$ lower bound of \cref{thm:lb} for $(2/3+\Omega(1))$-approximations can be improved to $n^{2-o(1)}$. Our next result rules this possibility out, and gives a subquadratic time algorithm that indeed breaks $2/3$-approximation provided that the input graph is bipartite.

\begin{graytbox}
    \begin{theorem}\label{thm:beating-2/3}
        There are algorithms for approximating the maximum matching size of any bipartite $n$-vertex graph that take $n^{2-\Omega(1)}$ time and obtain
        \begin{itemize}[itemsep=0pt, topsep=5pt]
            \item a multiplicative $(2/3+\Omega(1))$-approximation in the adjacency list model, and 
            \item a multiplicative-additive $(2/3+\Omega(1), o(n))$-approximation in the adjacency matrix model.
        \end{itemize}
    \end{theorem}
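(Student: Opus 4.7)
The plan is to bootstrap the matching produced by \cref{thm:2/3} past the $2/3$ barrier by harvesting short augmenting paths, exploiting the cleaner structure of bipartite graphs to make the search efficient. First I would invoke \cref{thm:2/3} with a sufficiently small constant $\epsilon>0$ to obtain, in $n^{2-\Omega(1)}$ time, a matching $M_0$ with $|M_0| \geq (2/3 - \epsilon)\mu(G)$. If already $|M_0| \geq (2/3 + \alpha)\mu(G)$ for some constant $\alpha>0$ we are done; otherwise the shortest-augmenting-path bound forces the symmetric difference $M_0 \oplus M^*$ (with $M^*$ a maximum matching) to contain $\Omega(n)$ vertex-disjoint augmenting paths of length at most $5$, and augmenting along a constant fraction of them pushes the ratio above $2/3 + \Omega(1)$.

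The technical core is therefore to find $\Omega(n)$ such disjoint augmenting paths using only $n^{2-\Omega(1)}$ queries. I would first exhaust length-$3$ paths by sampling unmatched vertices and running a depth-$2$ alternating BFS from each; these are cheap to detect. What remains are length-$5$ paths, which in a bipartite graph have the clean form $u - x_1 - y_1 - x_2 - y_2 - v$ with $u,v$ unmatched on opposite sides and $(x_1,y_1),(x_2,y_2)\in M_0$; crucially the two halves of the path lie on opposite sides and can be searched independently. I would therefore sample matching edges $(x,y)\in M_0$, then independently sample a non-matching neighbor of $x$ and a non-matching neighbor of $y$, and test whether the resulting two half-paths each extend (via one more alternating step) to an unmatched endpoint. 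If $\Omega(n)$ disjoint length-$5$ augmentations exist, a density argument shows that a uniformly random matching edge participates in one with probability $\Omega(1)$.

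The main obstacle is that a naive analysis pays $1/\deg$ at each alternating step, which yields only $\Theta(n^{-2})$ success per probe and hence $\Theta(n^2)$ total cost. I plan to overcome this by combining degree bucketing with the subsampled adjacency oracles developed in \cite{behnezhad2021,behnezhadsublinear22}: restricting each phase to vertices of comparable degree keeps the effective degrees along the search polynomially smaller than $n$, and the subsampled oracles allow ``picking the right neighbor'' at amortized cost $n^{1-\Omega(1)}$ per call, bringing the cost per discovered augmentation below $n^{1-\Omega(1)}$. The algorithm is iterated in $O(\log n)$ rounds, and a potential-function argument shows that each round either discovers $\Omega(n)$ disjoint augmentations or certifies that the target ratio is already met. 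For the adjacency matrix model, the additive $o(n)$ slack lets us discard vertices below a suitable degree threshold (which costs only $o(n)$ in $\mu(G)$) and then simulate adjacency-list queries by $\Ot(1)$ matrix queries per sampled neighbor, losing only an additive $o(n)$ in the final approximation.
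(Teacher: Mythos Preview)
Your proposal has a genuine gap at its foundation. \cref{thm:2/3} does not hand you a matching $M_0$; it returns only an \emph{estimate} of $\mu(G)$ (indeed, the paper notes that finding the edges of any constant-approximate matching requires $\Omega(n^2)$ queries). What the machinery behind \cref{thm:2/3} gives you is an LCA oracle that answers ``is $v$ matched in a fixed $(1-\epsilon)$-approximate matching of $G'$?'' --- but each call costs $\widetilde O_\epsilon(n^{1+\epsilon})$ time (\cref{lem:1-eps-approx-subroutine-new}). Every step of an alternating-path search must invoke this oracle, and since degrees in $G$ can be $\Theta(n)$, even your ``cheap'' depth-$2$ BFS from a single unmatched vertex already costs $\Theta(n)\cdot n^{1+\epsilon}$ time. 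The $1/\deg$ bottleneck you identify is real, but your proposed fix --- degree bucketing plus the oracles of \cite{behnezhad2021,behnezhadsublinear22} --- does not address it: those techniques amortize the cost of \emph{exploring $G$}, not the $n^{1+\epsilon}$ cost of each matched/unmatched classification, and bucketing by $G$-degree says nothing about how many such classifications an alternating walk needs. There is no argument in your plan for why the total number of oracle calls stays $o(n^{1-\epsilon})$, and the sampling-based length-$5$ test you sketch does not even enforce the disjointness needed to translate ``many augmenting paths exist'' into a size increment.

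The paper's route is entirely different and never searches for augmenting paths. It uses a tight-instance characterization of EDCS (\cref{lem:tight-instance-edcs}): if the EDCS-based estimate is stuck near $2/3$, then a degree-based partition $V_{mid}, V_{low}$ of the EDCS $H$ yields $G'' = G'[V_{low},V_{mid}]$ carrying an almost-$2/3$ matching $M$, and the leftover $A = V_{mid}\setminus V(M)$ still supports a matching of size roughly $\tfrac13\mu(G)$ inside $G[A]$. The task reduces to estimating $\mu(G[A])$, with $A$-membership itself an $n^{1+\epsilon}$-time oracle. The key idea absent from your plan is \emph{sparsification by sampled greedy maximal matchings}: drawing $n^{1+\gamma}$ random pairs in $V_{mid}$ and greedily matching them produces $M^i_{AB}$ such that $G[V_{mid}\setminus V(M^i_{AB})]$ has maximum degree $\widetilde O(n^{1-\gamma})$ (\cref{lem: sparsification-property}). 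A random-greedy-maximal-matching LCA on this sparse residual now visits only $\widetilde O(n^{1-\gamma})$ vertices, each classified in $n^{1+\epsilon}$ time, for $n^{2+\epsilon-\gamma}$ per estimate. A three-case iteration (\cref{alg:matchingbetterthan23}) then either extracts a constant-fraction augmentation directly or peels off a constant fraction of the obstructing $B$-vertices and repeats; termination in $O(1)$ rounds gives the $(2/3+\Omega(1))$ bound.
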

\end{graytbox}

\cref{thm:lb,thm:beating-2/3} together imply a rather surprising bound on the complexity of $(2/3+\Omega(1))$-approximating maximum matching size for bipartite graphs: They show that the right time-complexity is $n^{1+c}$ for some $c$ that is strictly smaller than 1 but no smaller than $.2$.

\paragraph{Independent Work:} In a concurrent and independent work \citet*{BKS22} gave $n^{2-\Omega(1)}$ time algorithms for obtaining (almost) 2/3-approximation in both adjacency list and adjacency matrix models similar to our \cref{thm:2/3}. We note that the lower and upper bounds of \cref{thm:lb,thm:beating-2/3} are unique to our paper.

\subsection{Further Related Works and Implications}

\paragraph{The 2/3-Barrier for Approximating Matchings:} Prior to our work, it was shown in a beautiful paper of \citet*{GoelKK12} that obtaining a better than 2/3-approximation of maximum matching in the one-way communication model requires $n^{1+\Omega(1/\log \log n)} \gg n\poly(\log n)$ communication from Alice to Bob. This was, to our knowledge, the first evidence that going beyond 2/3-approximation for maximum matching is ``hard'' in a certain model. Our \cref{thm:lb} shows that this barrier extends to the sublinear model. While the two models are completely disjoint and the constructions are different, we note that unlike \cref{thm:lb}, the $n^{1+\Omega(1/\log \log n)}$ lower bound of \cite{GoelKK12} is still $n^{1+o(1)}$. Obtaining an $n^{1+\Omega(1)}$ lower bound for approximating matchings in the one-way communication model remains open (and, in fact, impossible short of a breakthrough in combinatorics --- see \cite{GoelKK12}).

\paragraph{Implications for Dynamic Algorithms:} In the fully dynamic maximum matching problem, we have a graph that is subject to both edge insertions and deletions. The goal is to maintain a good approximation of maximum matching by spending a small time per update. This is a very well studied problem for which several update-time/approximation trade-offs have been shown. Recent works of \citet*{behnezhad2022dynamic} and \citet*{bhattacharyaKSW-SODA23} established a new connection between this problem and (static) sublinear time algorithms. In particular, they showed that any $T$ time algorithm for $\alpha$-approximating maximum matching size leads to an algorithm that maintains an (almost) $\alpha$-approximation in $\widetilde{O}(T/n)$ time per update in the dynamic setting. This, in particular, motivates the study of sublinear time maximum matching algorithms that run in $\widetilde{O}(n)$ time as they would lead to $\widetilde{O}(1)$ update-time algorithms which are the holy grail of dynamic algorithms. Our \cref{thm:lb} shows that, at least in this framework, better than a $2/3$-approximation requires $n^{0.2-o(1)}$ update-time. We note that proving such a lower bound for all algorithms (i.e., not necessarily in this framework) remains an important open problem (see \cite{AbboudW14,HenzingerKNS15} and the references therein).

\section{Our Techniques}\label{sec:techniques}

\subsection{The Lower Bound of \cref{thm:lb}}

\paragraph{An insightful, but broken, input distribution:} Let us start with the lower bound. Consider the following input distribution, illustrated in the figure below. While we emphasize that this is not the final input distribution that we prove the lower bound with, it provides the right intuition and also highlights some of the challenges that arise in proving the lower bound.

 There are four sets of vertices, $A, B, S, T$ with $|A| = |B| = |S| = N$ and $T = \epsilon N$. The $T$ vertices are ``dummy'' vertices, they are adjacent to every other vertex in $A \cup B \cup S$. While this increases the degree of every vertex in $A \cup B \cup S$ to at least $\epsilon N$, the $T$ vertices cannot contribute much to the output since $|T| = \epsilon N$. The important edges, that determine the output are among the rest of the vertices. In particular, there is always a perfect matching between the $B$ vertices and the $S$ vertices. Additionally, there is a random Erd\"os-Renyi graph between $A$ and $B$ for a suitable expected degree $d = N^{\Theta(1)}$. Now in the \yes{} distribution, we also add a perfect matching among the $A$ vertices (the red matching) but in the \no{} distribution, we do not add this matching. Observe that in the \yes{} case, we can match all of $B$ to $S$ and all of $A$ together, obtaining a matching of size at least $3N$. However, in the \no{} case, it can be verified that the maximum matching is only at most $(2+\epsilon)N$. Thus, any algorithm that beats 2/3-approximation by a margin more than $\epsilon$, should be able to distinguish whether we are in the \yes{} distribution or in the \no{} distribution.

\begin{figure}[h]
    \centering
    \includegraphics[scale=0.5]{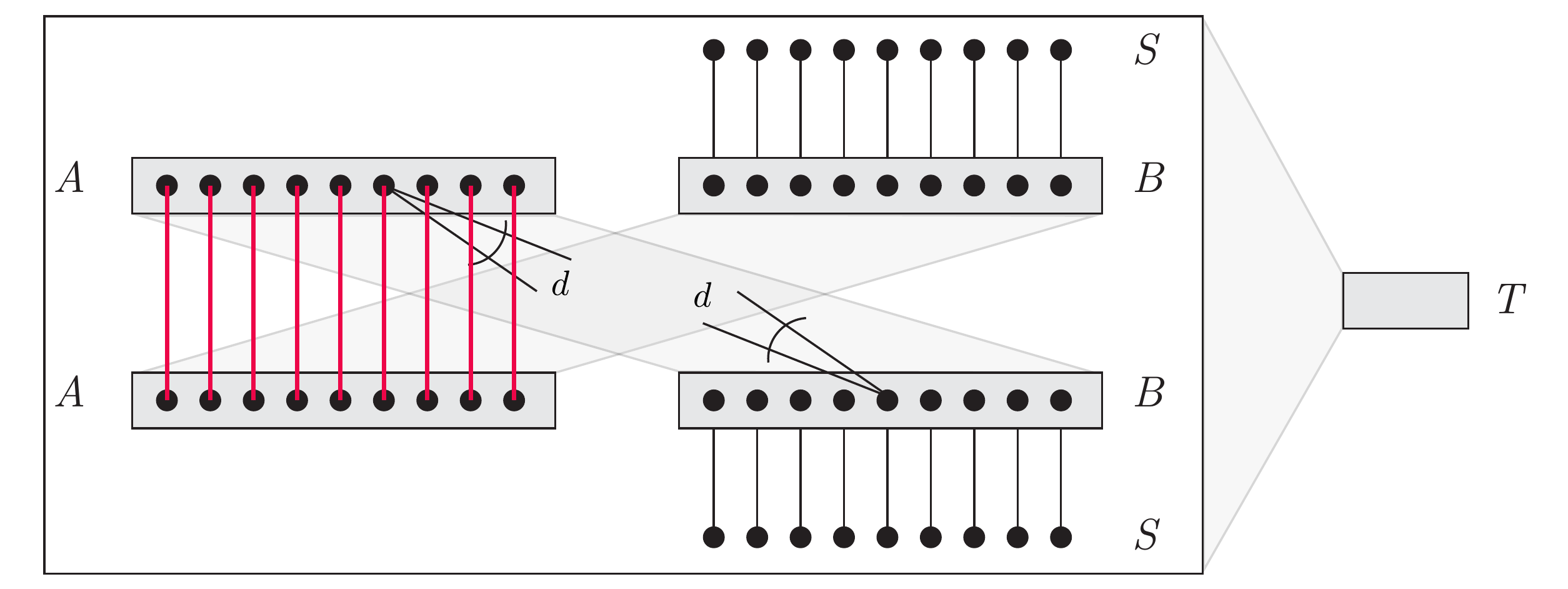}
\end{figure}

Suppose now that we give away which vertices belong to $S$ and $T$ for free with no queries, but keep it secret whether a vertex $v \in A \cup B$ belongs to $A$ or $B$. To examine whether a vertex $v$ belongs to $A$ or $B$, the naive approach is to go over all of its neighbors, and see whether we see an $S$ neighbor or not. But because the degree of each vertex is $\Omega(\epsilon N)$ due to the edges to $T$, this requires at least $\Omega(\epsilon N) = \Omega(N)$ time. But this is not enough. To separate the two cases, one has to actually determine if there are any edges among the $A$ vertices or not. Here, the naive approach is to first find a vertex $v \in A$ which can be done in $O(N)$ time, and then go over all neighbors $u$ of $v$ in $A \cup B$, and examine whether $u$ belongs to $A$. Since $v$ has $d = N^{\Omega(1)}$ neighbors in $A \cup B$, this naive approach takes $N^{1+\Omega(1)}$ time. We would like to argue that this naive approach is indeed the best possible, and that any algorithm distinguishing the two distributions must make $N^{1+\Omega(1)} = n^{1+\Omega(1)}$ queries to the graph. Unfortunately, this is not the case with the distribution above as we describe next. 

Consider the following algorithm. We first start by finding an $A$ vertex $v$ in $O(n)$ time. Then we go over the neighbors of $v$ in a random order until we find the first neighbor $u$ that belongs to $A \cup B$. Note that this takes $O(n/d)$ time. We then do the same for $u$, finding a random neighbor to another vertex in $A \cup B$. Since each step of this random walk takes $O(n/d)$ time, we can continue it for $2d$ steps in $O(n)$ total time. Let $w$ be the last vertex of the walk. We now examine whether $w$ belongs to $A$ or $B$ in $O(n)$ time. We argue that by doing so, there is a constant probability that we can distinguish the \yes{} case from the \no{} case. To see this, observe that in the \no{} distribution, every $A$ vertex goes to a $B$ vertex and every $B$ vertex goes to an $A$ vertex with probability one. As such, since the walk continues for an even number of steps, the last vertex $w$ must belong to $A$ with probability 1. But in the \yes{} distribution, there is a constant probability that we go through an $A$-$A$ edge exactly once. In this case, the last vertex $w$ of the random walk must belong to $B$. Repeating this process a constant number of times allows us to distinguish the two distributions with probability $0.99$ in merely $O(n)$ time!

\paragraph{Correlation Decay to the Rescue:} To get around this challenge, we add $\epsilon d$ edges also among the $B$ vertices (while modifying the size of $A$ and $B$ slightly to ensure that the degrees in $A$ and $B$ do not reveal any information --- see \cref{fig:dist_common}). Observe that this completely destroys the algorithm above. In particular, it no longer holds that any $B$ vertex goes to an $A$ vertex with probability $1$. Rather, there is a probability $\epsilon$ that we go from $B$ to $B$. Hence, intuitively, whether the last vertex $w$ of the random walk belongs to $A$ or $B$ does not immediately reveal any information about whether an $A$-$A$ edge was seen or not. We show that indeed, this can be turned into a formal lower bound against all algorithms via \underline{correlation decay}. First, we show that for suitable $d$, the queried subgraph will only be a tree. We then show that conditioning on the queries conducted far away from a vertex $v$, the probability of $v$ being an $A$ or $B$ vertex will not change, and use this to prove that the algorithm cannot distinguish the \yes{} and \no{} distributions.

\subsection{The Algorithms of \cref{thm:2/3,thm:beating-2/3}}

We now turn to provide an overview of the techniques in obtaining the (almost) 2/3-approximation of \cref{thm:2/3}, and the $(2/3 + \Omega(1))$-approximation of \cref{thm:beating-2/3} for bipartite graphs.

\paragraph{An (Almost) 2/3-Approximation via EDCS:} The {\em edge-degree constrained subgraph} (EDCS) introduced by \citet*{bernsteinstein2015,bersteinsteingeneral} has been a powerful tool in obtaining an (almost) $2/3$-approximation of maximum matching in various settings including dynamic algorithms \cite{bernsteinstein2015,bersteinsteingeneral}, communication complexity \cite{AssadiB19}, stochastic matchings \cite{AssadiB19}, and (random order) streaming \cite{bernsteinstreaming2020,AssadiB21}. In this work, we use it for the first time in the sublinear time model. In particular, both \cref{thm:2/3,thm:beating-2/3} build on EDCS. 

For a parameter $\beta$ (think of it as a large constant), a subgraph $H$ of a graph $G$ is a $\beta$-EDCS of $G$ if $(P1)$ all edges $(u, v)$ in $H$ satisfy $\deg_H(u) + \deg_H(v) \leq \beta$ and $(P2)$ all edges $(u, v) \in G \setminus H$ satisfy $\deg_H(v) + \deg_H(v) \geq (1-\epsilon)\beta$. The main property of EDCS, proved first by \cite{bernsteinstein2015,bersteinsteingeneral} and further strengethened in \cite{AssadiB19,BehnezhadEdmondsGallai}, is that for any $\beta = \Omega(1/\epsilon)$, any $\beta$-EDCS includes a $(1-O(\epsilon))2/3$ approximate maximum matching of its base graph $G$. 

It is not hard to see that {\em finding} any $\beta$-EDCS of the whole input graph $G$ requires $\Omega(n^2)$ queries.\footnote{In fact, finding the edges of any constant approximate matching requires $\Omega(n^2)$ time. Our focus throughout this paper is on estimating only the size of the maximum matching.} Therefore, instead, we first sub-sample some $p=1/n^{\delta}$ fraction of the edges (or pairs in the adjacency matrix model) of $G$ for some fixed $\delta > 0$ in $O(n^2 p) = n^{2-\Omega(1)}$ time and construct an EDCS $H$ over those edges. Building on an approach of \citet*{bernsteinstreaming2020} in the random-order streaming setting, this can be done in a way such that at most $\widetilde{O}(\mu(G) / p ) = n^{2-\Omega(1)}$ edges in the whole graph $G$ remain {\em underfull}, i.e., those that do not satisfy property $(P2)$. The union of the set $U$ of underfull edges and the EDCS $H$ can be shown to include an (almost) 2/3-approximation. The next challenge is that we do not have the set $U$. However, to estimate the maximum matching of $H \cup U$, it suffices to design an oracle that upon querying a vertex $v$, determines whether it belongs to an approximately optimal maximum matching of $H \cup U$ and run this oracle on a few randomly sampled vertices. To do so, we build on a local computation algorithm (LCA) of \citet*{levironitt} (which itself builds on a result of \citet*{YoshidaYISTOC09}) that takes $\poly_\epsilon(\Delta)$ time to return if a given vertex $v$ belongs to some $(1-\epsilon)$-approximate matching of its input graph, where here $\Delta$ is the maximum degree. Modifying $H \cup U$ by getting rid of its high-degree vertices, we show the algorithm of \cite{levironitt} can be used to $(1-\epsilon)$-approximate the size of the maximum matching of $H \cup U$ in $O(n^{1+\delta/\epsilon^2})$ time. Picking $\delta$ sufficiently small, we arrive at an algorithm that obtains a $(2/3-\epsilon)$-approximation in $n^{2-\Omega_\epsilon(1)}$ time.

\paragraph{Going Beyond $2/3$-Approximations:} It is known that the (almost) $2/3$-approximation analysis for EDCS is tight. That is, there are instances on which the EDCS does not include a better than $2/3$-approximation. However, a characterization of such tight instances of EDCS was recently given in the work of \citet*{behnezhad2022dynamic} that we use in beating $2/3$-approximation. Consider a $\beta$-EDCS $H$, for sufficiently large constant $\beta$, that does not include a strictly better than 2/3 approximation of $(2/3+\delta)$-approximation for some small $\delta > 0$. The characterization divides the vertices into two subsets $V_{mid}$ and $V_{low}$ depending on their degrees in $H$. Then shows that there must be an (almost) 2/3-approximate matching $M$ in the induced bipartite subgraph $G[V_{mid}, V_{low}]$ that is far from being maximal. Namely, it leaves an (almost) 1/3-approximate matching in $G[V \setminus M]$ that can be directly added to $M$. Interestingly, this characterization coincides with our lower bound construction! The set $V_{mid}$, here, corresponds to the set $A \cup B$ in the lower bound and the set $V_{low}$ corresponds to $S$. This essentially reduces the problem to showing that our lower bound construction can indeed be solved in $n^{2-\Omega(1)}$ time for all choices of the degree $d$. 

Here we show why our lower bound construction fundamentally cannot lead to a better than $n\sqrt{n}$ time lower bound. Indeed the ideas behind our algorithm for \cref{thm:beating-2/3} are very close. First, suppose that $d < \sqrt{n}$. In this case, we can random sample a vertex in $A$ and examine the $A$/$B$ value of each of its $d$ neighbors in total $O(nd) = O(n\sqrt{n})$ time to see if it has any $A$ neighbors, thereby solving the instance. So let us suppose that $d \geq \sqrt{n}$. Now in this case, we can first sample a vertex $v$ in $A$ and list all of its $A \cup B$ neighbors in $O(n)$ time. Since the vast majority (i.e., all but one) of the neighbors of $v$ go to $B$, we have essentially found $d$ vertices in $B$ in $O(n)$ time. Repeating this $\Theta(n/d)$ times, we find all of the $B$ vertices in total $O(n^2/d) = O(n\sqrt{n})$ time. We note that the final running time of our algorithm in \cref{thm:beating-2/3} is only $n^{2-\Omega_\epsilon(1)}$ and not $O(n\sqrt{n})$ since the bottleneck is in finding the EDCS and estimating the maximum matching size of $H \cup U$ as discussed above. For a more detailed high level overview of the algorithm, see \cref{sec:beating}.

\section{Preliminaries}\label{sec:prelim}

\paragraph{Notation:} Throughout this paper, we let $G = (V, E)$ to denote the input graph. We use $n$ and $m$ to denote the number of vertices and number of edges of the input graph $G$. Also, we use $\Delta$ to show the maximum degree, $\bar{d}$ to show the average degree, and $\mu(G)$ to show the size of the maximum matching of graph $G$.

Throughout this paper, for $\alpha \in (0, 1]$ and $\delta \in [0, n]$, we say $\widetilde{\mu}(G)$ is an approximation of $\mu(G)$ up to a multiplicative-additive error of ($\alpha, \delta$), if $\alpha \mu(G) - \delta \leq \widetilde{\mu}(G) \leq \mu(G)$. Also, we use $\widetilde{O}(\cdot)$ to hide $\poly \log n$ factors.

\paragraph{Problem Definition:} Given a graph $G$, we are interested in estimating the size of the maximum matching of $G$. The graph is given in one of the following representations:

\begin{itemize}
    \item \textit{Adjacency Matrix:} In this model, the algorithm can query a pair of vertices $(u,v)$. The answer is 1 if there is an edge between $u$ and $v$, and 0 otherwise.

    \item \textit{Adjacency List:} In this model, a list of neighbors of each vertex is stored in arbitrary order in a list. The algorithm can query the $i$'th element in the list of neighbors of a vertex. The answer is empty if the $i$'th element does not exist.
\end{itemize}

\paragraph{Random Greedy Maximal Matching:} Let $G = (V, E)$ be the input graph and $\pi$ be a random permutation over all possible permutation of edges $E$. A random greedy maximal matching can be obtained by iterating over $E$ with respect to the permutation $\pi$ and constructing a matching by adding edges one by one if they do not violate the matching constraint.

\paragraph{Probabilistic Tools:} In this paper, we use the following standard concentration inequalities.

\begin{proposition}[Chernoff Bound]\label{prop:chernoff}
    Let $X_1, X_2, \ldots, X_n$ be independent Bernoulli random variables, and let $X = \sum_{i=1}^{n} X_i$. For any $t > 0$, 
    $\Pr[|X - \E[X]| \geq t] \leq 2\exp\left(-\frac{t^2}{3\E[X]}\right).$
\end{proposition}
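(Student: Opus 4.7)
The plan is to invoke the standard \emph{Chernoff's method} via the exponential moment generating function. First, for any parameter $\lambda > 0$, Markov's inequality applied to $e^{\lambda X}$ gives
\[
\Pr[X \geq \E[X] + t] \;\leq\; e^{-\lambda(\E[X]+t)}\, \E[e^{\lambda X}].
\]
Using independence of the $X_i$'s I would factor $\E[e^{\lambda X}] = \prod_{i=1}^n \E[e^{\lambda X_i}]$, and with $p_i := \Pr[X_i = 1]$ each factor is $1 + p_i(e^\lambda - 1) \leq \exp(p_i(e^\lambda - 1))$ by $1 + x \leq e^x$. This telescopes to $\E[e^{\lambda X}] \leq \exp(\mu(e^\lambda - 1))$ with $\mu := \E[X]$, reducing the task to a one-parameter optimization.

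Second, I would write $t = \delta \mu$ and optimize the resulting bound over $\lambda$. The minimizer is $\lambda^\star = \ln(1+\delta)$, producing the classical multiplicative form
\[
\Pr[X \geq (1+\delta)\mu] \;\leq\; \left(\frac{e^\delta}{(1+\delta)^{1+\delta}}\right)^{\mu}.
\]
A short calculus check establishes $\delta - (1+\delta)\ln(1+\delta) \leq -\delta^2/3$ for $\delta \in (0, 1]$, which converts the right-hand side into the desired Gaussian-like form $\exp(-\delta^2 \mu/3) = \exp(-t^2/(3\mu))$. A symmetric argument with $\lambda < 0$ bounds the lower tail $\Pr[X \leq (1-\delta)\mu]$ by $\exp(-\delta^2 \mu/2)$, which is in fact stronger; a union bound over the two tails absorbs the factor of $2$ in the final statement.

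The only genuine technical step is the numerical inequality $\delta - (1+\delta)\ln(1+\delta) \leq -\delta^2/3$, which follows from a second-order Taylor expansion of $(1+\delta)\ln(1+\delta)$ around $0$ together with monotonicity of the residual on $[0,1]$; this is the main, though still routine, obstacle. A minor subtlety worth flagging is that the stated form is cleanest in the regime $t \leq \E[X]$: for larger $t$ one would normally invoke the stronger multiplicative Chernoff bound $\exp(-t/3)$ in place of $\exp(-t^2/(3\mu))$, but the regime $t \leq \E[X]$ is the only one used when \cref{prop:chernoff} is applied later in the paper, so no additional argument is needed.
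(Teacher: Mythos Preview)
The paper does not prove \cref{prop:chernoff} at all; it is listed under ``Probabilistic Tools'' in the preliminaries as a standard black-box inequality and is used without justification. Your proposal is the standard textbook derivation via the exponential moment method and is correct, including the observation that the clean $\exp(-t^2/(3\mu))$ form is really only valid in the regime $t \le \E[X]$, which is indeed how the bound is invoked throughout the paper.
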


\begin{proposition}[Hoeffding’s Inequality]\label{prop:hoeffding}
    Let $X_1, X_2, \ldots, X_n$ be independent random variables such that $a \leq X_i \leq b$. Let $\overline{X} = (\sum_{i=1}^n X_i)/n$. For any $t > 0$, 
    $
    \Pr[|\overline{X} - \E[\overline{X}]| \geq t] \leq 2\exp\left(-\frac{2nt}{(b-a)^2}\right).
    $
\end{proposition}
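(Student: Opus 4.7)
The plan is to prove this standard concentration bound via the exponential moment (Chernoff--Cram\'er) method combined with Hoeffding's lemma. The starting point, for any fixed $s > 0$, is Markov's inequality applied to the non-negative random variable $\exp\!\bigl(s(\overline{X} - \E[\overline{X}])\bigr)$, which gives
\[
\Pr\!\left[\overline{X} - \E[\overline{X}] \geq t\right] \;\leq\; e^{-st}\,\E\!\left[\exp\!\bigl(s(\overline{X} - \E[\overline{X}])\bigr)\right].
\]
Writing $\overline{X} - \E[\overline{X}] = \tfrac{1}{n}\sum_{i=1}^n (X_i - \E[X_i])$ and using independence of the $X_i$, the expectation on the right factorizes as $\prod_{i=1}^n \E\!\bigl[\exp\!\bigl(\tfrac{s}{n}(X_i - \E[X_i])\bigr)\bigr]$.

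The next step is to bound each factor via Hoeffding's lemma: if $Y$ is a random variable with $\E[Y] = 0$ and $Y \in [a',b']$ almost surely, then $\E[e^{sY}] \leq \exp\!\bigl(s^2(b'-a')^2/8\bigr)$. Applied to the centered variables $Y_i := X_i - \E[X_i]$, each of which takes values in an interval of length at most $b-a$, this yields
\[
\prod_{i=1}^n \E\!\left[\exp\!\Bigl(\tfrac{s}{n} Y_i\Bigr)\right] \;\leq\; \exp\!\left(\frac{s^2(b-a)^2}{8n}\right),
\]
so overall $\Pr[\overline{X} - \E[\overline{X}] \geq t] \leq \exp\!\bigl(-st + s^2(b-a)^2/(8n)\bigr)$.

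The final step is to optimize the resulting quadratic in $s$. Setting $s = 4nt/(b-a)^2$ minimizes the exponent and produces the one-sided bound $\exp\!\bigl(-2nt^2/(b-a)^2\bigr)$. Repeating the same argument applied to the variables $-X_1,\ldots,-X_n$ gives the matching lower-tail bound, and a union bound over the two one-sided events contributes the factor of $2$ in the stated estimate. I note in passing that the exponent displayed in the proposition reads $-2nt/(b-a)^2$, which appears to be a typographical slip for the standard $-2nt^2/(b-a)^2$; the scheme above delivers the latter. The only genuine technical ingredient is Hoeffding's lemma itself, whose proof rests on the convexity of $y \mapsto e^{sy}$ on a bounded interval together with a Taylor-expansion estimate of the log moment generating function; since it is a textbook fact, the overall argument is a routine application of the Chernoff--Cram\'er method and presents no serious obstacle.
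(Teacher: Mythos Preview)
Your proof is correct and is the standard textbook derivation of Hoeffding's inequality via the Chernoff--Cram\'er method and Hoeffding's lemma. The paper does not actually supply a proof of this proposition; it is listed in the preliminaries as a standard concentration inequality quoted without proof, so there is nothing to compare against. Your observation that the exponent $-2nt/(b-a)^2$ in the displayed statement is a typographical slip for the usual $-2nt^2/(b-a)^2$ is also correct.
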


\paragraph{Graph Theory:} We use the following classic theorem by K\"{o}nig’s.

\begin{proposition}[K\"{o}nig’s Theorem]
    In any bipartite graph, the size of maximum matching is equal to the size of the minimum vertex cover.
\end{proposition}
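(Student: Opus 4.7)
The plan is to prove König's theorem by establishing both inequalities between the maximum matching number $\mu(G)$ and the minimum vertex cover number $\tau(G)$ of a bipartite graph $G = (L \cup R, E)$.

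The easy direction, $\mu(G) \leq \tau(G)$, holds for all graphs and not just bipartite ones: given any vertex cover $C$ and any matching $M$, each edge of $M$ must be incident to at least one vertex in $C$, and since no two edges of $M$ share an endpoint, the map from edges of $M$ to their covering vertices in $C$ is injective. Hence $|M| \le |C|$, and taking $M$ maximum and $C$ minimum yields $\mu(G) \le \tau(G)$.

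For the nontrivial direction $\tau(G) \leq \mu(G)$, I would use the standard alternating-path construction. Fix a maximum matching $M$, let $U \subseteq L$ denote the $M$-unmatched vertices on the left side, and let $Z$ be the set of all vertices reachable from $U$ by $M$-alternating paths (paths that alternate between edges in $E \setminus M$ and edges in $M$, starting with an $E \setminus M$ edge from $U$). Define the candidate cover
\[
    C \;=\; (L \setminus Z) \,\cup\, (R \cap Z).
\]
The two key claims are: (i) $C$ is a vertex cover, and (ii) $|C| = |M|$. For (i), suppose for contradiction there is an edge $(u,v) \in E$ with $u \in L$, $v \in R$ that is uncovered; then $u \in Z$ and $v \notin Z$. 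If $(u,v) \in M$, then since $u \in Z$ it is reached by an alternating path, whose last edge must be the matching edge $(v,u)$, forcing $v \in Z$, a contradiction. If $(u,v) \notin M$, appending $(u,v)$ to the alternating path that reaches $u$ produces an alternating path reaching $v$, again contradicting $v \notin Z$.

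For (ii), I would show two facts: every $r \in R \cap Z$ is $M$-matched, and its $M$-partner lies in $L \cap Z$ (so vertices of $R \cap Z$ biject with matching edges whose $L$-endpoint is in $Z$); and every $\ell \in L \setminus Z$ is $M$-matched (otherwise $\ell \in U \subseteq Z$, contradiction). The first fact holds because if some $r \in R \cap Z$ were unmatched, then the alternating path reaching $r$ would be $M$-augmenting, contradicting maximality of $M$; moreover the penultimate vertex on such a path belongs to $L \cap Z$ and is $r$'s $M$-partner. Therefore each vertex of $C$ is matched by $M$, and distinct vertices of $C$ are matched to distinct vertices by disjoint matching edges (those in $R \cap Z$ are matched into $L \cap Z$, while those in $L \setminus Z$ cannot be matched to any vertex in $R \cap Z$, else that $R$-vertex's $M$-partner would lie in $L \setminus Z$ instead of $L \cap Z$). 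This yields $|C| \le |M|$, and thus $\tau(G) \le |C| \le |M| = \mu(G)$, completing the proof. The main subtlety is verifying that the two parts of $C$ are matched to disjoint sets via edges of $M$, which relies on the augmenting-path argument ruling out matching edges crossing from $L \setminus Z$ to $R \cap Z$.
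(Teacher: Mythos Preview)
Your proof is correct and is the standard textbook argument for K\"onig's theorem via alternating paths. However, the paper does not actually prove this proposition: it is listed in the Preliminaries section as a classical fact and is simply invoked later (to bound $\mu(G_{\no})$ via an explicit vertex cover). So there is no ``paper's own proof'' to compare against; you have supplied a full proof where the paper only cites the result.
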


\section{The Lower Bound}\label{sec:lb}

In this section, we prove the lower bound of \cref{thm:lb}.

To prove \cref{thm:lb}, we first give an input distribution in \cref{sec:input-dist}. We then prove that any {\em deterministic} algorithm which with probability at least .51 (taken over the randomization of the input distribution) returns a $(2/3+\alpha)$-approximation of the size of maximum matching of the input must make at least \thelb{} queries to the graph. By the `easy' direction of Yao's minimax theorem \cite{Yao77}, this also implies that any randomized algorithm with success probability .51 over worst-case inputs must make at least $\thelb$ queries to obtain a $(2/3+\alpha)$-approximation.

\subsection{The Input Distribution}\label{sec:input-dist}

We start by formalizing the input distribution. Let $\epsilon := \alpha/100$, where $\alpha$ is as in \cref{thm:lb}, let $N$ be a parameter which controls the number of vertices and let $d = N^{1/5}$. In our construction, we assume $N$, $\epsilon N$, $d$, and $\epsilon d$ are all integers; note that this holds for infinitely many choices of $N$. For $n := 6N$, we construct an $n$-vertex bipartite graph $G=(V, U, E)$. We first categorize the vertices in $G$, then specify its edges.

\paragraph{The vertex set:}  The vertex set $V$ is composed of four distinct subsets $A_V, B_V, S_V, T_V$ and similarly $U$ is composed of $A_U, B_U, S_U, T_U$. Throughout, we may write $A, B, S, T$ to respectively refer to sets $A_V \cup A_U$, $B_V \cup B_U$, $S_V \cup S_U$, and $T_V \cup T_U$. In our construction, we will have
$$
    |A_V| = |A_U| = (1-\epsilon) N, \qquad |B_V| = |B_U| = |S_V| = |S_U| = N, \qquad |T_U| = |T_V| = \epsilon N.
$$
We randomly permute all the vertices in $A_V$ and use $v_i(A_V)$ to denote the $i$-th vertex of $A_V$ in this permutation. We do the same for $A_U, B_V, B_U, S_V, S_U$.

\paragraph{The edge set:} For the edge-set $E$, we give two distributions: in distribution \yesdist{} the maximum matching of $G$ is sufficiently large, and in distribution \nodist{} the maximum matching of $G$ is sufficiently small. The final input distribution $\dist := (\frac{1}{2} \yesdist + \frac{1}{2} \nodist)$ draws its input from \yesdist{} with probability 1/2 and from \nodist{} with probability 1/2. The following edges are common in both disributions \yesdist{} and \nodist{}:
\begin{itemize}
    \item All vertices in $T_U$ (resp. $T_V$) are adjacent to all of $A_V, B_V, S_V$ (resp. $A_U, B_U, S_U$).
    \item For any $i \in [N]$, we add edges $(v_i(B_U), v_i(S_V))$ and $(v_i(B_V), v_i(S_U))$ to the graph. In words, there are perfect matchings between $B_U$ and $S_V$ and between $B_V$ and $S_U$.
    \item Let $\mathcal{R}(B_V, B_U)$ be the set of all $(\epsilon d - 1)$-regular graphs $H$ between $B_V$ and $B_U$ such that for all $i \in [N]$, $(v_i(B_V), v_i(B_U)) \not\in H$. We draw one regular graph $R(B_V, B_U)$ from $\mathcal{D}(B_V, B_U)$ uniformly at random and add its edges to $G$.
    \item Let $\mathcal{R}(A_V, B_U)$ be the set of all graphs $H$ between $A_V$ and $B_U$ such that $\deg_H(v) = d$ for all $v \in A_V$, $\deg_H(u) = (1-\epsilon)d$ for all $u \in B_U$, and for all $i \in [(1-\epsilon)N]$, $(v_i(A_V), v_i(B_U)) \not\in H$. We draw one regular graph $R(A_V, B_U)$ from $\mathcal{R}(A_V, B_U)$ uniformly at random and add its edges to $G$.
    \item Let $\mathcal{R}(B_V, A_U)$ be the set of all graphs $H$ between $B_V$ and $A_U$ such that $\deg_H(v) = d$ for all $v \in A_U$, $\deg_H(u) = (1-\epsilon)d$ for all $u \in B_V$, and for all $i \in [(1-\epsilon)N]$, $(v_i(B_V), v_i(A_U)) \not\in H$. We draw one regular graph $R(B_V, A_U)$ from $\mathcal{R}(B_V, A_U)$ uniformly at random and add its edges to $G$.
    \item For all $i \in \{(1-\epsilon)N+1, \ldots, N\}$ we add edge $(v_i(B_V), v_i(B_U))$ to $G$.
\end{itemize}
The following edges are specific to \yesdist{} and \nodist{} respectively:
\begin{itemize}
    \item In \yesdist{}, we additionally add edges $(v_i(A_V), v_i(A_U))$ and $(v_i(B_V), v_i(B_U))$ for all $i \in [(1-\epsilon)N]$.
    \item In \nodist{}, we additionally add edges $(v_i(A_V), v_i(B_U))$ and $(v_i(B_V), v_i(A_U))$ for all $i \in [(1-\epsilon)N]$.
\end{itemize}

This concludes the construction of graph $G$. We also emphasize that the adjacency list of each vertex includes its neighbors in a random order chosen uniformly and independently. 

Finally, we note that we give away the bipartition $V, U$ of the graph for free. Additionally, we also give away which of the sets $S, T, A \cup B$ any vertex belongs to. What is crucially hidden from the algorithm, however, is whether a vertex $v \in A \cup B$ belongs to $A$ or $B$.

\begin{figure}
    \centering
    \includegraphics[scale=0.5]{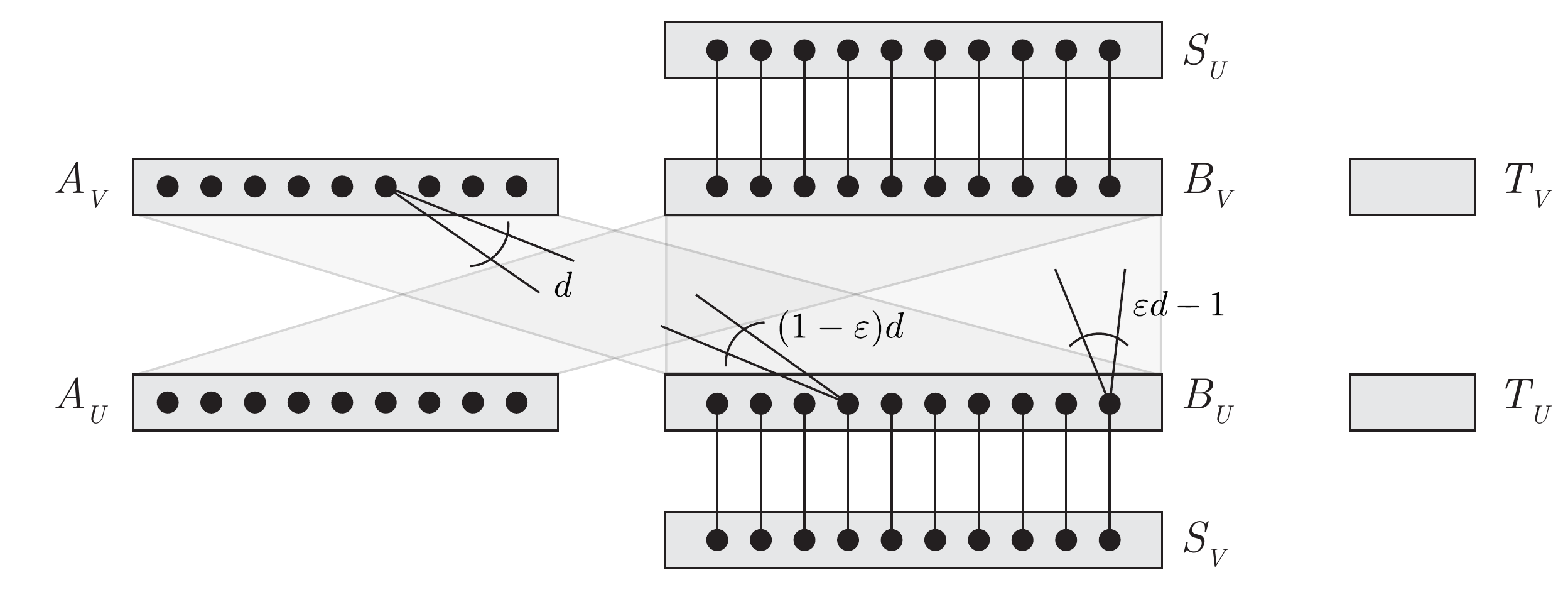}
    \caption{The common edges in both \yesdist{} and \nodist{} distributions. For simplicity, we have not illustrated the edges of $T_V$ and $T_U$ (which are adjacent to all vertices on the opposite side). See \cref{fig:dist_yesno} for edges specific to the two distributions.}
    \label{fig:dist_common}
\end{figure}

\begin{figure}
    \centering
    \includegraphics[scale=0.4]{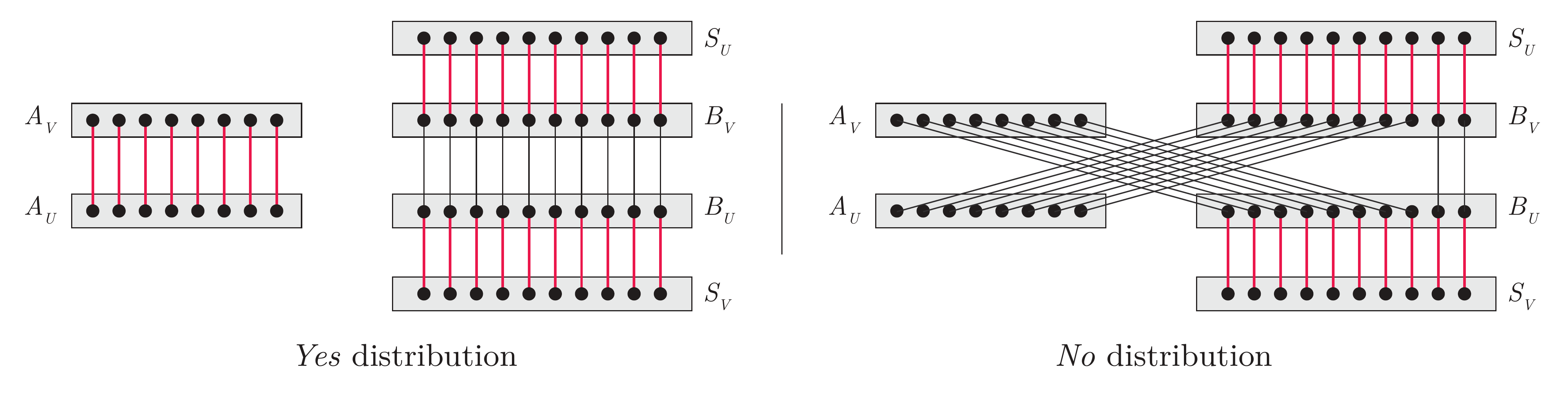}
    \caption{In addition to the common edges in distributions \yesdist{} and \nodist{} that were illustrated in \cref{fig:dist_common}, we have a special perfect matching between vertices $A_V \cup B_V$ and $A_U \cup B_U$. In distribution \yesdist{}, this perfect matching matches all of $A_V$ to $A_U$. But in distribution \nodist{}, none of the edges of this matching go from $A_V$ to $A_U$. This ensures that the maximum matching of $G$ in the \yes{} case is (almost) $1.5$ times that of $G$ in the \no{} case.}
    \label{fig:dist_yesno}
\end{figure}

\subsection{Basic Properties of the Input Distribution}

The following bounds on vertex degrees follows immediately from the construction.

\begin{observation}\label{obs:degrees}
    For any graph $G$ drawn from $\yesdist$ or $\nodist$ with probability 1 it holds that:
    \begin{enumerate}
        \item $\deg(v) = \epsilon N + d + 1$ for all $v \in A_V, A_U, B_V, B_U$.
        \item $\deg(v) = \epsilon N + 1$ for all $v \in S_V, S_U$.
        \item $\deg(v) = (3-\epsilon)N$ for all $v \in T_V, T_U$.
    \end{enumerate}
\end{observation}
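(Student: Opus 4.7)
The statement is a deterministic identity about vertex degrees, so my plan is a direct case analysis over the eight vertex classes $A_V, A_U, B_V, B_U, S_V, S_U, T_V, T_U$, summing the degree contributions from each edge block listed in \cref{sec:input-dist}. No probabilistic argument is required, since the construction fixes every vertex's degree with probability $1$ regardless of which random regular bipartite graph is drawn from $\mathcal{R}(\cdot, \cdot)$ and regardless of whether the input is sampled from $\yesdist$ or $\nodist$.

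The $T$-, $S$-, and $A$-cases are essentially immediate. A vertex in $T_U$ is adjacent by construction to every vertex of $A_V \cup B_V \cup S_V$ and to nothing else, giving $|A_V| + |B_V| + |S_V| = (1-\epsilon)N + N + N = (3-\epsilon)N$; the case $T_V$ is symmetric. A vertex in $S_V$ has one edge from the perfect matching to $B_U$ and $|T_U| = \epsilon N$ edges to $T_U$, giving $\epsilon N + 1$; $S_U$ is symmetric. A vertex $v \in A_V$ contributes $d$ edges from $R(A_V, B_U)$, $\epsilon N$ edges to $T_U$, and exactly one further edge from the distribution-specific matching (to $v_i(A_U)$ under $\yesdist$ or to $v_i(B_U)$ under $\nodist$), summing to $d + \epsilon N + 1$; and $A_U$ is symmetric via $R(B_V, A_U)$ and its $T_V$ edges.

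The only mildly delicate case is that of the $B$-vertices, since their incident edges come from four different blocks plus an ``extra'' matching edge whose role splits between the two ranges $i \le (1-\epsilon)N$ and $i > (1-\epsilon)N$. For $v_i(B_V)$ I would sum $(\epsilon d - 1)$ edges from $R(B_V, B_U)$, $(1-\epsilon)d$ edges from $R(B_V, A_U)$, one matching edge to $S_U$, $\epsilon N$ edges to $T_U$, and exactly one additional edge: for $i > (1-\epsilon)N$ this is the common edge $(v_i(B_V), v_i(B_U))$, while for $i \le (1-\epsilon)N$ it is the distribution-specific edge (to $v_i(B_U)$ under $\yesdist$ or to $v_i(A_U)$ under $\nodist$). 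After cancellation of the $\pm \epsilon d$ terms the total collapses to $d + \epsilon N + 1$, and the count for $B_U$ is handled by the same accounting applied to $R(B_V, B_U), R(A_V, B_U)$ and the matching to $S_V$.

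I do not foresee any genuine obstacle: the arithmetic is elementary. The only real pitfall is double-counting, or overlooking the ``extra'' edge $(v_i(B_V), v_i(B_U))$ added for high indices $i > (1-\epsilon)N$, which is why I would verify the two index ranges separately when tallying the $B$-vertex degrees and then observe that both ranges contribute exactly one unit to the $B$-vertex degree, yielding the uniform value $d + \epsilon N + 1$ claimed in the observation.
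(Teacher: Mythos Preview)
Your proposal is correct and matches the paper's approach: the paper simply states that the observation ``follows immediately from the construction'' without further proof, and your case-by-case tally of edge contributions is exactly the direct verification one would carry out to substantiate that claim. Your careful split of the $B$-vertex case into the two index ranges $i \le (1-\epsilon)N$ and $i > (1-\epsilon)N$ is the right way to handle the one slightly non-obvious point in the accounting.
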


Since for any vertex $v$, we know which of the sets $S_V, S_U, T_V, T_U, A_V \cup B_V, A_U \cup B_U$ it belongs to, \cref{obs:degrees} implies we know the degree of every vertex in the graph before making any queries. Thus, there is no point for the algorithm to make any degree queries.

\begin{lemma}
    Let $G_\yes \sim \yesdist$ and $G_\no \sim \nodist$. Then it holds with probability 1 that
    $$
    \mu(G_\yes) \geq (3-\epsilon) N \qquad \text{and} \qquad \mu(G_\no) \leq (2+2\epsilon) N.$$
    This, in particular, implies that any algorithm satisfying the condition of \cref{thm:lb} must be able to distinguish whether a graph $G$ drawn from distribution $(\frac{1}{2}\yesdist + \frac{1}{2} \nodist{})$ belongs to the support of \yesdist{} or \nodist{} with probability at least $.51$.
\end{lemma}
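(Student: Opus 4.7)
The plan is to prove the two bounds on $\mu$ by exhibiting explicit witnesses and then deduce the distinguishability conclusion from a multiplicative separation of the two ranges.

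For the YES-case lower bound, I will construct a matching of size $(3-\epsilon)N$ as a disjoint union of three perfect matchings that are guaranteed to live in $G_\yes$: the YES-specific matching $\{(v_i(A_V), v_i(A_U)) : i \in [(1-\epsilon)N]\}$ of size $(1-\epsilon)N$, the common matching $\{(v_i(B_V), v_i(S_U)) : i \in [N]\}$ of size $N$, and the common matching $\{(v_i(B_U), v_i(S_V)) : i \in [N]\}$ of size $N$. By construction these three edge sets lie on pairwise disjoint vertex classes ($A_V \cup A_U$, $B_V \cup S_U$, and $B_U \cup S_V$ respectively), so their union is a valid matching of total size $(1-\epsilon)N + 2N = (3-\epsilon)N$.

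For the NO-case upper bound, I will apply K\"onig's theorem by exhibiting an explicit vertex cover of size $(2+2\epsilon)N$, namely $C := T_V \cup T_U \cup B_V \cup B_U$. Verifying that $C$ is a cover reduces to an enumeration of edge types in $G_\no$: every edge incident to $T_V \cup T_U$ is covered trivially, and the remaining edge types are all among $A_V$-$B_U$ (from $R(A_V,B_U)$ and the NO-specific matching), $B_V$-$A_U$ (from $R(B_V,A_U)$ and the NO-specific matching), $B_V$-$B_U$ (from $R(B_V,B_U)$ and the $i > (1-\epsilon)N$ matching edges), $B_V$-$S_U$, and $B_U$-$S_V$, each of which contains $B_V$ or $B_U$ as an endpoint. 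The essential structural fact that allows such a small cover is that $G_\no$ contains no edge inside $A_V \cup A_U$, so neither $A_V$ nor $A_U$ has to be charged.

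Finally, for the distinguishability conclusion, I will note that with $\epsilon = \alpha/100$ the two bounds separate multiplicatively: a short computation shows $(2/3+\alpha)(3-\epsilon) - (2+2\epsilon) = 3\alpha - 8\epsilon/3 - \alpha\epsilon > 0$. Therefore any estimator $\widetilde{\mu}$ satisfying $(2/3+\alpha)\mu(G) \le \widetilde{\mu}(G) \le \mu(G)$ returns a value above $(2/3+\alpha)(3-\epsilon)N$ on $G_\yes$ and below $(2+2\epsilon)N$ on $G_\no$, so thresholding $\widetilde{\mu}$ at any intermediate value decides between $\yesdist$ and $\nodist$ with success probability no worse than that of the estimator itself, in particular at least $.51$. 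I expect the only nontrivial step to be the edge-type enumeration for the NO-case vertex cover; everything else is routine bookkeeping.
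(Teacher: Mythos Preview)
Your proposal is correct and follows essentially the same approach as the paper: the same explicit matching of size $(3-\epsilon)N$ for the \yes{} case, the same vertex cover $T_V \cup T_U \cup B_V \cup B_U$ combined with K\"onig's theorem for the \no{} case, and the same multiplicative separation $(2+2\epsilon)/(3-\epsilon) < 2/3+\alpha$ (you phrase it as a difference rather than a ratio) for the distinguishability conclusion. Your edge-type enumeration for the cover is a bit more explicit than the paper's one-line justification, but the argument is identical.
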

\begin{proof}
    For $G_{\yes}$, we take edges $(v_i(A_V), v_i(A_U))$ for $1\leq i \leq (1-\epsilon)N$, $(v_i(B_V), v_i(S_V))$ for $1\leq i \leq N$, and $(v_i(B_U), v_i(S_U))$ for $1\leq i \leq N$. Since none of the edges have the same endpoint, the union creates a matching with size $(3-\epsilon)N$ which implies $\mu(G_{\yes}) \geq (3 - \epsilon) N$.

    For $G_{\no}$, we take $B_U \cup B_V \cup T_U \cup T_V$ as a vertex cover. Note that there is no edge in the induced subgraph between vertices of $A_U \cup A_V \cup S_U \cup S_V$. The proof follows by K\"{o}nig’s Theorem since there exists a vertex cover with $(2 + 2\epsilon)N$ vertices.

    Furthermore, if $\alpha > 0$ be a constant, since $(2+2\epsilon) / (3-\epsilon) \leq 2/3 + \alpha$ (if we choose $\epsilon$ sufficiently smaller than $\alpha$), any algorithm satisfying the condition of \cref{thm:lb}, must be able to distinguish whether a graph $G$ drawn from distribution $(\frac{1}{2}\yesdist + \frac{1}{2} \nodist{})$ belongs to the support of \yesdist{} or \nodist{} with probability at least $.51$.
\end{proof}

\begin{remark}
    Note that our construction can be slightly modified to have a perfect matching in \yesdist{}. We can add a perfect matching between vertices of $T_U$ and $T_V$ to have a matching of size $3N$. Therefore, our lower bound also holds for the case that the \yesdist{} has a perfect matching.
\end{remark}

\subsection{Queried Edges Form a Rooted Forest}

Here and throughout the rest of the paper, we use $G'$ to denote the induced subgraph of $G$ excluding all the dummy vertices in $T_U \cup T_V$. Our main result of this section is that any algorithm that makes at most $\K{}$ queries to graph $G$, only discovers a rooted forest in $G'$:

\begin{lemma}\label{lem:forest}
    Let $\mc{A}$ be any algorithm making at most $K = \K{}$ queries to graph $G \sim \dist$. Let $F_0$ be the empty graph on the vertex set of $G'$, and for $t \geq 1$ let $F_t$ be the subgraph of $G'$ that $\mc{A}$ discovers after $t$ queries. The following property holds throughout the execution of $\mc{A}$ with probability $1-o(1)$: For any $t$, if the $t$'th query is to the adjacency list of a vertex $v$ and returns edge $(v, u)$ then either $(v, u) \not\in G'$ or if $(v, u) \in G'$ then $u$ is singleton in $F_{t-1}$. Equivalently, this implies that each $F_t$ can be thought of as a rooted forest where $F_t \setminus F_{t-1}$ may only include one edge $(v, u)$ and if $\mc{A}$ discovered $(v, u)$ by querying $v$, then $u$ becomes a leaf of $v$ in $F_t$.
\end{lemma}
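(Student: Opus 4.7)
The plan is to use the principle of deferred decisions to simulate $\mc{A}$ on $G \sim \dist$, exposing the random permutations naming the vertex sets, the random bipartite regular graphs $R(A_V, B_U), R(B_V, A_U), R(B_V, B_U)$ via a configuration model, and each random adjacency-list ordering only as the algorithm's queries force reveals. Let $Z_t$ denote the indicator that step $t$ reveals a new $G'$-edge (one in $G' \setminus F_{t-1}$), set $X_t := \sum_{s \le t} Z_s$, and let $W_t$ denote the set of non-isolated vertices of $F_t$ (so $|W_{t-1}| \le 2 X_{t-1}$). A \emph{bad event at step $t$} is the event that the $t$'th query returns an edge $(v, u) \in G' \setminus F_{t-1}$ with $u \in W_{t-1}$. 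I will bound $\E[\sum_t Y_t] = o(1)$, where $Y_t$ is the indicator of a bad event at step $t$, and then conclude via Markov's inequality.

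First, I would control the total number of $G'$-edges revealed. By \cref{obs:degrees}, each $v \in A \cup B$ has only $d+1$ of its $\epsilon N + d + 1$ neighbors inside $G'$, and each $v \in S$ has only $1$ of $\epsilon N + 1$; since the adjacency list of $v$ is in uniformly random order, each of $v$'s neighbors is equally likely to appear in any position, so by linearity of expectation $\E[X_K] \le K(d+1)/(\epsilon N + 1) = O(Kd/N) = o(N^{2/5})$ using $K = o(N^{6/5})$ and $d = N^{1/5}$. Freedman's martingale inequality applied to the Doob decomposition of $X_K$ further gives $X_K = o(\sqrt N)$ (hence $|W_K| \le 2 X_K = o(\sqrt N)$) with probability $1 - o(1)$.

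Second, I would establish a per-reveal collision bound: conditional on the history and on $Z_t = 1$, the newly revealed ``other'' endpoint $u$ satisfies $\Pr[u \in W_{t-1}] = O(|W_{t-1}|/N)$. For edges of $R(A_V, B_U), R(B_V, A_U), R(B_V, B_U)$, the configuration-model exposure matches one of $v$'s free half-edges to a uniformly random free half-edge on the opposite side; since at most $O(K) = o(dN)$ half-edges have been paired, each specific opposite-side vertex carries at most $O(1/N)$ of the mass, and summing over $|W_{t-1}|$ candidates yields the claim. For matching-type edges (the special matching in each of \yesdist, \nodist, the $B$-$S$ matchings, and the tail $B_V$-$B_U$ matching on indices $> (1-\epsilon)N$), the random permutations of vertex labels render each of them a uniformly random perfect matching from the algorithm's perspective, so $v$'s partner is uniform over not-yet-matched partners and the same $O(|W_{t-1}|/N)$ bound applies; when the reveal is from the degree-one side of a matching (e.g., querying an $S_V$-vertex to reveal its $B_U$-partner), the other endpoint has $G'$-degree one and so cannot lie in $W_{t-1}$ unless the edge already belonged to $F_{t-1}$, making the collision probability zero.

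Combining the two pieces,
\[
\E\Bigl[\sum_{t=1}^K Y_t\Bigr] \le \frac{O(1)}{N}\,\E\Bigl[\sum_{t : Z_t = 1} |W_{t-1}|\Bigr] \le \frac{O(1)}{N}\,\E[X_K \cdot |W_K|] \le \frac{O(1)}{N}\,\E[X_K^2],
\]
which on the high-probability event of the first step is $o(1)$; the complementary event contributes negligibly because the Freedman tail is far stronger than $1/K$. Markov's inequality then gives $\Pr[\exists t : Y_t = 1] = o(1)$. The main obstacle I anticipate is the joint exposure of the three random structures sharing $B_V$ and $B_U$---namely $R(B_V, B_U)$, the $B_V$-$B_U$ pieces of the special matching in each distribution, and the tail $B_V$-$B_U$ matching---which must be coupled within a single deferred-decision procedure so that conditioning on the partial reveals of any of them does not distort the distribution of the next revealed neighbor by more than a constant factor; a secondary issue is absorbing the configuration model's small discrepancy from the uniform distribution on simple regular bipartite graphs into the $o(1)$ bound.
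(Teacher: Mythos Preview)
Your high-level strategy matches the paper's: first bound the number of $G'$-edges discovered (the paper's \cref{clm:few-edges} gets $o(n^{2/5})$ via essentially the same Chernoff-type argument you sketch), then show each newly revealed $G'$-endpoint is unlikely to land on an existing non-singleton, then union bound. The paper packages the second step slightly differently---rather than a per-reveal collision probability, it proves (\cref{clm:no-edge-between-edge}) that for any fixed pair $(u,v)$ not yet discovered, $\Pr[(u,v)\in G'\mid\text{history}] = O(d/N) = O(1/n^{4/5})$---but the arithmetic in the final union bound is identical to yours.

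The substantive difference is in how you propose to handle the random regular bipartite pieces. You invoke the configuration model with deferred half-edge pairings; the paper instead works directly with the uniform distribution over simple regular bipartite graphs via a switching argument (for a candidate edge $(u,v)$, it compares the number of graphs containing $(u,v)$ to those not containing it by the standard swap $(u,v),(w,z)\mapsto(u,w),(v,z)$). This matters more than your phrasing suggests: with $d=N^{1/5}$, the bipartite configuration model has $\Theta(d^2)=\Theta(N^{2/5})$ parallel edges in expectation, so the probability of simplicity is $e^{-\Theta(N^{2/5})}$, and the ``small discrepancy'' you list as a secondary issue would blow up your $o(1)$ bound by an exponential factor if handled by naive conditioning. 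The switching method is exactly the tool that recovers your $O(1/N)$-per-target bound without passing through the configuration model; so your plan is salvageable, but the obstacle you flag as secondary is really the main technical content of the paper's \cref{clm:no-edge-between-edge}, and you should expect to reproduce that argument rather than cite a contiguity fact.

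One small slip: in your matching case you say that querying an $S_V$-vertex reveals a $B_U$-partner of $G'$-degree one, but it is the $S$-vertex that has degree one; its $B_U$-partner has degree $d+1$. The degree-one argument you intend applies to the reverse direction (querying $B$ and revealing its $S$-partner); in the direction you stated, the uniform-partner bound $O(|W_{t-1}|/N)$ already handles it.
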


\begin{remark}
Our proof of \cref{lem:forest} crucially relies on the fact that the adjacency list of each vertex is randomly permuted in the input distribution. However,  we emphasize here that our proof holds even if the internal permutation of the $G'$-neighbors of a vertex $v$ in the adjacency list of $v$ is adversarial. This implies, in particular, that if we condition on the high probability event of \cref{lem:forest}, the internal permutation of the $G'$-neighbors will still be uniform.
\end{remark}

To prove \cref{lem:forest}, we first bound the number of edges of $G'$ that we see with $K = \K{}$ queries.

\begin{claim}\label{clm:few-edges}
    Any algorithm $\mc{A}$ that makes $K = \K{}$ queries to $G$, discovers at most $o(n^{2/5})$ edges of $G'$ with probability $1-1/\poly(n)$.
\end{claim}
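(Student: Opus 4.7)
My approach is to exploit the fact that each non-$T$ vertex has only $O(d) = O(N^{1/5})$ neighbors in $G'$ but roughly $\epsilon N$ neighbors in $T$ (\cref{obs:degrees}), so a typical adjacency-list query to such a vertex returns a $G'$-edge only with probability $\approx d/(\epsilon N) = O(N^{-4/5})$. Queries to $T$-vertices always return edges incident to $T$, which are outside $G'$ and contribute nothing to $X$, the count of discovered $G'$-edges; so I may restrict attention to queries on non-$T$ vertices. Combined with $K = o(n^{6/5})$, linearity of expectation immediately gives $\E[X] \leq K \cdot O(N^{-4/5}) = o(n^{2/5})$. The remaining challenge is obtaining concentration with failure probability $1/\poly(n)$.

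\textbf{Handling adaptivity.} The conditional probability that a query to $v$ returns a $G'$-neighbor, given the full history of responses so far, equals (remaining $G'$-neighbors of $v$)/(remaining neighbors of $v$). Early on this is at most $2(d+1)/\deg(v) = O(N^{-4/5})$, but it can eventually grow to $1$ once a vertex has been queried enough times to exhaust most of its $T$-neighbors. To neutralize this tail behavior, I would split queries into \emph{early} (when fewer than $\deg(v)/2$ prior queries have targeted $v$) and \emph{late}. A vertex can contribute late queries only after at least $\deg(v)/2 \geq \epsilon N / 2$ queries have been spent on it, so the total budget $K = o(n^{6/5})$ forces the number of distinct vertices ever becoming late to be at most $2K/(\epsilon N) = o(N^{1/5})$. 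Each such vertex yields at most $d+1 = O(N^{1/5})$ new $G'$-edges, so the late-query contribution $X^{\mathrm{late}}$ is bounded \emph{deterministically} by $o(N^{1/5}) \cdot O(N^{1/5}) = o(n^{2/5})$.

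\textbf{Concentration and conclusion.} For the early queries, the indicator $Y_t$ that query $t$ reveals a $G'$-edge satisfies $\E[Y_t \mid \mathcal{F}_{t-1}] \leq p := O(N^{-4/5}/\epsilon)$, so by a standard coupling argument for adaptive Bernoulli sequences the sum $X^{\mathrm{early}}$ is stochastically dominated by $\mathrm{Bin}(K, p)$, whose mean is $o(n^{2/5})$. A Chernoff bound (multiplicative if the mean exceeds $C \log n$, otherwise the additive tail $(eKp/t)^t$ evaluated at $t = \Theta(\log n)$) then gives $X^{\mathrm{early}} = o(n^{2/5})$ with probability $1 - 1/\poly(n)$. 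Adding the two bounds yields $X = X^{\mathrm{early}} + X^{\mathrm{late}} = o(n^{2/5})$ with the required probability. The main technical obstacle will be formalizing the stochastic-domination step cleanly: the early/late partition is itself data-dependent, so one must wrap the argument in an optional-stopping / predictable-conditioning form of Chernoff so that the $p$-bound on $\E[Y_t \mid \mathcal{F}_{t-1}]$ is valid uniformly over the adaptive adversary's choices.
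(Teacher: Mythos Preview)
Your proposal is correct and follows essentially the same approach as the paper: both split off the (at most $o(n^{1/5})$) vertices that receive $\geq \epsilon N/2$ queries and bound their contribution deterministically by $o(n^{2/5})$, then bound the remaining ``early'' queries by observing each has conditional success probability $O(d/n)$ and applying a Chernoff-type tail bound. Your stochastic-domination-by-$\mathrm{Bin}(K,p)$ formulation is arguably cleaner than the paper's brief appeal to negative correlation, and your worry about the early/late partition being data-dependent is easily resolved by setting $Z_t := Y_t \cdot \mathbf{1}[\text{query }t\text{ is early}]$, which is $\mathcal{F}_{t-1}$-measurable times $Y_t$ and satisfies $\E[Z_t\mid\mathcal{F}_{t-1}]\le p$ unconditionally.
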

\begin{proof}
    Since we assume $\mc{A}$ makes at most $K = o(n^{6/5})$ queries, there will be at most $o(n^{1/5})$ vertices for which the algorithm makes more than $\epsilon N/2 = \Omega(n)$ adjacency list queries. For each of these $o(n^{1/5})$ vertices, we assume that we discover all their neighbors in $G'$ that is at most $d+1 = O(n^{1/5})$, which in total is $o(n^{2/5})$ edges.

   Now let $V'$ be the set of vertices for which $\mc{A}$ makes at most $\epsilon N /2$ queries.  For each new query to a vertex $v \in V'$, since there are $\epsilon N$ edges to $T_U \cup T_V$ in $G$ and that we have already made at most $\epsilon N/2$ queries to $v$, the new query goes to $G'$ with probability at most $\deg_{G'}(v)/(\deg_{G'}(v) + \epsilon N - \epsilon N/2) =  O(d/n)$. Next, let $X_i$ be the indicator of the event that the $i$'th query to $V'$ discovers a $G'$ edge. From our earlier discussion, we get $\E[X_i] = O(d/n)$. Moreover, these $X_i$'s are negatively correlated. Hence, denoting $X = \sum X_i$, we get $\E[X] \leq O(Kd/n)$ and can apply the Chernoff bound to obtain that with probability at least $1-1/\poly(n)$, 
   $$
   X \leq \E[X] + O(\sqrt{\E[X] \log n}) = O(Kd/n) = o(n^{2/5}),
   $$
   concluding the proof.
\end{proof}

Next, we show that conditioning on the fact that we can discover at most $o(n^{2/5})$ edges, the probability of having an edge between any pairs of vertices is at most $O(1/n^{4/5})$. To give an intuition of why this claim holds, assume that instead of the regular graphs between $B_U$ and $B_V$ (similarly between $A_U$ and $B_V$, and $A_V$ and $B_U$), we have an Erdos-Renyi with the same expected degree as the regular graphs. Since the degree of each regular graph is $O(d)$, then the probability of having an edge between a pair of vertices must be $O(d/n) = O(1/n^{4/5})$ since the existences of edges are independent. However, since we draw a random $O(d)$-regular graph, edge realizations are not independent the same argument does not work. But using careful coupling, we prove that the same claim holds for this construction.

\begin{claim}\label{clm:no-edge-between-edge}
    Let us condition on the high probability event of \cref{clm:few-edges} that the algorithm has discovered at most $o(n^{2/5})$ edges in $G'$. Then for any pair of vertices $u, v$ such that $(u, v)$ is not among the discovered edges, the probability that $(u, v)$ is an edge in $G'$ is at most $O(1/n^{4/5})$.
\end{claim}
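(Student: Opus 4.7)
The plan is to condition on the algorithm's complete view (the set of all $(\text{query}, \text{answer})$ pairs) and to bound the posterior probability that $(u, v) \in G'$ by case analysis on which random component of the construction could contribute this edge. Since the adjacency-list orderings are independent of the edge set given the graph, and since the random permutations of $A_V, A_U, B_V, B_U, S_V, S_U$, the three random regular bipartite graphs $R(B_V, B_U), R(A_V, B_U), R(B_V, A_U)$, and the \yes/\no coin are mutually independent, conditioning on the view factors through these components: each is conditioned on containing a known set $E_+$ of discovered $G'$-edges (with $|E_+| = o(n^{2/5})$ by \cref{clm:few-edges}) and on avoiding a known set $E_-$ of forbidden pairs consisting mainly of the diagonal pairs $(v_i(\cdot), v_i(\cdot))$ excluded by construction. (One might also worry about non-edges certified by fully enumerating a vertex's adjacency list; but full enumeration takes $\Omega(n)$ queries per vertex, and if either $u$ or $v$ were fully enumerated the status of $(u,v)$ would be known, contradicting the hypothesis that $(u, v)$ is not among the discovered edges.)

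For pairs spanned by a matching-type component (the $B_U \leftrightarrow S_V$ matching, the $B_V \leftrightarrow S_U$ matching, the \yes/\no special matching, or the $N$ diagonal $B_V$--$B_U$ edges), the edge $(u, v)$ is produced by a single uniformly random permutation; by symmetry over the $\Theta(N)$ indices not already fixed by $E_+$, the posterior probability is at most $O(1/N) = O(1/n) \leq O(1/n^{4/5})$. Pairs at positions where no random component can place an edge have posterior $0$.

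The main step is the analysis for the three random regular bipartite graphs; I carry it out for $R := R(A_V, B_U)$, as the other two are symmetric. Let $\mathcal{G}_1$ and $\mathcal{G}_0$ denote the sets of bipartite graphs on $A_V \times B_U$ with the prescribed degree sequence ($A_V$-degrees $d$, $B_U$-degrees $(1-\epsilon)d$), containing $E_+$, avoiding $E_-$, and respectively containing / not containing the edge $(u, v)$. The posterior equals $|\mathcal{G}_1|/(|\mathcal{G}_0|+|\mathcal{G}_1|)$, so it suffices to show $|\mathcal{G}_1|/|\mathcal{G}_0| = O(d/n)$. I do this by a standard $2$-switching argument: from any $G^{(1)} \in \mathcal{G}_1$, choose an edge $(a, b) \in G^{(1)}$ with $a \neq u$, $b \neq v$, $(a, b) \notin E_+$, and $(u, b), (a, v) \notin E_- \cup G^{(1)}$, then replace the two edges $(u, v), (a, b)$ by $(u, b), (a, v)$ to land in $\mathcal{G}_0$. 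The number of valid $(a, b)$ per $G^{(1)}$ is $|E(G^{(1)})| - O(d^2) - |E_+| - O(d) = \Theta(nd)$ (where the $O(d^2)$ term excludes edges $(a, b)$ for which $(u, b)$ or $(a, v)$ is already in $G^{(1)}$, and the $O(d)$ term excludes those blocked by the $O(1)$ diagonal constraints in $E_-$ touching $u$ or $v$), while each $G^{(0)} \in \mathcal{G}_0$ is the target of at most $d \cdot (1-\epsilon)d = O(d^2)$ such switches (choose an edge of $G^{(0)}$ at $u$ and one at $v$). Double counting gives $|\mathcal{G}_1|/|\mathcal{G}_0| \leq O(d^2)/\Theta(nd) = O(d/n) = O(1/n^{4/5})$.

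The main obstacle I anticipate is carefully justifying that the position-level information in the view leaves the switching ratio unchanged. The key point is that conditional on the edge set of $R$, every vertex's adjacency list is a uniform permutation of its neighbor set, and the probability of the observed position data equals $\prod_u (\deg(u) - k_u)!/\deg(u)!$ --- a function only of the fixed degrees (cf.\ \cref{obs:degrees}) and of how many positions $k_u$ have been queried at each $u$, both of which are the same for every graph in $\mathcal{G}_0 \cup \mathcal{G}_1$. Consequently this factor cancels from the likelihood ratio between $\mathcal{G}_1$ and $\mathcal{G}_0$, and the switching bound $O(1/n^{4/5})$ transfers verbatim to the full posterior on $G'$.
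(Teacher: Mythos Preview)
Your proposal is correct and follows essentially the same approach as the paper: both arguments do a case analysis on which random component could supply the edge $(u,v)$, handle the matching-type components by a symmetry/permutation bound of $O(1/n)$, and handle the random regular bipartite components by a $2$-switching (equivalently, the paper's bipartite double-counting between $\mathcal{G}_e$ and $\overline{\mathcal{G}}_e$) to get the ratio $O(d^2)/\Theta(nd)=O(1/n^{4/5})$. Your explicit treatment of why the adjacency-list position data contributes a constant likelihood factor across $\mathcal{G}_0\cup\mathcal{G}_1$ (and hence cancels) is a point the paper leaves implicit.
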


\begin{proof}
    Let $A = A_U \cup A_V$ and $B = B_U \cup B_V$. Also, assume that by index of a vertex we mean the index in the permutation of its corresponding subset in the construction. There are three possible cases for the type of subsets that $u$ and $v$ belong to:

   \paragraph{(Case 1) $u, v \in A$:} note that since the algorithm makes at most $o(n^{6/5})$ queries, using Chernoff bound, it is easy to see that at least a constant fraction of perfect matching edges in the induced subgraph of $A$ remains undiscovered. If $u$ and $v$ both belong to $A$, the probability of having an edge between the pair is at most $O(1/n)$ since we only have a perfect matching between vertices with the same indices in $A_U$ and $A_V$ for $\mc{D}_{\yes}$, and vertices are randomly permuted.

    For the other two cases, we bound the probability of having an edge between $u$ and $v$ by considering two possible scenarios. First, the probability that both $u$ and $v$ have the same indices is equal to $1/n$. Next, assume that $u$ and $v$ have different indices. Let $\mc{G}_e$ be the set of all graphs in $\mc{D}$ such that all of them have the edge $e = (u,v)$. Also, let $\overline{\mc{G}}_e$ be the set of all graphs in $\mc{D}$ with no edge between pair $(u,v)$. In the next two cases, we use coupling to show that conditioned on discovered edges, we have $|\mc{G}_e| / |\overline{\mc{G}}_e| \leq O(1/n^{4/5})$ which implies that the probability of having an edge for a pair $(u, v)$ is at most $O(1/n^{4/5})$.

   \paragraph{(Case 2) $u, v \in B$:} without loss of generality, assume that $u \in B_U$ and $v \in B_V$. First, note that with a probability of $O(1/n)$, the index of $u$ and $v$ is the same, which implies that there is an edge between $u$ and $v$. Let $i_x$ denote the index of vertex $x$. Now assume that $i_u \neq i_v$. We claim if there exists edge $(u,v)$, then there are at least $N\epsilon d /2$ edges $(w,z)$ such that $z \in B_U$, $w \in B_V$, $\{i_u, i_v, i_w, i_z \} = 4$, and the induced subgraph of these four vertices only has two edges $(u,v)$ and $(w, z)$.

   Since the number of incident edges of $u$ to $B_V$ is at most $\epsilon d$, then there are at least $N - \epsilon d$ vertices in $B_V$ that are not connected to $u$. Also, we exclude the vertex with index $i_u$ in $B_V$ from this set. Let $Q$ be the set of such vertices in $B_V$. Hence, $|Q| \geq N -\epsilon d - 1$. Furthermore, by the construction, each vertex $w \in Q$ has at least $\epsilon d - 3$ neighbors in $B_U$ with an index not in $\{i_u, i_v, i_w\}$. Hence, there are at least $(\epsilon d - 3)(N -\epsilon d - 1)$ edges between vertices of $Q$ and $B_U$ with endpoints having different labels. Moreover, at most $\epsilon^2 d^2$ of these edges are incident to an edge that one of its endpoints is $v$. Therefore, there are at least $(\epsilon d - 3)(N -\epsilon d - 1) - \epsilon^2 d^2 \geq N\epsilon d /2$ edges $(w,z)$ that satisfy the claimed properties where the inequality follows by the choice of $d$.

   Now assume that we remove all edges that we discovered so far. Since we discover at most $o(n^{2/5})$ edges, by removing these edges we still have $O(n\epsilon d)$ edges $(w,z)$ with mentioned properties. Let $(w,z)$ be such an edge where $w \in B_V$ and $z \in B_U$. We construct a graph by removing edge $(u, v)$ and $(z, w)$, and adding edges $(u, w)$ and $(v, z)$. If the original graph is in $\mc{D}_{\yes}$ ($\mc{D}_{\no}$), the new graph is in $\mc{D}_{\yes}$ ($\mc{D}_{\no}$) according to the construction since we only change the random regular bipartite graph between $B_U$ and $B_V$ without changing the degrees. We construct a bipartite graph where each vertex of the first part represents a graph in $\mc{G}_e$ and each vertex in the second part represents a graph in $\overline{\mc{G}}_e$. For each vertex in the first part, we connect it to the vertices of the second part that can be produced by the above operation. Thus, the degree of vertices in the first part is at least $O(n\epsilon d)$. On the other hand, each vertex in the second part is connected to at most $O(\epsilon^2 d^2)$ vertices of the first part since the degree of each $u$ and $v$ is at most $\epsilon d$ in the induced subgraph of $B$. Counting the edges from both sides of the constructed bipartite graph yields
   \begin{align*}
       \frac{|\mc{G}_e|}{|\overline{\mc{G}}_e|} \leq \frac{O( \epsilon^2 d^2)}{O(N\epsilon d)} = O\left(\frac{\epsilon d}{N}\right) = O_\epsilon\left(\frac{1}{n^{4/5}}\right).
   \end{align*}

    \paragraph{(Case 3) $u \in A, v \in B$ or $u \in B, v \in A$:} without loss of generality, assume that $u \in A_U$ and $v \in B_V$. With the same argument as the previous case, the probability that both $u$ and $v$ have the same index is $O(1/n)$. Moreover, since the degree of the induced subgraph of $A_U \cup B_V$ differs from the previous case by a constant factor, with the same reasoning as the previous part, the probability that there exists an edge between $u$ and $v$ is at most $O(1/n^{4/5})$.

    Therefore, conditioning on the discovered edges, the probability of having an edge $(u,v)$, is at most $O(1/n^{4/5})$
    \end{proof}

Now we are ready to complete the proof of \Cref{lem:forest}.

\begin{proof}[Proof of \Cref{lem:forest}]
    First, we prove that for any $t \geq 1$, if two vertices $u, v$ are non-singleton in $F_t$ and $(u, v) \not\in F_t$, then there is no edge between $u$ and $v$ in graph $G'$. This will imply that any time that we discover a new edge of a non-singleton vertex in $F_t$, it must go to a vertex that is singleton in $F_t$. We prove this by induction on $t$. For $F_0$, the graph is empty and so the claim holds. Suppose that there are no undiscovered edges among the non-singleton vertices of $F_{t-1}$. We prove that this continues to hold for $F_t$. If $F_t \setminus F_{t-1} = \emptyset$, i.e., if we do not discover any new edge of $G'$ at step $t$, then the claim clearly holds. So suppose that we query some vertex $v$ and find an edge $(v, u)$ at step $t$. It suffices to show that in this case, $u$ will not have any non-singleton neighbors in $F_{t-1}$ other than $v$. To see this, recall by \Cref{clm:few-edges} that there are at most $o(n^{2/5})$ non-singleton vertices in $F_{t-1}$. Fixing any such vertex $w$ with $w \not= v$, we get by \cref{clm:no-edge-between-edge} that the conditional probability of $(u, w)$ being an edge in $G'$ is at most $O(1/n^{4/5})$. By a union bound over all $o(n^{2/5})$ choices of $w$, we get that the probability of $u$ having an edge to any of the non-singleton vertices of $F_{t-1}$ is $o(n^{2/5}) \cdot O(1/n^{4/5}) = o(1/n^{2/5})$. Finally, since by \cref{clm:few-edges} there are at most $O(n^{2/5})$ steps where we discover any edge of $G'$, the failure probability over all the steps of the induction is $O(n^{2/5}) \cdot o(1/n^{2/5}) = o(1)$. Hence, the claim is true throughout with probability at least $1-o(1)$.
    
    We proved above that by querying an already non-singleton vertex $v$, its discovered neighbor $u$ must be singleton w.h.p. We will now prove that this also holds if $v$ itself is singleton. As discussed, the number of non-singleton vertices in $F_t$ is  $o(n^{2/5})$ by \cref{clm:few-edges}. For each of these vertices $u$, the conditional probability of $v$ having an edge to $u$ is at most $O(1/n^{4/5})$ by \cref{clm:no-edge-between-edge}. Hence, the expected number of edges of $v$ to non-singleton vertices is $o(1/n^{2/5})$. Since $v$ has $\Omega(n)$ edges in $G$ and its adjacency list is uniformly sorted, the probability of discovering one of such edges of $v$ is at most $o(1/n^{2/5}) / n = o(1/n^{7/5})$. A union bound over all $o(n^{6/5})$ queries of the algorithm implies that with probability $1-o(1)$, any time that we query a singleton vertices of $F_t$, its discovered edge will not be to another non-singleton vertex of $F_t$.
\end{proof}

\subsection{The Tree Model}

In our proofs, we condition on the high probability event of \cref{lem:forest} that $F_t$ for any $t$ forms a rooted forest. Under this event, we prove that  the algorithm cannot distinguish the \yes{} distribution from the \no{} distribution.

Recall that from our definition in \cref{lem:forest}, every vertex in $F_t$ for any $t$ is a vertex in set $A \cup B \cup S$. Given the forest $F_t$, we do not necessarily know for a given a vertex $v$ in $F_t$ whether it belongs to $A$ or $B$ (but if it belongs to $S$ we know this since recall all the $S$ vertices are given for free by the distribution). Inspired by this, we can see each vertex $v$ in $F_t$ as a random variable taking one of the values $\{A, B, S\}$. Importantly, depending on the value of $v$ we have a different distribution on what values its children in the tree will take. This distribution is also different for the \yes{} and \no{} distributions. For example, in the \no{} distribution all children of any vertex $v \in A$ will be $B$ vertices whereas in the \yes{} distribution there is a small chance of seeing an $A$ child. We prove that although these distributions are different for the \yes{} and \no{} distributions, no algorithm can distinguish (with a sufficiently large probability) whether the observed forest $F_t$ was drawn from \yesdist{} or \nodist{}.

The proof consists of multiple steps. First, in Section~\cref{sec:cor-decay}, we prove a correlation decay property on the tree. Then equipped with the correlation decay property, we are able to show the probability of seeing the same forest by the algorithm in a \yesdist{} and \nodist{} is almost the same which implies that the algorithm cannot decide if the graph is drawn from \yesdist{} or \nodist{}.

\subsection{Correlation Decay}\label{sec:cor-decay}

\cref{lem:cor-decay} below is our main result in this section. Before stating the lemma, let us start with some definitions. Consider a tree $T(v)$ of any arbitrary depth rooted at vertex $v$. We say $T(v)$ is $\ell$-$S$-free if no vertex in $T(v)$ with distance at most $\ell$ from the root belongs to $S$. We also use $Q(T(v), \ell)$ to denote the total number of vertices of distance at most $\ell$ from the root $v$ in $T(v)$. With a slight abuse of notation, we may also use $T(v)$ to denote the event that the sub-tree of $F_t$ rooted at vertex $v$ by the end of the algorithm (i.e., when $t = K$) is exactly the same as $T(v)$.

\begin{lemma}\label{lem:cor-decay}
    Let $v$ be any vertex, $\ell \geq 20(\log^2 n)/\epsilon$, and let $T(v)$ be any $\ell$-$S$-free tree. Let $P(v)$ and $P'(v)$ each be an arbitrary outcome of $v$ from $\{A, B\}$ and the entire forest $F_t(v)$ excluding the sub-tree of $v$. Letting $Q := Q(T(v), \ell)$ and assuming that $Q = o(d)$, we have
    $$
        \Pr_{\yesdist{}}[T(u_{\ell'}) \mid P(v)] \leq  \Big(1+\frac{1}{d}\Big)^{O(Q)}\Pr_{\yesdist{}}[T(u_{\ell'}) \mid P'(v)],
    $$
    and
    $$
        \Pr_{\nodist{}}[T(u_{\ell'}) \mid P(v)] \leq  \Big(1+\frac{1}{d}\Big)^{O(Q)} \Pr_{\nodist{}}[T(u_{\ell'}) \mid P'(v)].
    $$
\end{lemma}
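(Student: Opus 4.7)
The plan is to prove the ratio bound by a lazy exposure of the random regular bipartite components followed by a switching argument that closely mirrors Case~2 of \cref{clm:no-edge-between-edge}, but iterated over every edge of the subtree.

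First, by Yao's principle we may assume $\mc{A}$ is deterministic, so conditioned on any query--answer history, the next query issued by $\mc{A}$ is fixed. This lets us expose the random regular bipartite graphs between $A_V,B_U$, $B_V,A_U$, and $B_V,B_U$ lazily: each query is answered by drawing the corresponding slot uniformly at random from the remaining slots consistent with all previously exposed answers. Under the high-probability event of \cref{lem:forest}, the discoveries within the subtree of $v$ form a tree of at most $Q$ vertices within distance $\ell$, corresponding to at most $Q$ edge-exposures. Writing $Q' \le Q$ for the number of those edges, I would factor
\[
    \Pr[T(u_{\ell'}) \mid P(v)] \;=\; \prod_{i=1}^{Q'} p_i\bigl(P(v)\bigr),
\]
where $p_i(P(v))$ is the conditional probability that the $i$-th in-subtree exposure produces the edge prescribed by $T(u_{\ell'})$ given $P(v)$ and the $i-1$ preceding in-subtree exposures. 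The same factorization applies to $P'(v)$, with identical ``query'' choices at every step because $\mc{A}$ is deterministic and the prefixes agree by construction.

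Second, I would bound the per-edge ratio $p_i(P(v))/p_i(P'(v)) \le 1 + O(1/d)$ by a switching argument. Each $p_i$ is determined by the current counts of un-exposed slots out of the relevant vertex towards each of $A_V, A_U, B_V, B_U$, which in turn are fixed by the regularity constraints of \cref{obs:degrees} minus the previously exposed edges. The conditionings $P(v)$ and $P'(v)$ each reveal at most $O(Q) = o(d)$ edges of the random regular components, and I would transform $P(v)$ into $P'(v)$ via a sequence of constant-size edge swaps (swapping a pair $(u,v),(z,w)$ for $(u,w),(v,z)$, exactly as in Case~2 of \cref{clm:no-edge-between-edge}). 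Since each swap changes each slot count by at most $O(1)$ out of $\Theta(d)$, it changes $p_i$ by a factor of $1 + O(1/d)$; amortizing over the $O(Q)$ swaps and the $Q'$ factors gives the ratio bound. The $\ell$-$S$-free hypothesis is what guarantees that no $p_i$ involves a transition into $S$ (which would behave differently because $S$-vertices sit on a rigid perfect matching), so every transition concerns the bulk random regular parts where degrees are $\Theta(d)$.

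Finally, telescoping over the $Q' \le Q$ factors yields
\[
    \frac{\Pr[T(u_{\ell'}) \mid P(v)]}{\Pr[T(u_{\ell'}) \mid P'(v)]} \;\le\; \Bigl(1+\tfrac{1}{d}\Bigr)^{O(Q)},
\]
and the identical calculation works under both $\yesdist$ and $\nodist$ because the edges that distinguish the two distributions (the $A$--$A$ vs.\ $A$--$B$ perfect matching) sit among the common random structure and enter $p_i$ symmetrically. The main obstacle is the per-edge ratio bound in Step~2: a naive count would only show $p_i(P(v))/p_i(P'(v)) \le 1 + O(Q/d)$ per factor, which would blow up to $e^{O(Q^2/d)}$ after telescoping. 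Getting the correct $1+O(1/d)$ per factor requires realizing the change of conditioning as a sequence of elementary single-edge swaps and bounding each such swap locally, which is exactly the combinatorial core of the argument.
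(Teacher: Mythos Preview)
Your approach has a genuine gap that stems from misidentifying where the difficulty lies. The switching argument from \cref{clm:no-edge-between-edge} controls the probability of a \emph{specific edge} in the random regular bipartite pieces, but the conditionings $P(v)$ and $P'(v)$ differ primarily in the $A/B$ \emph{label} of $v$---a vertex attribute, not an edge configuration---so no sequence of edge swaps transforms one into the other. More concretely, your factorization only runs over $Q'\le Q$ edges, i.e.\ those within distance $\ell$ of the root; but $T(v)$ is only $\ell$-$S$-free and can (and in the intended application, does) contain $S$-vertices at depth $>\ell$. For any such edge $(u,w)$ with $w\in S$, the corresponding factor equals $\Pr[u\in B\mid P(v),\ldots]$ times a term independent of the conditioning, and this posterior on $u$'s label \emph{is} strongly affected by the label of $v$: at depth~$1$ it changes by a factor of $\Theta(1/\epsilon)$, not $1+O(1/d)$. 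The whole point of the $\ell$-$S$-free hypothesis with $\ell=\Theta(\log^2 n/\epsilon)$ is that this influence on the label posterior \emph{decays} along the path before any $S$-vertex can be reached; your argument never establishes that decay.

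The paper's proof is fundamentally different: it isolates long internal paths (\cref{cl:internal-paths}) and, along each such path, sets up and solves a linear recursion for $a_i=\Pr[u_i\in A\mid P]$, $b_i=\Pr[u_i\in B\mid P]$ (\cref{cl:cor-decay-on-paths}). The recursion has contracting eigenvalue $\epsilon-1$, so after $\Theta(\log n/\epsilon)$ steps $a_i,b_i$ converge (up to $(1\pm 1/d)^{O(i)}$) to the stationary values $(1-\epsilon)/(2-\epsilon)$ and $1/(2-\epsilon)$, \emph{independently} of the initial conditioning. This mixing of the hidden-label Markov chain---not a per-edge switching bound---is what makes the probability of the deep subtree insensitive to $P(v)$ versus $P'(v)$. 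Your proposal captures the $(1+1/d)^{O(Q)}$ slop coming from the regular-graph randomness but misses the correlation-decay core entirely.
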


\begin{proof}
    Call a vertex $u \in T(v)$ an {\em internal} vertex if it has distance at most $\ell$ from $v$. Note that $Q$ is the number of internal vertices in $T(v)$. We say a subset of internal vertices $u_1, \ldots, u_k$ form an {\em internal path} in a tree $T$ if for any $1 \leq i \leq k-1$, $u_{i}$ is the parent of $u_{i+1}$ in $T$, and additionally each $u_i$, for $i \in [k]$, only has one child in $T$. 
    
    We use the following two auxiliary \cref{cl:internal-paths,cl:cor-decay-on-paths} to prove \cref{lem:cor-decay}.
    
    \begin{claim}\label{cl:internal-paths}
        Suppose that every vertex in $T(v)$ has at least one non-internal descendant. Then $T$ must have an internal path of length at least $10\log n/\epsilon$.
    \end{claim}
    \begin{myproof}
        Suppose for contradiction that $T(v)$ does not have any internal path of length $\ell'$. Construct a tree $T'(v)$ by contracting all maximal internal paths of $T(v)$ (of any length). Since every vertex in $T(v)$ has a non-internal descendant, which is of distance at least $\ell$ from the root by definition, and every contracted path has a length less than $\ell'$, every vertex in $T'(v)$ must have a descendant of distance at least $\ell/\ell'$ from the root. Furthermore, every vertex in $T'(v)$ must have at least two children. As such, we get that $T'(v)$ must have $2^{\ell/\ell'} > n$ vertices, a contradiction.
    \end{myproof}
    
    The proof of \cref{cl:cor-decay-on-paths} below is involved; we state it here and present the proof in \cref{sec:cor-decay-paths}.
    
    \begin{claim}\label{cl:cor-decay-on-paths}
        Let $(u_1, \ldots, u_{\ell'})$ be an internal path in $T(v)$ of length $\ell' \geq 10\log n / \epsilon$ such that each $u_i$ is the parent of $u_{i+1}$. Let $T(u)$ be the sub-tree of $T(v)$ rooted at vertex $u$ and let $P(u_1)$ and $P'(u_1)$ be any two events on the vertices outside $u_1$'s sub-tree. Then 
        $$
            \Pr_{\yesdist{}}[T(u_{\ell'}) \mid P(u_1)] \leq  \Big(1+\frac{1}{d}\Big)^{O(\ell')}\Pr_{\yesdist{}}[T(u_{\ell'}) \mid P'(u_1)],
        $$
        and
        $$
            \Pr_{\nodist{}}[T(u_{\ell'}) \mid P(u_1)] \leq  \Big(1+\frac{1}{d}\Big)^{O(\ell')} \Pr_{\nodist{}}[T(u_{\ell'}) \mid P'(u_1)].
        $$
    \end{claim}
    
    We are now ready to present the proof of \cref{lem:cor-decay}. Given the tree $T(v)$, to measure the probability of $T(v)$ actually happening we start by querying from $v$ the same tree topology. That is, if some vertex $u$ has $x$ children in $T(v)$, we reveal $x$ children of $u$ from the distribution and measure the probability that the resulting tree is exactly the same as $T(v)$. First, let us measure the probability that the resulting tree is indeed $\ell$-$S$-free. That is, no internal vertex in the queried subtree is an $S$ vertex. To do this, take an internal vertex $u$ in $T(v)$. Conditioned on either of $P(u_1)$ or $P'(u_1)$, the probability of $u$ being an $S$ vertex is at most $O(1/d)$ since every vertex in $G'$ has at most one $S$ neighbor, has degree $d+1$, and its adjacency list is uniformly sorted. On the other hand, since the total number of internal vertices in $T(v)$ is $Q$, by a union bound, the probability of seeing at least one $S$ vertex is at most $O(Q/d)$. Let us condition on the event that we see no internal $S$ vertex. This only multiplies the final probability of $T(v)$ occuring by some $1 \pm O(Q/d) \leq (1 \pm 1/d)^{O(Q)}$ factor.

    Conditioned on the observed tree being $\ell$-$S$-free, note that if some internal vertex $u$ in the tree has no non-internal descendants, then indeed its sub-tree exactly matches $T(v)$. So let us peel off all such sub-trees, arriving at a tree where every internal vertex has at least one non-internal descedntant. From \cref{cl:internal-paths}, we get that there must be a an internal path $u_1, \ldots, u_{\ell'}$ of length at least $\ell' \geq 10\log n / \epsilon$ in this tree. Applying \cref{cl:cor-decay-on-paths}, the probability of $T(u_{\ell'})$ happening remains the same up to a factor of $(1 + 1/d)^{O(\ell')}$, no matter what we condition on above $u_1$. Thus, we can peel off the whole sub-tree of $u_{1}$ and repeat. Assuming that we repeat this process $K$ times until we peel off all vertices, we get that the probability of $T(v)$ occuring under $P(v)$ and $P'(v)$ is the same up to a factor of $(1 \pm 1/d)^{O(Q)} \times (1\pm 1/d)^{\ell' K} = (1 \pm 1/d)^{O(Q + \ell'K)}$. Finally, note that $\ell' K = O(Q)$ since every time we peel off the sub-tree of an internal path, we remove $\ell'$ internal vertices in the path, and so $K \times \ell'$ is at most the number of internal vertices. This completes the proof of \cref{lem:cor-decay}.
    \end{proof}
    
    \subsubsection{Correlation Decay on Internal Paths: Proof of \cref{cl:cor-decay-on-paths}}\label{sec:cor-decay-paths}
    
    In this section, we present the proof of \cref{cl:cor-decay-on-paths}, which as discussed implies correctness of \cref{lem:cor-decay}.
    
    \begin{proof}[Proof of \cref{cl:cor-decay-on-paths}]
        For any $i \in [\ell']$ we define
        $$
            b_i = \Pr[u_i \in B \mid P(u)], \qquad a_i = \Pr[u_i \in A \mid P(u)].
        $$
        We claim that for both distributions \yesdist{} and \nodist{}, and for any $i > 1$, 
        \begin{flalign}\label{eq:recursive}
            b_i \in \left(1 \pm \frac{2}{\epsilon d}\right) \left( \epsilon b_{i-1} + a_{i-1} \right), \qquad a_i \in \left(1 \pm \frac{2}{\epsilon d}\right) (1-\epsilon) b_{i-1} \pm  \frac{a_{i-1}}{d}.
        \end{flalign}
        To see this, let us first focus on $b_i$. We have 
        \begin{flalign}
            \nonumber \Pr[u_i \in B \mid P(u)] &= \Pr[u_i \in B \mid u_{i-1} \in B, P(u)] \cdot \Pr[u_{i-1} \in B \mid P(u)]\\
            \nonumber &+ \Pr[u_i \in B \mid u_{i-1} \in A, P(u)] \cdot \Pr[u_{i-1} \in A \mid P(u)]\\
            &= \Pr[u_i \in B \mid u_{i-1} \in B, P(u)] b_{i-1} + \Pr[u_i \in B \mid u_{i-1} \in A, P(u)] a_{i-1}.\label{eq:hcjj-299283780}
        \end{flalign}
        Let us now examine $\Pr[u_{i} \in B \mid u_{i-1} \in B, P(u)]$. Given that $u_{i-1} \in B$, vertex $u_{i-1}$ has exactly $d+1$ neighbors in graph $G'$. Among them, either $\epsilon d$ or $\epsilon d - 1$ neighbors of $u_i$ are in $B$. Additionally, since $u_{i-1}$ has exactly one child (by definition of internal paths) then the conditional event $P(u)$ may only reveal the $A$/$B$-value of one neighbor of $u_{i-1}$, which would be its parent. Since the adjacency list of vertex $u_{i-1}$ is randomly sorted, its discovered child $u_{i}$ belongs to $B$ with probability at least $\frac{\epsilon d - 2}{d} \geq (1 - \frac{2}{\epsilon d}) \epsilon$ and at most $\frac{\epsilon d}{d} = \epsilon$. Thus, 
        $$\Pr[u_{i} \in B \mid u_{i-1} \in B, P(u)] \in (1 \pm \frac{2}{\epsilon d}) \epsilon.$$
        On the other hand, since every vertex in $A$ has at least $d$ and at most $d+1$ neighbors in $A$,
        $$\Pr[u_{i} \in B \mid u_{i-1} \in A, P(u)] \in (1 \pm \frac{1}{d}).$$
        Replacing these two bounds in \cref{eq:hcjj-299283780}, we indeed arrive at the recursion of \cref{eq:recursive} for $b_i$.
        
        The calculation for $a_i$ is similar. In particular, as in \cref{eq:hcjj-299283780}, we have
        \begin{flalign}
            \nonumber \Pr[u_i \in A \mid P(u)] &= \Pr[u_{i} \in A \mid u_{i-1} \in B, P(u)] b_{i-1} +  \Pr[u_{i} \in A \mid u_{i-1} \in A, P(u)] a_{i-1}.
        \end{flalign}
        We have $\Pr[u_{i} \in A \mid u_{i-1} \in B, P(u)] \in (1 \pm \frac{2}{\epsilon d}) (1-\epsilon)$ since almost $(1-\epsilon)$ fraction of neighbors of each $B$ vertex go to $A$, and we have $\Pr[u_{i} \in A \mid u_{i-1} \in A, P(u)] \leq 1/d$ since each $A$ vertex has at most one $A$ neighbor among its $d+1$ neighbors. Replacing these into the inequality above, implies the recursion of \cref{eq:recursive} for $a_i$.
        
        The following claim gives an explicit (i.e., non-recursive) bound for $b_i$ and $a_i$ as a function of just $a_1$ and $b_1$.
    
        \begin{claim}\label{cl:explicit}
        Let 
        $$
            a'_i := \frac{(1-\epsilon)^2(a_1+b_1) - (\epsilon - 1)^i (a_1 - (1-\epsilon)b_1) }{(2-\epsilon)(1-\epsilon)},
        $$ 
        and
        $$
            b'_i := \frac{(1-  \epsilon)(a_1 + b_1) + (\epsilon - 1)^i (a_1 - (1-\epsilon)b_1)}{(2-\epsilon)(1-\epsilon)}.
        $$
        Then for any $i \geq 3$, we have
        \begin{flalign*}
            b_i \in \left( 1 \pm \frac{2}{\epsilon d} \right)^{2i} b'_i, \qquad 
            a_i \in \left( 1 \pm \frac{2}{\epsilon d} \right)^{2i} a'_i.
        \end{flalign*}
        \end{claim}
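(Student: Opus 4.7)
The plan is to identify $a'_i$ and $b'_i$ as the closed-form solutions of the \emph{noiseless} version of recursion (\ref{eq:recursive}), then propagate the error terms through an induction on $i$. Dropping the $(1 \pm \tfrac{2}{\epsilon d})$ multiplicative and $\pm a_{i-1}/d$ additive error terms from (\ref{eq:recursive}) leaves the linear system $b'_i = \epsilon b'_{i-1} + a'_{i-1}$ and $a'_i = (1-\epsilon) b'_{i-1}$, with $a'_1 = a_1$, $b'_1 = b_1$. Its transition matrix $\left(\begin{smallmatrix} \epsilon & 1 \\ 1-\epsilon & 0\end{smallmatrix}\right)$ has characteristic polynomial $(\lambda - 1)(\lambda - (\epsilon - 1))$, so the general solution takes the form $c_1 + c_2(\epsilon - 1)^i$. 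Matching to the initial conditions yields precisely the formulas for $a'_i, b'_i$ stated in the claim, which can also be verified by direct substitution into the recurrence.

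The second step is an induction on $i \geq 3$, with the base case handled by unrolling the recursion a constant number of times and comparing against the explicit formula. For the inductive step, assuming $b_{i-1} \in (1 \pm \tfrac{2}{\epsilon d})^{2(i-1)} b'_{i-1}$ and $a_{i-1} \in (1 \pm \tfrac{2}{\epsilon d})^{2(i-1)} a'_{i-1}$, substituting into the recursion for $b_i$ in (\ref{eq:recursive}) immediately gives
$$
b_i \in \Bigl(1 \pm \tfrac{2}{\epsilon d}\Bigr)^{2i-1} (\epsilon b'_{i-1} + a'_{i-1}) = \Bigl(1 \pm \tfrac{2}{\epsilon d}\Bigr)^{2i-1} b'_i,
$$
which fits comfortably inside the claimed $(1 \pm \tfrac{2}{\epsilon d})^{2i}$ envelope. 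The delicate case is $a_i$, whose recursion carries the extra additive $\pm a_{i-1}/d$ term.

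The main obstacle is absorbing this additive error into the multiplicative envelope for $a_i$. The multiplicative part of the recursion already contributes $(1 \pm \tfrac{2}{\epsilon d})^{2i-1} a'_i$, so $a_{i-1}/d$ must be squeezed into the remaining slack $[(1 \pm \tfrac{2}{\epsilon d})^{2i} - (1 \pm \tfrac{2}{\epsilon d})^{2i-1}] a'_i = \Theta(a'_i/(\epsilon d))$, which requires a lower bound on $a'_i$. To obtain one, I would use the identity $a_1 + b_1 = 1$, which holds because the outer proof of \cref{lem:cor-decay} conditions on the tree being $\ell$-$S$-free, forcing $u_1 \in A \cup B$ with probability one. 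Plugging $a_1 + b_1 = 1$ into the closed-form expressions and bounding $|(\epsilon-1)^i (a_1 - (1-\epsilon) b_1)| \leq (1-\epsilon)^i$, a short calculation shows $a'_i, b'_i = \Omega(\epsilon)$ for every $i \geq 3$. Since $a_{i-1} \leq O(1)$ by the inductive hypothesis, the additive term satisfies $a_{i-1}/d = O(1/d) = O(a'_i/(\epsilon d))$, which fits inside the available slack and closes the induction for $a_i$.
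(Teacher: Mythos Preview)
Your proposal follows the same inductive skeleton as the paper: verify that $a'_i,b'_i$ solve the noiseless recursion, establish the base case by unrolling, and in the inductive step absorb the stray $\pm a_{i-1}/d$ term for $a_i$ into one extra factor of $(1\pm 2/(\epsilon d))$. The only substantive difference is \emph{how} you justify that absorption. The paper does it purely from the recurrence: since $a'_{i-1}=(1-\epsilon)b'_{i-2}$ and $b'_{i-1}=\epsilon b'_{i-2}+a'_{i-2}\ge \epsilon b'_{i-2}$, one gets $\epsilon a'_{i-1}\le (1-\epsilon)b'_{i-1}\le b'_{i-1}$, i.e.\ $a'_{i-1}\le b'_{i-1}/\epsilon$, which is exactly the inequality needed to fit $a'_{i-1}/d$ into the slack $(2/(\epsilon d))a'_i=(2/(\epsilon d))(1-\epsilon)b'_{i-1}$. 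This uses only nonnegativity, not $a_1+b_1=1$.

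Your route via $a_1+b_1=1$ and the closed form to obtain $a'_i=\Omega(\epsilon)$, $a'_{i-1}=O(1)$ is also valid in context (the surrounding proof does guarantee $u_1\notin S$), but it is slightly less clean: the resulting constant in ``$a_{i-1}/d=O(a'_i/(\epsilon d))$'' is borderline for the \emph{specific} envelope $(1\pm 2/(\epsilon d))^{2i}$ stated in the claim, though it comfortably yields $(1\pm C/(\epsilon d))^{O(i)}$ for some absolute $C$, which is all the outer lemma needs. If you want the exact constant, the paper's one-line recurrence bound $a'_{i-1}\le b'_{i-1}/\epsilon$ is the tighter and more direct way to close the induction.
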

        \begin{proof}
            First, we note a couple of useful properties of $b'_i$ and $a'_i$ that all can be verified from their definitions:
            \begin{equation}\label{eq:bpap}
                a'_1 = a_1, \qquad b'_1 = b_1, \qquad b'_i = \epsilon b'_{i-1} + a'_{i-1}, \qquad a'_i = (1-\epsilon)b'_{i-1}.
            \end{equation}
            To prove the stated bounds on $b_i$ and $a_i$ we use induction on $i$. For the base case $i=3$, directly applying \cref{eq:recursive} we get
            \begin{flalign*}
                b_3 &\in \left(1 \pm \frac{2}{\epsilon d} \right) (\epsilon b_{2} + a_{2})\\
                &\in \left(1 \pm \frac{2}{\epsilon d} \right) \left[ \epsilon\left( \left(1 \pm \frac{2}{\epsilon d} \right) (\epsilon b_{1} + a_{1}) \right) + \left(\left(1 \pm \frac{2}{\epsilon d} \right)(1-\epsilon)b_1 \pm \frac{a_1}{d} \right) \right]\\
                &\in \left(1 \pm \frac{2}{\epsilon d} \right)^{2} \left[ \epsilon^2 b_1 + \epsilon a_1 + (1-\epsilon) b_1 \pm \frac{a_1}{d}\right]\\
                &\in \left(1 \pm \frac{2}{\epsilon d} \right)^{3} \left[ (1 - \epsilon + \epsilon^2) b_1 + \epsilon a_1 \right]\\
                &= \left(1 \pm \frac{2}{\epsilon d} \right)^{3} b'_3 \in \left(1 \pm \frac{2}{\epsilon d} \right)^{6} b'_3.
            \end{flalign*}
            Similarly, we have
            \begin{flalign*}
                a_3 &\in \left( 1 \pm \frac{2}{\epsilon d} \right) (1-\epsilon) b_2 \pm \frac{a_2}{d}\\
                &\in \left( 1 \pm \frac{2}{\epsilon d} \right) (1-\epsilon) \left( \left(1 \pm \frac{2}{\epsilon d} \right) (\epsilon b_{1} + a_{1}) \right) \pm \frac{1}{d} \left( \left( 1 \pm \frac{2}{\epsilon d} \right) (1-\epsilon) b_1 \pm \frac{a_1}{d} \right)\\
                &\in \left( 1 \pm \frac{2}{\epsilon d} \right)^3 (1-\epsilon) \left( \epsilon b_1 + a_1 \right)\\
                &= \left( 1 \pm \frac{2}{\epsilon d} \right)^3 a'_3 \in \left( 1 \pm \frac{2}{\epsilon d} \right)^6 a'_3.
            \end{flalign*}
            
            We now turn to prove the induction step for $i$, assuming that it holds for $i-1$. Let us start with $b_i$. We have
            \begin{flalign*}
                b_i &\in \left(1 \pm \frac{2}{\epsilon d} \right) (\epsilon b_{i-1} + a_{i-1}) \tag{From \cref{eq:recursive}.}\\
                &\in \left(1 \pm \frac{2}{\epsilon d} \right) \left( \epsilon \left(1 \pm \frac{2}{\epsilon d} \right)^{2(i-1)} b'_{i-1} + \left(1 \pm \frac{2}{\epsilon d} \right)^{2(i-1)} a'_{i-1} \right) \tag{By the induction hypothesis.}\\
                &\in \left(1 \pm \frac{2}{\epsilon d} \right)^{2i} \left( \epsilon b'_{i-1} + a'_{i-1} \right)\\
                &= \left(1 \pm \frac{2}{\epsilon d} \right)^{2i} b'_i. \tag{By \cref{eq:bpap}.}
            \end{flalign*}
            Moreover, for $a_i$ we have
            \begin{flalign*}
                a_i &\in \left(1 \pm \frac{2}{\epsilon d}\right) (1-\epsilon) b_{i-1} \pm \frac{a_{i-1}}{d} \tag{From \cref{eq:recursive}.}\\
                &\in \left(1 \pm \frac{2}{\epsilon d}\right)^{2(i-1)+1} (1-\epsilon) b'_{i-1} \pm \left(1 \pm \frac{2}{\epsilon d}\right)^{2(i-1)} \frac{a'_{i-1}}{d} \tag{By the induction hypothesis.}\\
                &\in \left(1 \pm \frac{2}{\epsilon d}\right)^{2i} (1-\epsilon) b'_{i-1} \tag{Since $a'_{i-1} \leq b'_{i-1}/\epsilon$ for $i \geq 4$.}\\
                &\in \left(1 \pm \frac{2}{\epsilon d}\right)^{2i} a'_i. \tag{By \cref{eq:bpap}.}
            \end{flalign*}
            This completes the proof of \cref{cl:explicit}.
        \end{proof}
        
        Observe that $a_1 + b_1 = 1$ since $u_1 \not\in S$ and so either $u_1 \in A$ or $u_1 \in B$. Combined with \cref{cl:explicit}, we get that 
        \begin{flalign*}
            b_{\ell'} &\in \left( 1 \pm \frac{2}{\epsilon d} \right)^{2\ell'} \frac{(1-  \epsilon)(a_1 + b_1) + (\epsilon - 1)^{\ell'} (a_1 - (1-\epsilon)b_1)}{(2-\epsilon)(1-\epsilon)}\\
            &\in \left( 1 \pm \frac{2}{\epsilon d} \right)^{2\ell'} \frac{1-\epsilon + (\epsilon - 1)^{10\log n /\epsilon}(a_1 - (1-\epsilon)b_1)}{(2-\epsilon)(1-\epsilon)} \tag{Since $\ell' \geq 10\log n \epsilon$.}\\
            &\in \left( 1 \pm \frac{2}{\epsilon d} \right)^{2\ell'+1} \frac{1-\epsilon}{(2-\epsilon)(1-\epsilon)}\\
            &= \left( 1 \pm \frac{1}{d} \right)^{O(\ell')} \frac{1}{2-\epsilon}.
        \end{flalign*}
        Observe that the conditioning $P(u_1)$ determines the values of $a_1$ and $b_1$. However, for large enough $\ell'$, as we see above, the dependence of $b_{\ell'}$ on $a_1$ and $b_1$ vanishes. In other words, by changing the conditioning $P(u_1)$ the value of $b_{\ell'}$ for $\ell' \geq 10\log n / \epsilon$ only changes by a $(1 \pm 1/d)^{O(\ell')}$ factor. The same also holds for $a_{\ell'}$. As such, whether vertex $u_{\ell'}$ belongs to $A$ or $B$ is essentially independent of the events above the root vertex $u_1$. To see why this implies \cref{cl:cor-decay-on-paths}, take for example the \yesdist{} distribution and note that
        \begin{flalign*}
            \Pr_{\yesdist{}}[T(u_{\ell'}) \mid P(u_1)] &= \Pr_{\yesdist{}}[u_{\ell'} \in A \mid P(u_1)] \cdot \Pr_{\yesdist{}}[T(u_\ell') \mid u_{\ell'} \in A]\\
            &+ \Pr_{\yesdist{}}[u_{\ell'} \in B \mid P(u_1)] \cdot \Pr_{\yesdist{}}[T(u_\ell') \mid u_{\ell'} \in B]\\
            &= a_{\ell'} \Pr_{\yesdist{}}[T(u_\ell') \mid u_{\ell'} \in A] + b_{\ell'} \Pr_{\yesdist{}}[T(u_\ell') \mid u_{\ell'} \in B].
        \end{flalign*}
        Since $a_{\ell'}$ and $b_{\ell'}$ remain the same up to a $(1 \pm 1/d)^{O(\ell')}$ factor by changing the conditioning $P(u_1)$ to $P'(u_1)$, we arrive at the desired inequality of \cref{cl:cor-decay-on-paths} for the \yesdist{} distribution. The proof for \nodist{} is exactly the same.
    \end{proof}

\subsection{Limitation of the Algorithm}

Let us define $E_A$ to be the set of all discovered edges of $G[A]$ by the algorithm. Also, let $V_A = \{ v \mid (u, v) \in E_A\}$, where here $(u, v)$ is directed and $u$ is the parent of $v$ in the forest of discovered edges.

\begin{claim}\label{clm:few-bs-edges}
    With high probability, any algorithm $\mc{A}$ that makes at most $o(n^{6/5} / \log^2 n)$ queries, discovers at most $o(d / \log^2 n)$ of edges in subgraph $G'$ with one endpoint in $S$.
\end{claim}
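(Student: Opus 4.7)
The plan is to mimic the proof of \cref{clm:few-edges}, exploiting the fact that in $G'$ the only edges with an endpoint in $S$ are the $2N$ perfect-matching edges between $B$ and $S$: each vertex of $B$ has exactly one $S$-neighbor (out of its $\epsilon N + d + 1$ neighbors in $G$), each vertex of $S$ has exactly one $B$-neighbor (out of its $\epsilon N + 1$ neighbors), and no $A$ vertex has an $S$-neighbor in $G'$. Vertices in $T$ are irrelevant because $T$-$S$ edges do not lie in $G'$. Hence every query to a vertex $v \in B \cup S$ discovers at most one edge incident to $S$, namely the unique one, and the fraction of positions in the adjacency list of $v$ that correspond to this edge is $O(1/n)$.

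Split the vertices into a ``heavy'' set $V_{\text{heavy}}$ of those that receive more than $\epsilon N/2$ queries and a ``light'' set $V_{\text{light}}$ of those that receive at most $\epsilon N/2$ queries. Since the total number of queries is $K = o(n^{6/5}/\log^2 n)$, we have
\[
|V_{\text{heavy}}| \;\leq\; \frac{K}{\epsilon N/2} \;=\; o\!\left(\frac{n^{1/5}}{\log^2 n}\right) \;=\; o\!\left(\frac{d}{\log^2 n}\right),
\]
and each such vertex can contribute at most one $S$-endpoint edge (since it has at most one $G'$-neighbor in $S$). So the total contribution of $V_{\text{heavy}}$ is already $o(d/\log^2 n)$.

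For $v \in V_{\text{light}}$, having made at most $\epsilon N/2$ queries so far leaves at least $\deg(v) - \epsilon N/2 \geq \epsilon N/2$ unrevealed positions in the uniformly permuted adjacency list, and at most one of them corresponds to the $S$-neighbor. Thus the conditional probability that the next query to $v$ reveals an $S$-endpoint edge is at most $2/(\epsilon N) = O(1/n)$. Letting $Y_i$ be the indicator that the $i$-th query of the algorithm (restricted to queries against light vertices) reveals an $S$-endpoint edge, we have $\E[Y_i] = O(1/n)$ and these indicators are negatively correlated in the standard way (revealing an $S$-neighbor at one position of a vertex only decreases the chance of later queries to the same vertex revealing an $S$-neighbor). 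Therefore $Y := \sum_i Y_i$ satisfies
\[
\E[Y] \;\leq\; K \cdot O(1/n) \;=\; o\!\left(\frac{n^{1/5}}{\log^2 n}\right) \;=\; o\!\left(\frac{d}{\log^2 n}\right),
\]
and a Chernoff bound (\cref{prop:chernoff}) for negatively correlated indicators gives $Y \leq O(\E[Y] + \log n) = o(d/\log^2 n)$ with probability $1 - 1/\poly(n)$. Adding the $V_{\text{heavy}}$ and $V_{\text{light}}$ contributions yields the bound.

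The only subtle step is justifying that the per-query probability really is $O(1/n)$ even after conditioning on previous answers; this is what requires restricting to light vertices, since for a vertex close to fully queried the ``remaining denominator'' shrinks toward $1$ and the bound would fail. The heavy/light split sidesteps this issue in exactly the same manner as \cref{clm:few-edges}, and no new ideas are needed beyond that decomposition.
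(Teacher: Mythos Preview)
Your proof is correct and follows essentially the same approach as the paper: the same heavy/light split at the $\epsilon N/2$ threshold, the same $O(1/n)$ per-query bound for light vertices, and the same Chernoff-style tail bound. Your write-up is in fact a bit more explicit (identifying the $B$--$S$ matching as the only relevant edges and noting that each vertex has at most one $S$-neighbor), but the argument is the one the paper gives.
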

\begin{proof}
    The proof is similar to the proof of \Cref{clm:few-edges}. Since we assume $\mc{A}$ makes at most $o(n^{6/5} / \log^2 n)$ queries, there will be at most $o(n^{1/5} /\log^2 n)$ vertices for which the algorithm makes more than $\epsilon N/2 = \Omega(n)$ adjacency list queries. For these vertices, we assume that $\mc{A}$ finds its edge in $G'$ with one endpoint in $S$ (note that the degree of $S$ vertices in $G'$ is one and each vertex of $G'$ is connected to at most one vertex of $S$).
    
    Now let $V'$ be the set of vertices for which $\mc{A}$ makes at most $\epsilon N /2$ queries.  For each new query to a vertex $v \in V'$, since there are $\epsilon N$ edges to $T_U \cup T_V$ in $G$ and that we have already made at most $\epsilon N/2$ queries to $v$, the new query goes to a vertex in $S$ with probability at most $O(1/n)$. Hence, by applying the Chernoff bound, with probability at least $1-1/\poly(n)$, there are at most $o(d / \log^2 n)$ edges in subgraph $G'$ with one endpoint in $S$.
\end{proof}

\begin{lemma}\label{lem:always-few-in-subtree}
    Suppose that the algorithm $\mc{A}$ has made $t$ queries for some $t \leq O(n^{6/5} / \log^2)$. Then with probability of $1 - o(1)$, all the following hold:
    \begin{enumerate}
        \item[(i)] $\mc{A}$ has found at most $o(d)$ vertices in total for all subtrees $T(v)$ for $v \in V_A$ up to a distance $20(\log^2 n)/\epsilon$ from root $v$.
        \item[(ii)] $\mc{A}$ has not found any edge in $G'$ with one endpoint in $S$ in subtree $T(v)$ up to distance $20(\log^2 n)/\epsilon$ from root $v$, for all $v \in V_A$.
        \item[(iii)] Conditioning on what $\mc{A}$ has queried so far, the probability of each edge in $F_t$ belonging to $G[A]$ is $O(1/d)$.
    \end{enumerate}
\end{lemma}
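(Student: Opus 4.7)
The plan is to prove (iii) first and then derive (i) and (ii) by linearity of expectation together with Markov's inequality. For (iii), I would fix an edge $e=(u,v)\in F_t$ and consider the posterior probability that $e\in G[A]$ given the algorithm's view. At the moment $e$ was discovered, say by querying $u$, the random ordering of $u$'s adjacency list together with the degree bookkeeping from \cref{obs:degrees} shows that, conditioned on any label of $u\in\{A,B\}$ and the history so far, the discovered neighbor $v$ lands in $A$ with probability at most $1/(d+1)=O(1/d)$: each vertex has at most one $A$-neighbor (the special-matching edge) among its $d+1$ neighbors in $G'$, and by the remark after \cref{lem:forest} the internal permutation of $G'$-neighbors is uniform. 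Averaging over $u$'s label gives $\Pr[e\in G[A]\mid \text{history at discovery}]=O(1/d)$. To promote this to a bound conditional on the entire final observation, I would reuse the swap/coupling argument of \cref{clm:no-edge-between-edge}: exchange $e$ with any of the $\Omega(N)$ undiscovered edges of the regular graphs between $A$ and $B$ whose endpoints are consistent with the algorithm's forest; the ratio of $A$-$A$ to non-$A$-$A$ configurations is $O(1/d)$.

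Next, I would strengthen the edge-counting bounds under the tighter budget $t\le O(n^{6/5}/\log^2 n)$. Rerunning the Chernoff calculation in the proof of \cref{clm:few-edges} with this smaller $K$ yields $|F_t\cap G'|=o(n^{2/5}/\log^2 n)$ w.h.p., while \cref{clm:few-bs-edges} already bounds the $S$-touching edges of $F_t$ by $o(d/\log^2 n)$.

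For (i), set $L:=20(\log^2 n)/\epsilon$ and, for each $u\in F_t$, let $X_u$ be the indicator that $u$ lies in the depth-$L$ truncation of $T(v)$ for some $v\in V_A$. Equivalently, $X_u=1$ iff at least one of the at most $L$ edges on the path from $u$ upward in its tree of $F_t$ belongs to $G[A]$. By (iii) and a union bound over those $L$ edges, $\Pr[X_u=1]\le L\cdot O(1/d)=O(\log^2 n/d)$. Summing over $u\in F_t$ and using $d=n^{1/5}$ yields an expected total of $o(n^{2/5}/\log^2 n)\cdot O(\log^2 n/n^{1/5})=o(n^{1/5})=o(d)$, so Markov's inequality forces $\sum_u X_u=o(d)$ with probability $1-o(1)$, which is exactly (i). For (ii), the same path argument applied to each of the $o(d/\log^2 n)$ $S$-touching edges gives each one probability $O(\log^2 n/d)$ of lying in some depth-$L$ $V_A$-subtree, so the expected number of bad $S$-edges is $o(d/\log^2 n)\cdot O(\log^2 n/d)=o(1)$ and Markov forces it to $0$ with probability $1-o(1)$.

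I expect the main technical obstacle to be the coupling in (iii): the posterior conditions on the entire observed forest and on the uniformly random adjacency list positions, and some care is required to certify that this conditioning does not boost the probability any specific edge of $F_t$ is $A$-$A$ beyond $O(1/d)$. Once (iii) is in hand, the derivations of (i) and (ii) are routine first-moment computations relying only on the refined edge counts above.
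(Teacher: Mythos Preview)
Your derivations of (i) and (ii) from (iii) via linearity of expectation and Markov are fine and match the spirit of the paper's counting. The gap is in (iii).

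The swap/coupling of \cref{clm:no-edge-between-edge} controls the probability that a fixed pair $(u,v)$ is an \emph{edge} of $G'$; it says nothing about the posterior on the \emph{label} of $v$ once $(u,v)$ has been discovered and the algorithm has additionally observed the entire subtree $T(v)$ hanging below $v$. That subtree can carry real information about $v$'s label: for instance, if a child of $v$ turns out to lie in $S$ then $v\in B$ with certainty, and more generally the branching pattern of $T(v)$ shifts the likelihood ratio $\Pr[T(v)\mid v\in A]/\Pr[T(v)\mid v\in B]$ away from $1$. Your ``at discovery time'' bound of $O(1/d)$ is only the prior $\Pr[v\in A\mid u\in A]$; promoting it to the full posterior requires showing this likelihood ratio is $1+o(1)$, and an edge-swap that leaves labels fixed cannot do that.

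This is exactly what the correlation decay lemma (\cref{lem:cor-decay}) delivers, but only under the hypotheses that $T(v)$ is $\ell$-$S$-free and has $Q=o(d)$ vertices within depth $\ell$ --- which are precisely (ii) and (i). So (iii) cannot be established before (i) and (ii); the paper handles the circularity by induction on $t$: (i) and (ii) at step $t$ follow from (iii) at step $t-1$ by the first-moment argument you wrote down, and then (iii) at step $t$ is obtained by applying \cref{lem:cor-decay} to the children $v_1,\dots,v_r$ of $u$ (with $r=\Theta(d)$), concluding that each profile assigning the unique $A$-neighbor to a different $v_i$ has the same probability up to $(1+o(1))$, hence each child is the $A$-neighbor with probability $O(1/d)$. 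Your linear plan ``prove (iii), then (i) and (ii)'' does not accommodate this dependency, and the coupling you propose is not a substitute for correlation decay.
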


\begin{proof}
    We prove the lemma using induction. For $t = 0$, trivially the statement is true. Now assume that the lemma holds for $t - 1$ and $\mc{A}$ makes a new query. If the new queried edge is an edge to $T_U \cup T_V$, we are done since none of the conditions in the lemma statement will change. So we assume that the newly queried edge is in $G'$. We prove each of the three claims separately.

    \paragraph{Induction step for (ii):} if the newly queried edge is between $B$ and $S$, the probability that one of its $20\log^2 n /\epsilon$ most recent predecessors  
    is an edge in $G[A]$ is $O(\log^2 n)/ d)$ using the induction hypothesis (iii). Since $\mc{A}$ can discover at most $o(d / \log^2 n)$ edges in $G'$ with one endpoint in $S$ by \Cref{clm:few-bs-edges}, then with probability at least $1- o(\frac{\log^2 n}{d} \cdot \frac{d}{\log^2 n}) = 1 - o(1)$, the statement of (ii) remains true throughout all the steps of the induction.

    \paragraph{Induction step for (i):} the probability that at least one of the $20\log^2 n /\epsilon$ most recent predecessors of the newly queried edge is an edge in $G[A]$ is $O(\log^2 n / d)$ using the induction hypothesis (iii). Since $\mc{A}$ discovers at most $o(n^{2/5} / \log^2 n) = o(d^2/\log^2 n)$ edges of $G'$ by \Cref{clm:few-edges}, then with probability at least $1 - o(\frac{\log^2 n }{d^2} \cdot \frac{d^2}{\log^2 n})= 1 - o(1)$, the statement of (i) remains true throughout all the steps of the induction.

    \paragraph{Induction step for (iii):} let $(u, v)$ be a discovered edge in the forest ($u$ is parent of $v$). First, we condition on $u \in A$, otherwise, the probability of $(u,v)$ being an edge in $G[A]$ is zero. Note that $u$ has at least $d$ neighbors in $G'$ that are not the parent of $u$ in the forest $F_t$. By \Cref{clm:few-bs-edges}, \mc{A} discovers at most $o(d)$ edges in subgraph $G'$ with one endpoint in $S$. Hence, at most $o(d)$ of neighbors of $u$ in $G'$ have an edge with one endpoint in $S$ in their subtree. Furthermore, by \Cref{clm:few-edges}, at least $\Theta(d)$ neighbors of $u$ in $G'$, have at most $o(d)$ vertices in their subtree. Also, note that by proof of \Cref{lem:forest}, neighbors of $u$ in $G'$ that are not adjacent to $u$ in the forest are singleton vertices in the forest. Let $v_1, v_2, \ldots, v_r$ be the union of (1): children of $u$ in the forest such that each of them has no edge with one $S$ endpoint in their subtree up to distance $20 \log^2 n /\epsilon$, and each of them has $o(d)$ vertices in their subtree up to distance $20 \log^2 n /\epsilon$, (2): the neighbors of $u$ in $G'$ that are singleton in the forest. Hence, we have $r = \Theta(d)$.

    By (i) and (ii), if $v \in A$ then there are at most $o(d)$ vertices in subtree $T(v)$ up to distance $20 \log^2 n /\epsilon$ from root $v$, and there is no edge in subtree $T(v)$ with one endpoint in $S$ up to distance $20 \log^2 n /\epsilon$ from the root. Thus, if $v \notin \{v_1, v_2, \ldots v_r\}$, then $(u,v)$ is not an edge in $G[A]$. Now assume that $v \in \{v_1, v_2, \ldots v_r\}$. Note that in \nodist{}, there is no neighbor of $u$ with label $A$ and in \yesdist{}, there is exactly one neighbor in $A$. Therefore, for \nodist{}, the probability of $v \in A$ is zero. Now assume that the graph is drawn from \yesdist{}. Hence, exactly one of $v_1, v_2, \ldots, v_r$ is in $A$ (if parent of $u$ is $A$ then the probability of $(u, v)$ being in $G[A]$ is zero). We prove that each of $v_i$ can be in $A$ with almost the same probability using a coupling argument.

    Let $c_i$ be the subset that vertex $v_i$ belongs to ($c_i \in \{A, B\})$. We say $C = (c_1, c_2, \ldots, c_r)$ is a {\em profile} for labels of vertices $v_1, \ldots v_r$. Therefore, exactly one of $c_i$ is equal to $A$ and all others are equal to $B$ which implies that there are $r$ different possible profiles. Assume that $C$ and $C'$ are two different profiles. Let $v_i$ be a vertex with label $A$ in $C$ and $v_j$ be a vertex with label $A$ in $C'$. By \Cref{lem:cor-decay}, the probability of sampling subtree below $v_i$ with label $B$ is the same up to a factor of $\left(1 \pm \frac{1}{d}\right)^{O(Q(T(v_i), \ell))}$. Similarly, the probability of sampling subtree below $v_j$ with label $A$ is the same up to a factor of $\left(1 \pm \frac{1}{d}\right)^{O(Q(T(v_j), \ell))}$. Since $Q(T(v_j), \ell) = o(d)$, then the probability of having profile $C$ and $C'$ is the same up to a $(1 + o(1))$ factor. Therefore, the probability of having one specific $v_i$ to be in $A$ is $O(1/d)$ since $r = \Theta(d)$.
\end{proof}

\subsection{Indistinguishability of the \yes{} and \no{} distributions}

We define a \emph{bad event} to be the event of $\mc{A}$ discovering more than $o(d)$ vertices in total for all subtrees $T(v)$ for $v \in V_A$ up to a distance $20 \log^2 n /\epsilon$ from root $t$, or $\mc{A}$ has found at least one edge in $G'$ with one endpoint in $S$ in subtree $T(v)$ up to distance $20 \log^2 n /\epsilon$ from root $v$ for at least one $v \in V_A$. By \Cref{lem:always-few-in-subtree}, the bad event happens with probability $o(1)$. For the next lemma, we condition on not having a bad event.

\begin{lemma}\label{lem:same-distribution}
    Let us condition on not having the bad event defined above. Let $F$ be the final forest found by algorithm $\mc{A}$ on a graph drawn from \yesdist{} after at most $O(n^{6/5}/\log^2 n)$ queries. Then, the probability of querying the same forest in a graph that is drawn from \nodist{} is at least almost as large, up to  $1 + o(1)$ multiplicative factor.
\end{lemma}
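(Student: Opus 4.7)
The plan is to decompose $\Pr_\mathcal{D}[F]$ as a product of per-query conditional probabilities via the chain rule, and then compare the per-query factors between $\yesdist$ and $\nodist$. Since $\mathcal{A}$ is deterministic, the sequence of queries $q_1, \ldots, q_t$ is a function of the observed outcomes $o_1, \ldots, o_{t-1}$, so I can write $\Pr_\mathcal{D}[F] = \prod_{i=1}^t \Pr_\mathcal{D}[o_i \mid o_{<i}]$ for each $\mathcal{D} \in \{\yesdist, \nodist\}$ and bound the ratio $\Pr_\yesdist[F] / \Pr_\nodist[F]$ term-by-term.

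For each query $q_i$ to a vertex $v$, I would split the analysis based on the outcome type. Queries whose outcome is a $T$-vertex contribute identical conditional probabilities under the two distributions (since $T$-edges are common to both), and so contribute ratio $1$. For queries whose outcome is an $S$-vertex or a non-$T$-non-$S$-vertex, I would condition on the hidden label of $v$ (i.e., $v \in A$ or $v \in B$) and invoke \cref{lem:cor-decay} to argue that the conditional label distribution given the observed history is nearly the same under both distributions, up to a $(1 + 1/d)^{O(Q)}$ factor, where $Q$ is the size of the relevant subtree. Under the non-bad-event conditioning assumed in the lemma hypothesis, \cref{lem:always-few-in-subtree} ensures that for vertices in $V_A$ (the endpoints of discovered $G[A]$-edges), the surrounding subtree has size $Q = o(d)$ and is $S$-free up to depth $20 \log^2 n / \epsilon$, so \cref{lem:cor-decay} applies with error factor $(1 + 1/d)^{o(d)} = 1 + o(1)$ per invocation. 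Conditional on the label of $v$, the probability that the $p$-th entry of $v$'s adjacency list is a specific non-$T$-non-$S$ vertex $u$ then differs between the two distributions only through the special matching at $v$, which affects the count of $v$'s $A$- and $B$-neighbors by at most one.

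The main obstacle is in carefully accumulating the per-query errors to reach a total product ratio of $1 + o(1)$. A naive per-query bound of $O(1/d)$ multiplied over the $o(n^{2/5})$ non-trivial queries would blow up to $\exp(o(n^{1/5}))$ and fail. The refined argument must exploit \cref{lem:always-few-in-subtree}(iii): the probability of any given discovered edge being in $G[A]$ is only $O(1/d)$, so $|V_A|$ is small in expectation, and for each such $V_A$-vertex the correlation-decay lemma provides a tight per-vertex (rather than per-query) error bound of $1 + o(1)$. For queries not involving $V_A$-vertices, the outcome distributions are structurally nearly identical (differing only through the single special matching edge among $d+1$ non-$T$-non-$S$ neighbors), so the per-query contribution is $1 + O(1/d^2)$ rather than $1 + O(1/d)$, and summing these small contributions over all $o(n^{2/5})$ non-trivial queries gives a negligible total deviation. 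Putting the pieces together yields $\Pr_\yesdist[F] \leq (1 + o(1)) \Pr_\nodist[F]$, which is equivalent to the statement of the lemma.
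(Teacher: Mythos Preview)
Your chain-rule decomposition is a natural starting point, but the argument has a genuine gap at the step where you assert that ``for queries not involving $V_A$-vertices \ldots\ the per-query contribution is $1 + O(1/d^2)$.'' This bound is what makes your accounting close (since $o(n^{2/5}) = o(d^2)$ non-trivial queries times $O(1/d^2)$ gives $o(1)$), but you do not justify it, and I do not believe it follows from the tools available. Conditioned on the label of the queried vertex $v$, the special matching edge changes one of $v$'s $d+1$ non-$T$ neighbors between the two distributions; for a fixed observed neighbor $u$, this alters the conditional probability that $(v,u)$ is an edge by a multiplicative $1 \pm O(1/d)$ factor, not $1 \pm O(1/d^2)$. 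Moreover, the posterior over $v$'s label given the history $o_{<i}$ can itself differ between $\yesdist$ and $\nodist$ by $\Theta(1/d)$. Neither source of discrepancy is obviously $O(1/d^2)$, and with only a $1 + O(1/d)$ per-query bound your product over $o(d^2)$ queries explodes to $\exp(o(d))$.

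The deeper issue is that \cref{lem:cor-decay} is a \emph{within-distribution} statement: it says that under $\yesdist$ (resp.\ $\nodist$), the subtree law below $v$ is nearly independent of the conditioning above $v$. It does \emph{not} directly compare $\Pr_{\yesdist}[\,\cdot \mid o_{<i}]$ to $\Pr_{\nodist}[\,\cdot \mid o_{<i}]$, which is what your per-query ratio needs. The paper bridges the two distributions via a hybrid argument over forest levels: $\mathcal{D}_i$ samples from $\yesdist$ down to level $i$ and from $\nodist$ thereafter, so adjacent hybrids differ only in the label of level-$i$ special-edge endpoints, and below that level both hybrids use $\nodist$. This is exactly the setting where \cref{lem:cor-decay} applies (with $P$ and $P'$ differing in the root's label, both under $\nodist$), yielding a $(1+1/d)^{O(Q)}$ factor per special edge. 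Summing $Q$ over all special edges gives $o(d)$ by the no-bad-event hypothesis, hence a total factor of $(1-1/d)^{o(d)} = 1 - o(1)$. Your proposal is missing this hybridization step, which is what localizes the $\yesdist$/$\nodist$ discrepancy to the special edges and allows the correlation-decay lemma to be invoked legitimately.
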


\begin{proof}
    First, we define $n$ hybrid distributions $\mc{D}_0, \mc{D}_1, \ldots, \mc{D}_{n-1}$ as follows. Distribution $\mc{D}_i$ can be obtained by sampling from \yesdist{} until the $i$-th level in any tree in the forest, and  then sampling from \nodist{} below the $i$-th level. Hence, $\mc{D}_0$ is exactly the same as \nodist{} and $\mc{D}_{n-1}$ is the same as \yesdist{}.
    Our goal is, starting from a forest sampled according $\mc{D}_{n-1} = \yesdist{}$, to inductively show that we can switch from $\mc{D}_i$ to $\mc{D}_{i-1}$ with only negligble total decrease in the probability.
    
    To formalize our coupling argument, recall the notion of {\em special edges} from the input distribution: In \yesdist{}, edges in $G[A]$ are special; in \nodist{}, every vertex in $A$ has one special edge to $B$ all forming a matching. We also extend the definition of bad events to \nodist{} and hybrid distribuitons to include bad events in subtrees originating from special edges.
    
    Consider the forest $F$ found by algorithm $\mc{A}$ on a graph drawn from \yesdist{}, and let $F^{\le i}$ denote the forest as well as any choice of  its special edges in all levels ${\le i}$ (arbitrarily, conditioning on not having a bad event).  We compare the probability of seeing $F^{\le i}$ when making the same queries when the oracle samples its answers from $\mc{D}_{i}$ vs $\mc{D}_{i-1}$. We can couple the sampling for the two distributions so that it is identical for everything in levels ${<i}$ (including the choice of special edges), and also for levels $\ge i$ in all the sub-trees that are not descendants of special level-$i$ edges.
    
    For each special edge $(u \rightarrow v)$ in the $i$-th level, the label of vertex $v$ is $A$ when sampling from $\mc{D}_{i}$ and  $B$ when sampling from  $\mc{D}_{i-1}$. Below this vertex (aka levels $>i$), both distributions $\mc{D}_{i}$ and $\mc{D}_{i-1}$ sample according to \nodist{}. 
    Thus, by~\cref{lem:cor-decay}, the probability of sampling the subtree below $v$ is the same regardless of the label of $v$, up to a factor of  $\left(1 \pm \frac{1}{d}\right)^{O(Q(T(v_i), \ell))}$.
    
    Let $V_{A,i}$ denote the set of vertices pointed to by $i$-th level special edges. By the argument in the previous paragraphs, we have
    \begin{align}\label{eq:special-edges}
       \Pr[F^{\le i} |  F^{\le i} \sim \mc{D}_{i-1} ] & \geq \Pr[F^{\le i} | F^{\le i} \sim \mc{D}_i] \left(1 - \frac{1}{d}\right)^{\sum_{v\in V_{A,i}} O(Q(T(v), \ell))}.
    \end{align}

    Let $\mc{G}^{\le i}$ denote the event of having no bad events corresponding to level-$(\le i)$ special edges. Summing over~\eqref{eq:special-edges} for all valid choices of special edges, we have the following:
    \begin{align*}
       \Pr[F \wedge \mc{G}^{\le i-1} |  F \sim \mc{D}_{i-1} ] & \geq \Pr[F \wedge \mc{G}^{\le i} |  F \sim \mc{D}_{i-1} ] \\
       & \geq \Pr[F \wedge \mc{G}^{\le i} | F \sim \mc{D}_i] \cdot \left(1 - \frac{1}{d}\right)^{\sum_{v\in V_{A,i}} O(Q(T(v), \ell))}.
    \end{align*}

    We can now bound  the total distribution shift across all hybrid steps:
    \begin{flalign*}
       \Pr[F |  F \sim \mc{D}_{0}] & \geq \Pr[F \wedge \mc{G}^{\le n-1} | F \sim \mc{D}_{n-1}] \cdot \prod_{i=0}^{n-1}   \left(1 - \frac{1}{d}\right)^{\sum_{v\in V_{A,i}} O(Q(T(v), \ell))}\\
       & = \Pr[F \wedge \mc{G}^{\le n-1} | F \sim \mc{D}_{n-1}] \cdot \left(1 - \frac{1}{d}\right)^{\sum_{v\in V_{A}} O(Q(T(v), \ell))} \\
       & \geq \Pr[F \wedge \mc{G}^{\le n-1} | F \sim \mc{D}_{n-1}] \cdot \left(1 - \frac{1}{d}\right)^{o(d)} \tag{By \Cref{lem:always-few-in-subtree}.} \\
       & \geq \Pr[F \wedge \mc{G}^{\le n-1} | F \sim \mc{D}_{n-1}] \cdot (1-o(1)).
    \end{flalign*}
This completes the proof of \cref{lem:same-distribution}.
\end{proof}

\begin{proof}[Proof of \Cref{thm:lb}]
    Note that the probability of having the bad event that we defined is $o(1)$. Conditioning on not having the bad event, by \Cref{lem:same-distribution}, the distribution of the outcome that the algorithm discovers is in $o(1)$ total variation distance for \yesdist{} and \nodist{}. Therefore, the algorithm is not able to distinguish between the support of two distributions with constant probability. According to our construction, the size of the maximum matching of both \yesdist{} and \nodist{} is $\Theta(n)$, and in both cases, the input graph is bipartite.
\end{proof}

\section{An (Almost) 2/3-Approximation}\label{sec:23-algo}

Throughout this section, we introduce an algorithm that achieves an almost 2/3 approximation in sublinear time in both the adjacency list and matrix models, thereby proving \cref{thm:2/3}. Specifically, we prove the following theorems, which further formalize \cref{thm:2/3} of the introduction.

\begin{restatable}{theorem}{apprxtheoremA}\label{thm:apprx23theorem}
    For any constant $\epsilon > 0$, there exists an algorithm that estimates the size of the maximum matching in $\widetilde{O}_\epsilon(n^{2-\epsilon^3})$ time up to a multiplicative-additive factor of $(\frac{2}{3} - \epsilon, o(n))$ with high probability in both adjacency matrix and adjacency list model. 
\end{restatable}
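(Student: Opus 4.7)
\textbf{Plan for \cref{thm:apprx23theorem}.} Following the outline in \cref{sec:techniques}, the plan is to combine the edge-degree constrained subgraph (EDCS) with random edge sub-sampling and a local computation algorithm (LCA) for near-maximum matching. Fix a small constant $\epsilon>0$, take $\beta=\Theta(1/\epsilon)$ large enough so that any $\beta$-EDCS of a graph contains a $(1-O(\epsilon))\cdot\tfrac{2}{3}$-approximate matching of that graph \cite{bernsteinstein2015,AssadiB19,BehnezhadEdmondsGallai}, and set a sub-sampling rate $p=n^{-\delta}$ for $\delta=\Theta(\epsilon^{3})$. First I would generate a random subgraph $G_{p}$ of $G$ that keeps each edge independently with probability $p$: in the matrix model by examining a random $p$-fraction of the $\binom{n}{2}$ pairs, and in the list model by reading a random $p$-fraction of each adjacency list. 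Both take $\widetilde{O}(n^{2}p)=\widetilde{O}(n^{2-\delta})$ time. On $G_{p}$ I would then greedily construct a $\beta$-EDCS $H$ by the standard local fix-up: while there exists an edge of $G_{p}$ that is either overfull in $H$ (remove it) or underfull in $H$ (add it), update $H$; this terminates with $|H|=O(n\beta)$.

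Next I would argue that although $H$ is only an EDCS of $G_{p}$, the set of edges of the \emph{full} graph $G$ that are ``globally underfull'' with respect to $H$, i.e.\ $U:=\{(u,v)\in G\setminus H:\deg_{H}(u)+\deg_{H}(v)<(1-\epsilon)\beta\}$, is small with high probability. Adapting the random-order streaming analysis of \citet*{bernsteinstreaming2020}, since the sampled edges $G_{p}$ are a uniformly random $p$-fraction of $E(G)$, a standard concentration argument yields $|U|\le \widetilde{O}(\mu(G)/p)=\widetilde{O}(n^{1+\delta})$ with high probability, while $H$ itself is simultaneously a valid EDCS of $G$ restricted to $H\cup U$. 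Therefore the sparser graph $H\cup U$ contains an $(1-O(\epsilon))\cdot\tfrac{2}{3}$-approximate matching of $G$, and it suffices to estimate $\mu(H\cup U)$.

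Estimating $\mu(H\cup U)$ without ever materialising $U$ is handled by an LCA. I would first truncate $H\cup U$ by deleting every vertex whose degree in $H\cup U$ exceeds a threshold $\Delta=n^{O(\delta/\epsilon^{2})}$; since $|H|+|U|=\widetilde{O}(n^{1+\delta})$, a counting argument shows the deleted vertices can cover at most an $\epsilon$-fraction of $\mu(H\cup U)$, so the truncated graph $H'$ still contains an almost-$\tfrac{2}{3}$-approximate matching of $G$. I then invoke the LCA of \citet*{levironitt}, which on a graph of maximum degree $\Delta$ decides in $\poly_{\epsilon}(\Delta)$ time whether a given vertex belongs to some fixed $(1-\epsilon)$-approximate matching. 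The oracle access the LCA needs to $H'$ can be simulated using our explicit copy of $H$ plus on-the-fly membership tests for $U$: for any edge $(u,v)\in G\setminus H$ incident to a vertex touched by the LCA, we can test ``is $(u,v)\in U$?'' by looking up $\deg_{H}(u)+\deg_{H}(v)$. Sampling $\widetilde{O}(1/\epsilon^{2})$ vertices uniformly, running the LCA on each, and scaling the empirical matched fraction by $n/2$ gives a $(1-\epsilon)$-approximation of $\mu(H')$ with $o(n)$ additive slack by Hoeffding.

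The total running time is $\widetilde{O}(n^{2-\delta})$ for sampling and building $H$, plus $\widetilde{O}(n^{1+O(\delta/\epsilon^{2})})$ for the LCA queries; with $\delta=\epsilon^{3}$ this is $\widetilde{O}(n^{2-\epsilon^{3}})$. The overall approximation is $(1-O(\epsilon))\cdot(1-\epsilon)\cdot\tfrac{2}{3}\ge\tfrac{2}{3}-O(\epsilon)$, which can be absorbed into the constant in front of $\epsilon$. The main obstacle I expect is \textbf{Step 2}: adapting the Bernstein-style streaming analysis from the random-order streaming setting to our edge sub-sampling setting, and in particular establishing that $H$ can be constructed in the sparse subgraph while simultaneously controlling the number of globally underfull edges of $G$. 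A secondary subtlety is the simulation of the LCA's oracle on $H\cup U$ within the claimed query budget, which relies crucially on the degree-truncation step and on the fact that $H$ is stored explicitly so that membership queries for $U$ cost only $O(1)$ per edge examined.
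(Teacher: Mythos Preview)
Your overall strategy matches the paper's: sub-sample edges, build an EDCS-type subgraph $H$, bound the globally underfull edges $U$, and estimate $\mu(H\cup U)$ via the Levi--Ron LCA after degree truncation. You have also correctly flagged the key obstacle. However, the specific construction you describe for $H$---``build a $\beta$-EDCS of $G_p$ by local fix-up until no overfull/underfull edge of $G_p$ remains''---does not by itself yield the bound $|U|\le\widetilde O(\mu(G)/p)$. The problem is that $H$ then depends on \emph{all} of $G_p$, so you cannot argue ``if many edges of $G$ were underfull, a random sample would have hit one'': there is no fresh randomness left to run that concentration step against a fixed $H$.

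The paper (following \cite{bernsteinstreaming2020}) resolves this by sampling in $T=n\beta^2+1$ \emph{epochs} of $n^{1-\epsilon^3}$ edges each, and in each epoch running one pass of the insert-underfull / delete-overfull update. By the potential-function bound (at most $n\beta^2$ total changes to $H$), some epoch leaves $H$ unchanged; the edges sampled in that epoch are independent of the current $H$ and are all certified non-underfull, which is exactly the fresh test that drives the concentration bound on the number of underfull edges among the \emph{unsampled} edges. The approximation guarantee then comes from applying the relaxed-EDCS lemma (\cref{pro:large-matching-union-edcs}) to the unsampled graph, not from ``$H$ is an EDCS of $H\cup U$'' (that statement is false: edges in $U$ are underfull by definition).

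Two smaller points. First, your truncation threshold should be $\widetilde\Theta(n^{\delta})$, not $n^{O(\delta/\epsilon^2)}$; the latter is the LCA's per-query cost $\Delta^{O(1/\epsilon^2)}$, not the degree cutoff. Second, simulating the LCA's adjacency-list access to $H\cup U$ costs $O(n)$ per vertex touched (you must scan all $G$-neighbors and test membership), which is the source of the extra factor of $n$ in the per-query time; the paper makes this explicit in \cref{lem:lca-reduction} and \cref{cor: fast-low-deg-alg}. Your stated running time $\widetilde O(n^{1+O(\delta/\epsilon^2)})$ is consistent with this, but the text should say so.
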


\begin{restatable}{theorem}{apprxtheoremAlist}\label{thm:apprx23theoremlist}
    For any constant $\epsilon > 0$, there exists an algorithm that estimates the size of the maximum matching in $\widetilde{O}_\epsilon(n^{2-\epsilon^3})$ time up to a multiplicative factor of $(\frac{2}{3} - \epsilon)$ with high probability in the adjacency list model. 
\end{restatable}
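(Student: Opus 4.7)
I would follow the EDCS-based strategy sketched in the techniques overview: subsample a $p := n^{-\delta}$ fraction of the edges of $G$ for some small $\delta = \Theta(\epsilon^3)$, construct an EDCS $H$ on the sample, and then estimate $\mu(H \cup U)$, where $U$ is the (virtual) set of ``underfull'' edges of $G$ with respect to $H$. Step (1) will fit in $\widetilde{O}(n^{2-\delta})$ time, and the whole point of step (2) is to avoid ever materialising $U$, using an LCA to simulate neighbor queries on $H \cup U$ from cheap adjacency-list queries on $G$.

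\textbf{Subsampling and EDCS.} I would obtain $E_p \subseteq E(G)$ by enumerating adjacency lists and including each listed position independently with probability $p$; the expected query count is at most $n^2 p = n^{2-\delta}$. On $E_p$ I would run an EDCS builder in the spirit of Bernstein's random-order streaming algorithm, producing a $\beta$-EDCS $H$ of $E_p$ with $\beta = \Theta(1/\epsilon)$. The random-order analysis (adapted to uniform edge sampling) implies that with high probability the residual underfull set
\[
U := \bigl\{(u,v)\in E(G): \deg_H(u)+\deg_H(v) < (1-\epsilon)\beta\bigr\}
\]
has size $|U|=\widetilde{O}(\mu(G)/p)$, so that $H\cup U$ is an $(\epsilon$-slack$)$ $\beta$-EDCS of all of $G$. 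The EDCS approximation theorem of Bernstein--Stein, as sharpened by Assadi--Bernstein, then gives $\mu(H\cup U) \geq (2/3 - O(\epsilon))\mu(G)$, while $\mu(H\cup U) \leq \mu(G)$ is immediate.

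\textbf{Estimating $\mu(H\cup U)$ via an LCA.} Since we cannot afford to read $U$, I would instead plug into the vertex-sampling estimator: draw a random vertex $v$, use the Levi--Ron--Rubinfeld--Yoshida LCA (itself building on Yoshida--Yamamoto--Ito) to decide whether $v$ lies in some fixed $(1-\epsilon)$-approximate matching of $H\cup U$, and repeat. The LCA runs in $\poly(\Delta/\epsilon)$ time where $\Delta$ is the max degree of its input. To keep this cheap I would truncate: delete the set $V_{\mathrm{hi}}$ of vertices with $\deg_U(\cdot) > \Delta^\star := n^{O(\delta)}$. Since $\sum_v \deg_U(v) = 2|U| = \widetilde{O}(\mu(G)/p)$, we get $|V_{\mathrm{hi}}| = o(\mu(G))$, so this loses only a $(1-o(1))$ factor from the matching. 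A neighbor-oracle query of the truncated graph $(H\cup U)\setminus V_{\mathrm{hi}}$ at a vertex $v$ is served by enumerating $v$'s neighbors in $G$ and filtering out those that fail to satisfy ``edge of $H$ or underfull'' together with the $\Delta^\star$-bounded $U$-degree condition, all of which use only cheap local information.

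\textbf{Sampling, multiplicative guarantee, and main obstacle.} After a preliminary $\widetilde{O}(n)$-time constant-factor estimate $\widetilde{\mu}$ of $\mu(G)$ (for instance via Behnezhad's $1/2$-approximation), I would draw $s = \widetilde{\Theta}(n/\widetilde{\mu})$ uniform vertices, run the LCA on each, and output $n/2$ times the empirical matched fraction; a Chernoff bound turns this into a $(1\pm\epsilon)$-estimate of $\mu(H\cup U)$, hence a multiplicative $(2/3-\epsilon)$-estimate of $\mu(G)$. The total cost is $\widetilde{O}(n^{2-\delta}) + s\cdot \poly(\Delta^\star/\epsilon)$, which for $\delta = \Theta(\epsilon^3)$ is $\widetilde{O}(n^{2-\epsilon^3})$. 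I expect the main technical obstacle to be the joint calibration of $\beta$, $p$, and $\Delta^\star$: one needs $|U|/p$ small enough that the EDCS slack is absorbed in $\epsilon$; $\Delta^\star$ large enough that the $V_{\mathrm{hi}}$-truncation loses only a $(1-\epsilon)$ multiplicative factor (without the additive $o(n)$ slack that the adjacency-matrix variant enjoys); and simultaneously $\Delta^\star = n^{o(1)}$ so the LCA stays cheap. A secondary difficulty is arguing that the EDCS construction on a $p$-sampled (as opposed to random-order-streamed) edge set still leaves only $\widetilde{O}(\mu(G)/p)$ underfull edges, and that neighbor queries to $H\cup U$ really can be simulated at near-unit amortised cost using adjacency-list queries alone.
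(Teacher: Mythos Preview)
Your overall architecture---subsample edges, build a relaxed EDCS $H$, bound the underfull set $U$, truncate high-degree vertices, and estimate $\mu(H\cup U)$ via the Levi--Ron LCA---is exactly the paper's approach. The gap is in your running-time accounting, and it is not a technicality.

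You write the total cost as $\widetilde{O}(n^{2-\delta}) + s\cdot \poly(\Delta^\star/\epsilon)$ and hope that ``neighbor queries to $H\cup U$ really can be simulated at near-unit amortised cost.'' They cannot: to list the $U$-neighbours of a vertex $v$ you must scan $v$'s adjacency list in $G$, which can have length $\Theta(n)$, and the LCA may touch $\poly(\Delta^\star)$ distinct vertices per call. The paper makes this explicit (its \cref{lem:lca-reduction} and \cref{cor: fast-low-deg-alg}) and pays an extra factor of $n$ per LCA invocation, so each call costs $\widetilde{O}_\epsilon(n\cdot(\Delta^\star)^{1/\epsilon^2}) = \widetilde{O}_\epsilon(n^{1+\epsilon})$. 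With your choice $s=\widetilde{\Theta}(n/\widetilde{\mu})$ this gives total time $\widetilde{\Theta}(n^{2+\epsilon}/\mu(G))$, which exceeds $n^{2-\epsilon^3}$ whenever $\mu(G)$ is small.

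The paper closes this gap not by amortising but by a case split that is specific to the adjacency-list model: one can read all degrees (via binary search) and hence know $m$. If $m \le n^{1.99}$, read every edge and run a linear-time $(1-\epsilon)$-approximation. Otherwise, because any graph satisfies $m \le 2n\,\mu(G)$, we get $\mu(G) \ge n^{0.99}/2$, so $r = \widetilde{O}(n^{0.01})$ sampled vertices already suffice for a multiplicative estimate, and $r\cdot \widetilde{O}_\epsilon(n^{1+\epsilon})$ is well within budget. Your preliminary Behnezhad estimate $\widetilde{\mu}$ could be used for the same purpose---if $\widetilde{\mu}$ is small then $m=O(n\widetilde{\mu})$ is small and you can afford to read the whole graph---but you need to say so explicitly; as written, your algorithm does not terminate in time for small $\mu(G)$.
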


One key component of our algorithm is the {\em edge-degree constrained subgraph} (EDCS) of Bernstein and Stein \cite{bernsteinstein2015}. An EDCS is a maximum matching sparsifier that has been used extensively in the literature on maximum matching in different settings such as dynamic and streaming. It is known that an EDCS contains an almost 2/3-approximation of the maximum matching of the original graph. Formally, we can define EDCS as follows:

\begin{definition}[EDCS] \label{def:edcs}
For any $\lambda < 1$ and $\beta \geq 2$, subgraph $H$ is a $(\beta, (1-\lambda)\beta)$-EDCS of $G$ if
\begin{itemize}
    \item (Property P1:) for all edges $(u, v) \in H$, $\deg_H(u) + \deg_H(v) \leq \beta$, and
    \item (Property P2:) for all edges $(u, v) \in G \setminus H$, $\deg_H(u) + \deg_H(v) \geq (1-\lambda)\beta$.
\end{itemize}
\end{definition}

We use a more relaxed version of EDCS in our algorithm that first appeared in the literature on random-order streaming matching \cite{bernsteinstreaming2020}. Let $H = (V, E_H)$ be a subgraph of $G$ such that it satisfies the first property of the EDCS (P1) in \Cref{def:edcs} for some $\beta = \widetilde{O}(1)$, and $E_U$ be the set of all edges of $G$ that violate the second property of EDCS (P2). It is possible to show that the union of $E_H$ and $E_U$ contains an almost 2/3-approximation of the maximum matching of the original graph.

\begin{table}[]
\begin{center}
\caption{Subgraphs considered by algorithms in Sections~\ref{sec:23-algo} and~\ref{sec:beating}.}
\begin{tabular}{|l|l|l|}
\hline
Variable & Definition  &  Intuition \\
\hline\hline
$H$ & See \Cref{alg:matching23} & Relaxed EDCS \\
\hline
$G^{Unsampled}$ & Edges not sampled by \Cref{alg:matching23} & $\mu(G^{Unsampled}) \geq (1-2\epsilon)\mu(G)$ \\
\hline
$U$ & See \Cref{alg:matching23} & \begin{tabular}{@{}c@{}} Unsampled low $H$-degree edges \\(violating P1)\end{tabular} \\ 
\hline
$G'$ & $G' = (V, E_H \cup E_U)$ & \begin{tabular}{@{}c@{}}We can estimate $\mu(G')$, and also \\
                                                            $\mu(G')$ is a  $2/3$-approx of $\mu(G)$  \end{tabular}\\
                                                            \hline\hline
                                                            
$G''$   & $G'' =  G'[V_{low}, V_{mid}]$ & \begin{tabular}{@{}c@{}} Subgraph that contains 2/3-approx\\matching and is far from being maximal \end{tabular}\\
\hline
$G[A]$ & $A = V_{mid} \setminus V(M)$ & Edges in $G[A]$ can augment $\mu(G'')$\\
\hline
$G[V_{mid} \setminus V(M^i_{AB})]$ & $M^i_{AB}$ defined in \Cref{alg:matchingbetterthan23} & $G[V_{mid} \setminus V(M^i_{AB})]$ is non-trivially sparse\\
\hline
$G_{M_{AB}}$ & \begin{tabular}{@{}c@{}} $G_{M_{AB}} = (V_{mid}, \bigcup_i^k M_{AB}^i)$\\ defined in \Cref{alg:matchingbetterthan23} \end{tabular} & \begin{tabular}{@{}c@{}}Helps to augment $G''$ or remove \\vertices of $V_{mid} \setminus A$ in case 3 \\
\end{tabular}\\
\hline
\end{tabular}
\end{center}

\end{table}

\begin{definition}[Bounded edge-degree]
    A graph $H$ has a bounded edge-degree $\beta$ if for each edge $(u, v) \in H$, we have $\deg_H(u) + \deg_H(v) \leq \beta$.
\end{definition}

\begin{proposition}[Lemma 3.1 of \cite{bernsteinstreaming2020}, (Relaxed EDCS)]\label{pro:large-matching-union-edcs}
    Let $\epsilon \in [0, 1/2)$ and $\lambda, \beta$ be parameters such that $\lambda \leq \frac{\epsilon}{128}$, $\beta \geq \frac{16\log(1/\lambda)}{\lambda^2}$. Let $H = (V, E_H)$ be a subgraph of $G$ with bounded edge-degree $\beta$. Furthermore, let $\widetilde{H} = (V, E_U)$ be a subgraph that contains all edges $(u,v)$ in $G$ such that $\deg_H(u) + \deg_H(v) < (1-\lambda)\beta$. Then we have $\mu(H \cup \widetilde{H}) \geq (3/2-\epsilon) \cdot\mu(G)$.
\end{proposition}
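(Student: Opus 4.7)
The plan is to reduce the statement to the classical EDCS approximation theorem by extracting a true $(\beta, (1-\lambda)\beta)$-EDCS as a subgraph of $H \cup \widetilde{H}$. Once such an EDCS $H^*$ is produced, the standard structural result of Bernstein--Stein \cite{bernsteinstein2015} (with the quantitative parameters sharpened by Assadi--Bernstein) immediately yields $\mu(H^*) \geq (2/3 - \epsilon) \mu(G)$ under the stated regime $\lambda \leq \epsilon/128$ and $\beta \geq 16 \log(1/\lambda)/\lambda^2$, and the inclusion $H^* \subseteq H \cup \widetilde{H}$ finishes the proof. (We read the final bound as $(2/3-\epsilon)$, since $\mu(H \cup \widetilde H) \le \mu(G)$ forbids the printed $(3/2-\epsilon)$.)

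The construction of $H^*$ is greedy. Initialize $H^* := H$, which satisfies Property P1 by hypothesis. Then enumerate edges $e=(u,v) \in E_U$ in an arbitrary fixed order, and insert $e$ into $H^*$ if and only if doing so preserves Property P1 for every currently-present edge of $H^*$ (i.e.\ for $e$ itself and for every edge already incident to $u$ or $v$). Property P1 on the final $H^*$ holds by construction.

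To verify Property P2 for $H^*$, fix any edge $(x,y) \in G \setminus H^*$ and split into cases. If $(x,y) \notin \widetilde{H}$, then by the definition of $\widetilde{H}$ we have $\deg_H(x) + \deg_H(y) \geq (1-\lambda)\beta$, which lifts to $H^*$ since $H \subseteq H^*$. If instead $(x,y) \in \widetilde{H}$ but was rejected during the greedy scan, the rejection witness is either a direct P1-violation on $e$ itself (forcing $\deg_{H^*}(x)+\deg_{H^*}(y)$ already close to $\beta$ at the time of inspection) or a P1-violation on an adjacent edge $(x,w)$ or $(y,w)$ of $H^*$; in the latter subcase the adjacent edge being saturated at $\beta$ forces the corresponding endpoint to have high $H^*$-degree, and summing the degrees at termination (when the frozen endpoints no longer grow in the ways needed to push the witness below threshold) shows $\deg_{H^*}(x) + \deg_{H^*}(y) \geq \beta - O(1) \geq (1-\lambda)\beta$. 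The slack $\lambda\beta \geq 16\log(1/\lambda)/\lambda = \omega(1)$ afforded by the parameter regime absorbs the $O(1)$ loss.

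With Properties P1 and P2 both established, $H^*$ is a genuine $(\beta, (1-\lambda)\beta)$-EDCS of $G$, and the classical EDCS approximation theorem delivers the desired bound. I expect the technical heart of the argument to lie in the case analysis of Step~3: the rejection witness for a skipped $\widetilde{H}$-edge must be tracked between the moment of inspection and the end of the greedy procedure, since endpoint degrees in $H^*$ may only increase further. Handling this correctly requires choosing the acceptance rule so that a rejected edge's own endpoints (rather than its witnessing neighbor's endpoints) jointly reach near-$\beta$ degree, and exploits the fact that any edge $e \in \widetilde{H}$ has both endpoints of low $H$-degree, so there is ample room to drive them up through subsequent greedy insertions.
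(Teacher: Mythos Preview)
The paper does not prove this proposition; it is quoted from \cite{bernsteinstreaming2020}. So the relevant comparison is to the original argument there, which does \emph{not} proceed by extracting a genuine EDCS inside $H\cup\widetilde H$. Instead it runs the EDCS approximation argument directly: fix a maximum matching $M^*$ of $G$; every edge of $M^*$ is either in $\widetilde H$ (hence already in $H\cup\widetilde H$) or satisfies $\deg_H(u)+\deg_H(v)\ge(1-\lambda)\beta$. The standard Hall-type / fractional-matching analysis of EDCS only ever invokes Property~P2 on edges of $M^*$, so it applies verbatim to the second class and produces a large matching inside $H$; the first class is handled separately and the two pieces are combined.

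Your reduction, by contrast, has a real gap in the P2 verification for $H^*$. In your case~(b), the edge $(x,y)\in\widetilde H$ is rejected because inserting it would push some \emph{adjacent} edge $(x,w)\in H^*$ over $\beta$, i.e.\ $\deg_{H^*}(x)+\deg_{H^*}(w)=\beta$ at that moment. This equality constrains the sum $\deg_{H^*}(x)+\deg_{H^*}(w)$, not $\deg_{H^*}(x)$ alone: nothing prevents $\deg_{H^*}(x)=1$ and $\deg_{H^*}(w)=\beta-1$, in which case $\deg_{H^*}(x)+\deg_{H^*}(y)$ can be tiny and P2 fails for $(x,y)$. Your proposed fix---``choose the acceptance rule so that a rejected edge's own endpoints jointly reach near-$\beta$ degree''---cannot be realized by an insert-only rule: if you accept $(x,y)$ whenever $\deg_{H^*}(x)+\deg_{H^*}(y)<(1-\lambda)\beta$ you get P2 on rejection but may break P1 on neighboring $H$-edges (a vertex $w$ with $\deg_H(w)$ close to $\beta$ and $\deg_H(x)$ small is the bad case); if you accept only when no P1 violation occurs anywhere, you are back in case~(b). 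Allowing deletions (as in \cref{alg:bernstein2020streaming}) repairs P1 but then $H\not\subseteq H^*$, and your P2 argument for edges outside $\widetilde H$ (which relied on $\deg_H\le\deg_{H^*}$) collapses. In short, the ``extract an EDCS'' route does not seem to close; the direct argument above is the way this lemma is actually proved.
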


\paragraph{An Informal Overview of the Algorithm:} In the next few paragraphs, we give an informal overview so the readers know what to expect. \Cref{alg:matching23} consists of two main parts. Let $H$ initially be an empty subgraph. First, in $n\beta^2 + 1$ rounds ($\beta$ is the parameter of EDCS), we sample $n^{1-\epsilon^3}$ pairs of vertices in the graph (in the adjacency list model, we sample $\delta$ edges). After sampling in each round, we update the bounded edge-degree subgraph $H$. For each edge $(u,v)$, if the $\deg_H(u) + \deg_H(v) < (1-\lambda) \beta$, we add it to $H$. This change can affect the bounded edge-degree property of $H$. In order to avoid this issue, we iterate over incident edges of $(u,v)$ to remove edges with degree more than $\beta$. Note that at most one neighbor of each $u$ and $v$ will be deleted after this iteration. Since $\beta = \widetilde{O}(1)$, this part can be done in $\widetilde{O}(n^{2-\epsilon^3})$ time. Furthermore, since we have chosen enough random edges, the number of {\em unsampled} edges in the graph that violate Property (P2) of the EDCS is relatively small. Note that when we remove edges to fix Property (P1), it is possible that an edge that was previously sampled but rejected due to the high $H$-degree, now has a lower $H$-degree and thus violates Property (P2). However, in the analysis, it will be sufficient to consider only unsampled edges that violate Property P2. More specifically, we show that the total number of unsampled violating edges is $\widetilde{O}(n^{1+\epsilon^3})$.

Note that the union of approximate EDCS $H$ and unsampled violating edges preserves an almost 2/3-approximate maximum matching of the original graph since most of the edges of the graph are unsampled and the unsampled subgraph itself, contains a large matching. Hence, we can apply \Cref{pro:large-matching-union-edcs} to the union of $H$ and all unsampled violating edges. Also, graph $G' = (V, E_H, E_U)$ has a low average degree. Hence, it remains to estimate a $(1-\epsilon)$-approximate matching of graph $G'$ that has a low average degree. For this part, we use the $\widetilde{O}_{\epsilon}(\Delta ^ {1/\epsilon^2})$ time {\em local computation algorithms} (LCA) of \citet*{levironitt} which itself builds on the sublinear time algorithm of \citet*{YoshidaYISTOC09}. In local computation algorithms, the goal is to compute a queried part of the output in sublinear time. One challenge here is that we do not have access to the adjacency list of graph $G'$ to use the algorithm of \cite{levironitt} as a black box. We slightly modify this algorithm to run in $\widetilde{O}_\epsilon(n\Delta^{1/\epsilon^2})$ with access to the adjacency matrix or the adjacency list of the original graph $G$. Furthermore, since this algorithm works with the maximum degree, we eliminate some high-degree vertices by losing an additive error. In what follows, we formalize the intuition given in previous paragraphs.

\begin{proposition}[\cite{levironitt}]\label{pre:lca-alg}
    There exists a randomized $(1-\epsilon)$-approximation local computation algorithm for maximum matching with running time $\widetilde{O}_\epsilon(\Delta^{1/\epsilon^2})$  using access to adjacency list.
\end{proposition}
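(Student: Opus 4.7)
The plan is to build the LCA on top of the random greedy maximal matching (RGMM) primitive of Yoshida-Yamamoto-Ito, and then apply $\Theta(1/\epsilon)$ rounds of local augmentation to lift a $1/2$-approximation into a $(1-\epsilon)$-approximation. For the base layer, assign every edge of $G$ an independent uniform $[0,1]$ priority and include edges greedily in increasing order of priority to produce a maximal matching $M_0$. The key observation is that determining whether an edge $e$ belongs to $M_0$ reduces to recursively querying only those incident edges whose priority is strictly smaller than that of $e$. The expected number of recursive calls is $\poly(\Delta)$, and a high-probability bound of $\widetilde{O}(\Delta^{O(1)})$ per query follows by truncating the priority tree and taking a union bound.

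To go from $1/2$ to $1-\epsilon$, the plan is to iteratively augment. Set $k := \Theta(1/\epsilon)$ and, for $j = 1,\ldots,k$, define $M_j$ from $M_{j-1}$ by augmenting along a maximal vertex-disjoint set of length-$(2j-1)$ augmenting paths, chosen by the same random-priority greedy rule applied now to the hypergraph whose hyperedges are these paths. Berge's theorem together with the standard short-augmenting-path lemma gives $|M_k| \geq (1 - 1/(k+1))\mu(G) \geq (1-\epsilon)\mu(G)$. To answer whether a vertex $v$ is matched in $M_j$, enumerate the $\Delta^{O(j)}$ candidate length-$(2j-1)$ alternating paths through $v$, test each for being augmenting with $O(j)$ queries to the $M_{j-1}$ LCA, and run the random-priority greedy rule on the resulting hypergraph. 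Finally, to estimate $|M_k|$, sample $\widetilde{O}(1/\epsilon^2)$ vertices uniformly, invoke the $M_k$ LCA on each, and scale; a Hoeffding bound yields a multiplicative $(1\pm\epsilon)$-estimate of the size with high probability.

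The main obstacle is the complexity analysis across the nested recursion. A direct calculation shows that each level-$j$ query triggers at most $\Delta^{O(j)}$ level-$(j-1)$ queries, so a single level-$k$ query unrolls into $\Delta^{O(1+2+\cdots+k)} = \Delta^{O(k^2)} = \Delta^{O(1/\epsilon^2)}$ base-layer queries in expectation. The delicate step is converting this per-query expectation into a $\widetilde{O}_\epsilon(\Delta^{1/\epsilon^2})$ high-probability bound: at every level one must argue that the priority-ordered exploration tree concentrates, so that truncation at polylogarithmic depth changes the output with negligible probability, and the blow-up factors at the $k$ layers compose without losing more than polylogarithmic factors. This layered concentration argument, together with the bookkeeping needed to ensure the LCA outputs a consistent matching across all queries (by reusing the same underlying priority assignments and recursion), is the main technical content of the proof.
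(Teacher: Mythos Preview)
The paper does not prove this proposition at all: it is stated with a citation to \cite{levironitt} and used as a black box. There is no ``paper's own proof'' to compare against; the authors simply import the result and immediately combine it with \cref{lem:lca-reduction} to obtain \cref{cor: fast-low-deg-alg}.

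Your sketch is a faithful high-level account of how the cited result is actually obtained (random-priority greedy maximal matching from Yoshida--Yamamoto--Ito as the base layer, then $\Theta(1/\epsilon)$ rounds of greedy augmentation along short augmenting paths, each round implemented by the same random-priority trick on the hypergraph of candidate paths, with the $\Delta^{O(1+2+\cdots+k)}=\Delta^{O(1/\epsilon^2)}$ blow-up coming from the nested recursion). So as a reconstruction of the \emph{cited} argument your outline is on target, but for the purposes of this paper no proof is expected here---just the citation.
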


\begin{lemma}\label{lem:lca-reduction}
    Let $G'$ be a subgraph of graph $G$. Also, let $\mathcal{A}$ be a local computation algorithm for maximum matching in $G'$ with running time $O(T)$ using access to the adjacency list of $G'$. There exists an algorithm with exactly the same approximation ratio that runs in $O(nT)$ time with access to the adjacency matrix or the adjacency list of $G$.
\end{lemma}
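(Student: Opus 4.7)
The plan is to simulate the adjacency-list oracle for $G'$ on top of whatever access we have to $G$, and run $\mathcal{A}$ unmodified on top of the simulated oracle. Concretely, the first time $\mathcal{A}$ issues a query about some vertex $v$ (``return the $i$-th neighbor of $v$ in $G'$''), the wrapper materializes the entire $G'$-neighborhood of $v$, commits to a fixed ordering of that neighborhood, and caches the result; all subsequent queries about $v$ are answered from the cache in $O(1)$ time. Because the simulated oracle is consistent with a single adjacency-list representation of $G'$, the execution trace of $\mathcal{A}$ is indistinguishable from one in which $\mathcal{A}$ has genuine adjacency-list access to $G'$, and so the approximation ratio is preserved verbatim.

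All of the extra cost is paid inside the materialization step, which I would implement as follows. In the adjacency-matrix model, iterate over all $u \in V \setminus \{v\}$ and test ``$(u,v) \in G'$''; in the adjacency-list model, iterate over the $G$-adjacency list of $v$ (of length at most $n-1$) and filter to those $G$-neighbors that lie in $G'$. Either branch runs in $O(n)$ time, provided that membership in $E(G')$ can be tested in $O(1)$. For our intended application, $G' = (V, E_H \cup E_U)$, with $E_H$ stored explicitly by the EDCS-building routine, and $(u,v) \in E_U$ iff $(u,v) \in G$ and $\deg_H(u) + \deg_H(v) < (1-\lambda)\beta$; both conditions are $O(1)$-checkable given a single $G$-query plus lookups of the already-known degrees in $H$.

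Since $\mathcal{A}$ runs in $O(T)$ time, it issues at most $O(T)$ oracle calls and therefore touches at most $O(T)$ distinct vertices of $G'$. Each distinct vertex incurs exactly one $O(n)$ materialization plus $O(1)$ per subsequent query, so the total running time is $O(T) + O(T) \cdot O(n) = O(nT)$, matching the claim. The only (mild) subtlety worth flagging is that the LCA of \cite{levironitt} relies on a canonical ordering of each vertex's adjacency list as part of its decision rule, so the wrapper must commit to a single ordering per vertex and reuse it across queries; caching the neighborhood on first touch accomplishes exactly this, and is the main implementation-level obstacle in an otherwise routine reduction.
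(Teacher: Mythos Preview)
Your proof is correct and follows essentially the same approach as the paper: the paper's three-sentence argument also observes that $\mathcal{A}$ visits at most $O(T)$ vertices and that each visited vertex's $G'$-adjacency list can be constructed in $O(n)$ time from either access model on $G$. Your version is simply a more detailed rendering of the same idea, with the added (and welcome) care of making explicit the $O(1)$ membership-test assumption and the need to commit to a consistent ordering per vertex.
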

\begin{proof}
Since algorithm $\mathcal{A}$ runs in $O(T)$ time, it visits at most $O(T)$ vertices in the graph. For each of these vertices, we can simply query all their edges in $G'$ using $O(n)$ time having access to the adjacency matrix or the adjacency list of $G$. Therefore, for each vertex that $\mathcal{A}$ visits, we can construct the adjacency list of the vertex by spending $O(n)$ time.
\end{proof}

\begin{corollary}\label{cor: fast-low-deg-alg}
    Let $G'$ be a subgraph with maximum degree $\Delta$ of graph $G$. There exists a local computation algorithm for that for a given vertex $v$, determines if it is in the output of a $(1-\epsilon)$-approximate maximum matching of $G'$ with a running time $\widetilde{O}_\epsilon(n\Delta^{1/\epsilon^2})$ with access to the adjacency matrix or the adjacency list of $G$.
\end{corollary}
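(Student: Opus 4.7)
The plan is to obtain \Cref{cor: fast-low-deg-alg} as a direct composition of \Cref{pre:lca-alg} with \Cref{lem:lca-reduction}. First I would instantiate the LCA of \cite{levironitt} from \Cref{pre:lca-alg} on the subgraph $G'$: this gives a randomized local computation algorithm $\mathcal{A}$ that, upon being queried at a vertex $v$, decides whether $v$ belongs to some fixed $(1-\epsilon)$-approximate maximum matching of $G'$, and runs in time $\widetilde{O}_\epsilon(\Delta^{1/\epsilon^2})$ provided it is given adjacency list access to $G'$. Note that $\Delta$ here is the maximum degree of $G'$, exactly as in the hypothesis of the corollary.

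The only issue is that we are not actually given adjacency list access to $G'$; our algorithm has access only to the adjacency matrix or adjacency list of the ambient graph $G$. This is precisely the situation handled by \Cref{lem:lca-reduction}: whenever $\mathcal{A}$ asks to read the adjacency list of some vertex $u$ in $G'$, we simulate this query in $O(n)$ time by reading all $n$ potential neighbors from the representation of $G$ and filtering those that survive in $G'$. Since $\mathcal{A}$ visits at most $\widetilde{O}_\epsilon(\Delta^{1/\epsilon^2})$ vertices in total, the total simulation overhead multiplies the running time by an $O(n)$ factor, yielding the claimed bound of $\widetilde{O}_\epsilon(n\Delta^{1/\epsilon^2})$. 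The approximation guarantee is inherited unchanged from \Cref{pre:lca-alg}, since the simulation returns the exact same answers $\mathcal{A}$ would have returned given native access to $G'$.

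There is essentially no obstacle here beyond checking that our setting fits the hypotheses of the two cited results; the corollary is a black-box composition. The only mild subtlety worth flagging is that to test membership of a neighbor in $G'$ efficiently, one needs $G'$ to be specified in a way that per-edge membership can be decided in $\widetilde{O}(1)$ time (which will be the case for the graph $G' = (V, E_H \cup E_U)$ we later feed in, since $H$ is maintained explicitly and the relaxed-EDCS degree condition used to define $U$ is checkable from $H$). With that in hand, the running time and approximation guarantee follow immediately.
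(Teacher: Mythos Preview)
Your proposal is correct and matches the paper's approach exactly: the paper's proof is the single sentence ``The proof follows by combining \Cref{pre:lca-alg} and \Cref{lem:lca-reduction},'' which is precisely the black-box composition you describe. Your added remark about needing $\widetilde{O}(1)$-time membership tests for $G'$ is a useful clarification that the paper leaves implicit.
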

\begin{proof}
    The proof follows by combining \Cref{pre:lca-alg} and \Cref{lem:lca-reduction}.
\end{proof}

\begin{algorithm}[H]
\caption{A 2/3-Approximate Matching Algorithm in Adjacency List and Matrix}
\label{alg:matching23}

\textbf{Parameter:} $\epsilon$.

$\delta \leftarrow n^{1-\epsilon^3}$,  $H \leftarrow \emptyset$, $\lambda \leftarrow \frac{\epsilon}{128}$, $\beta \leftarrow \frac{16\log(1/\lambda)}{\lambda^2}$, $T \leftarrow n\beta^2 + 1$, $r \leftarrow 36 \log^3 n$.

\For{$i$ in $1 \ldots T$}{
    Sample $\delta$ pairs of vertices and query if there is an edge between them (in the adjacency list model, sample $\delta$ edges). Let $E_S$ be the set of sampled edges.

    Run the
    \Cref{alg:bernstein2020streaming} on $E_S$ for one epoch to update $H$.

    \lIf{subgraph $H$ did not change in this iteration}{\textbf{break}}
    
}

Let $E_U = \{(u, v) \lvert (u,v) \in E, \deg_H(u) + \deg_H(v) < (1-\lambda)\beta \}$.

Sample $r$ vertices $v_1, \ldots v_r$ from $V$ with replacement.

Let $X_i$ be the indicator if vertex $v_i$ is in the solution of maximum matching of graph $G' = (V, E_H \cup E_U)$ up to a multiplicative-additive factor of $(1-\epsilon)$ using the algorithm of \Cref{lem:1-eps-approx-subroutine-new}.

Let $X \leftarrow \sum_{i=1}^{r} X_i$ and $\widetilde{\mu} \leftarrow \frac{nX}{2r} - \frac{n}{2 \log n}$.

\Return $\widetilde{\mu}$.

\end{algorithm}

\begin{algorithm}[H]
\caption{Algorithm of \cite{bernsteinstreaming2020} on $E_S$ for One Epoch to Update $H$.}
\label{alg:bernstein2020streaming}
   \For{$(u, v)$ in $E_S$}{
        \If{$\deg_H(u) + \deg_H(v) < (1-\lambda)\beta$}{
            $H \leftarrow H \cup {(u,v)}$

            \For{$(u, w)$ in $N_H(u)$}{
                \If{$\deg_H(u) + \deg_H(w) > \beta$}{
                    Remove edge $(u, w)$ from $H$.
                    
                    \textbf{break.}
                }
            }

            \For{$(v, w)$ in $N_H(v)$}{
                \If{$\deg_H(u) + \deg_H(w) > \beta$}{
                    Remove edge $(u, w)$ from $H$.
                    
                    \textbf{break.}
                }
            }
        }
    }

\end{algorithm}

We use the same argument as \cite{bernsteinstreaming2020} to show that the average degree of violating unsampled edges is low. The only difference here is that in each epoch, we have $\widetilde{O}(n^{1-\epsilon^3})$ sampled edges but the algorithm of \cite{bernsteinstreaming2020} use $\widetilde{O}(n)$ sampled edges which causes us to get weaker bound on the average degree. First, we rewrite a useful lemma from \cite{bernsteinstreaming2020} that also holds in our case which shows that in one of $T$ epochs of \Cref{alg:matching23} the subgraph $H$ does not change and the algorithm breaks the loop in Line 3. 

\begin{lemma}[Lemma 4.2 of \cite{bernsteinstreaming2020}]\label{lem:edcs-moves-bound}
    Let $\beta > 2$ and $H = (V_H, E_H)$ be a subgraph with no edges at the beginning. An adversary inserts and deletes edges from $H$ with the following rules:
    \begin{itemize}
        \item delete an edge $(u,v)$ if $\deg_H(u) + \deg_H(v) > \beta$,
        \item insert an edge $(u, v)$ if $\deg_H(u) + \deg_H(v) < (1-\lambda)\beta$.
    \end{itemize}
    Then after at most $n\beta^2$ insertions and deletions, no legal move remains.
\end{lemma}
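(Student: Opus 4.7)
The plan is a standard potential-function argument. I would define
\[
\Phi(H) \;:=\; C \sum_{v \in V_H} \deg_H(v) \;-\; \sum_{v \in V_H} \deg_H(v)^2 \;=\; \sum_{v \in V_H} \deg_H(v)\bigl(C - \deg_H(v)\bigr),
\]
where $C := (1-\lambda/2)\beta$ is chosen as the midpoint of the insertion threshold $(1-\lambda)\beta$ and the deletion threshold $\beta$. Each summand is a downward-opening parabola in $\deg_H(v)$ with maximum value $C^2/4 \leq \beta^2/4$, which immediately gives the uniform upper bound $\Phi(H) \leq n\beta^2/4$, and trivially $\Phi(\emptyset) = 0$.

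Next I would show that every legal move strictly increases $\Phi$ by a fixed amount bounded away from zero. When an edge $(u,v)$ is inserted, $\sum_v \deg_H(v)$ grows by exactly $2$ and $\sum_v \deg_H(v)^2$ grows by exactly $2(\deg_H(u)+\deg_H(v))+2$, so substituting and using the insertion precondition,
\[
\Delta\Phi \;=\; 2C - 2(\deg_H(u)+\deg_H(v)) - 2 \;>\; 2(1-\lambda/2)\beta - 2(1-\lambda)\beta - 2 \;=\; \lambda\beta - 2.
\]
A perfectly symmetric calculation for a deletion, using the precondition $\deg_H(u)+\deg_H(v) > \beta$, yields the identical lower bound $\Delta\Phi > \lambda\beta - 2$. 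Since the hypothesis $\beta \geq 16\log(1/\lambda)/\lambda^2$ easily implies $\lambda\beta \gg 4$, every legal move raises $\Phi$ by at least $2$.

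Combining the two ingredients: because $\Phi$ starts at $0$ and is monotonically increasing, it lives in $[0, n\beta^2/4]$ throughout, so the total number of legal moves is at most $\tfrac{n\beta^2/4}{2} \leq n\beta^2$, which is the claimed bound. The only design choice that requires care is the constant $C$: any $C \in ((1-\lambda)\beta, \beta)$ will make both $\Delta\Phi$ expressions strictly positive, but taking the midpoint $C = (1-\lambda/2)\beta$ equalizes the two bounds so a single calculation handles insertions and deletions at once. I do not expect any deeper obstacle -- monotonicity of $\Phi$ automatically rules out cycling and saves us from having to separately control the minimum of $\Phi$ over arbitrary intermediate configurations.
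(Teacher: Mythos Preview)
Your potential-function argument is correct and is exactly the standard proof from the cited reference \cite{bernsteinstreaming2020}; the paper itself does not reprove this lemma but simply cites it and remarks that ``essentially the same proof carries over'' to the $(1-\lambda)\beta$ insertion threshold. One small remark: the lemma as stated only hypothesizes $\beta > 2$, whereas your final step uses $\beta \geq 16\log(1/\lambda)/\lambda^2$ (the parameter choice from \cref{pro:large-matching-union-edcs}) to conclude $\lambda\beta - 2 \geq 2$ --- this is the regime in which the paper actually applies the lemma, so it is fine in context, but you should flag that you are importing that assumption.
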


Note that in the original version of this lemma in \cite{bernsteinstreaming2020}, the constraint for insertion rule is $\deg_H(u) + \deg_H(v) < \beta - 1$, but essentially the same proof carries over without any change. Therefore, the same $n\beta^2$ bound still holds for this weaker version. By \Cref{lem:edcs-moves-bound}, in at least one of $T$ rounds of \Cref{alg:matching23}, the subgraph $H$ remains unchanged and the algorithm exits the loop in Line 3. Next, we prove that after the algorithm exits in Line 3 because of no change in $H$, the number of remaining unsampled edges that violate the second property of EDCS (P2) is small. (The proof of this lemma is similar to Lemma 4.1 of \cite{bernsteinstreaming2020}.)

Next, after the algorithm exits the loop in line 3 because of no change in $H$, the number of remaining unsampled edges that vaiolate the second property of EDCS (P2) is small. Assume that the algorithm sample $p$ fraction of edges of the original graph at random and constructs the bounded edge-degree subgraph $H$. Behnezhad and Khanna \cite{soheilsanjeev} (see Claim 4.16 of the paper) proved that the number of unsampled violating edges is at most $O(\mu(G)\cdot\beta^2 \cdot \log n/p)$. In our case, $p = O(\beta^2/n^{\epsilon^3})$ by the choice of $\delta$ and $T$. We restate the lemma with this specific $p$.

\begin{lemma}\label{lem:sparsification}
Let $\widetilde{H} = (V, E_U)$ be the subgraph consisting of violating unsampled edges in Line 7 of \Cref{alg:matching23}. Then we have $|E_U| = O(\mu(G)\cdot n^{\epsilon^3} \cdot \log n)$ with probability at least $1 - 1/n^5$.
\end{lemma}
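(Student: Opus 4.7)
The plan is to adapt the proof template of Lemma~4.1 in \cite{bernsteinstreaming2020} (and of Claim~4.16 in \cite{soheilsanjeev}) to our per-epoch sampling rate of $\delta = n^{1-\epsilon^3}$. The first step is to apply \Cref{lem:edcs-moves-bound}: since at most $n\beta^2$ insertions/deletions to $H$ are possible across the entire execution, and any epoch in which \Cref{alg:bernstein2020streaming} actually modifies $H$ performs at least one insertion, among the $T = n\beta^2 + 1$ epochs at least one must be \emph{quiet}, so \Cref{alg:matching23} exits via the \textbf{break} in Line~3. Let $i^*$ denote the first quiet epoch; then the state of $H$ at Line~7 coincides with its state at the end of epoch $i^*$, so $E_U$ is exactly the set of edges violating property (P2) at the end of that epoch.

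The second step is the key observation that quietness of $i^*$ forbids any sample from lying in $E_U$. Indeed, by the definition of $E_U$ every edge $(u,v)\in E_U$ satisfies $\deg_H(u)+\deg_H(v) < (1-\lambda)\beta$; had \Cref{alg:bernstein2020streaming} processed such an edge during epoch $i^*$, it would have inserted $(u,v)$ into $H$, contradicting quietness. Hence none of the $\delta$ samples drawn in epoch $i^*$ belongs to $E_U$. Converting this into a quantitative bound, each of the $\delta$ samples hits a fixed edge of $E_U$ with probability $\Omega(|E_U|/n^2)$ in the adjacency matrix model (and analogously for the adjacency list model), so the probability that all $\delta$ samples miss $E_U$ is at most $\exp\!\bigl(-\Omega(\delta\,|E_U|/n^2)\bigr)$. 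Requiring this probability to exceed $n^{-5}$ forces $|E_U| = O(n^2\log n/\delta) = O(n^{1+\epsilon^3}\log n)$, and a union bound over the at most $T=\poly(n)$ candidate indices for $i^*$ absorbs any remaining logarithmic factors.

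To upgrade the bound $O(n^{1+\epsilon^3}\log n)$ into the stated $O(\mu(G)\cdot n^{\epsilon^3}\log n)$, I would invoke the sharper EDCS-specific charging from \cite{soheilsanjeev}: edges of $E_U$ have at least one endpoint of $H$-degree strictly below $(1-\lambda)\beta$, which, combined with a standard fractional-matching / vertex-cover argument on the violating edges, lets one replace the trivial normalization $n^2$ by an $O(\mu(G)\cdot n/\beta^2)$ bound and absorb the $\beta^2$ into constants hidden in $O(\cdot)$. The main obstacle is handling the adjacency list sampling precisely: sampling an edge uniformly from $E$ is not immediate without knowing degrees, and the natural substitute (uniform vertex, then uniform adjacency-list slot) biases the induced distribution on $E$. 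Carefully redoing the concentration calculation with this biased but bounded sampling measure, and verifying that it preserves both the polynomial failure probability and the dependence on $\mu(G)$ inherited from \cite{soheilsanjeev}, is the true technical crux of the argument.
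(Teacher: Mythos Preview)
Your proposal and the paper's treatment are essentially the same: both defer the actual argument to \cite{soheilsanjeev} (Claim~4.16). In fact the paper does not give an independent proof of this lemma at all; the surrounding text simply states that \cite{soheilsanjeev} proved the number of unsampled violating edges is $O(\mu(G)\beta^2\log n/p)$ and then plugs in $p = O(\beta^2/n^{\epsilon^3})$. Your exposition of the mechanism (a quiet epoch must exist by \cref{lem:edcs-moves-bound}; quietness forces all $\delta$ samples to miss $E_U$; concentration plus a union bound over the $T$ candidate epochs bounds $|E_U|$) is correct and is precisely the template of \cite{bernsteinstreaming2020} that both you and the paper cite.

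One point of feedback: your identification of the ``true technical crux'' is inverted. The adjacency \emph{list} case is the easy one for the $\mu(G)$ dependence, not the hard one. Once degrees are known (which the paper explicitly assumes via binary search), uniform edge sampling is immediate, the miss probability per sample is $1 - |E_U|/m$, and the argument yields $|E_U| = O(m\log n/\delta)$. The vertex-cover bound $m \le 2n\,\mu(G)$ then gives $|E_U| = O(\mu(G)\,n^{\epsilon^3}\log n)$ directly. It is rather in the adjacency \emph{matrix} model that your pair-sampling calculation only yields $O(n^{1+\epsilon^3}\log n)$; but there the algorithm only targets a multiplicative-additive $(2/3-\epsilon,\,o(n))$ guarantee, so the weaker bound already suffices downstream. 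Your proposed ``EDCS-specific charging'' to replace $n^2$ by $O(\mu(G)\cdot n/\beta^2)$ is not needed and is not what the cited references do.
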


\begin{claim}\label{clm:constructing-edcs}
    Let $E_S$ be the set of sampled edges in Line 4 of \Cref{alg:matching23}. Then running the \Cref{alg:bernstein2020streaming} for one epoch to update $H$ can be done in $O_\epsilon(n^{1-\epsilon^3})$ time. Moreover, the whole process of constructing $H$ takes $O_\epsilon(n^{2-\epsilon^3})$ time in both adjacency lists and matrix models.
\end{claim}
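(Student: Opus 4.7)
The plan is to bound the work per epoch of \cref{alg:bernstein2020streaming} and then multiply by $T = n\beta^2 + 1$. First I would observe that the maintained subgraph $H$ always satisfies the bounded edge-degree property (P1): if $\deg_H(u) + \deg_H(v) \le \beta$ for every edge $(u,v) \in H$, then each vertex has $H$-degree at most $\beta$. In particular, $N_H(u)$ and $N_H(v)$ each have size at most $\beta = O_\epsilon(1)$, so the two inner \textbf{for} loops that clean up violations of (P1) after inserting a new edge can each be executed in $O(\beta) = O_\epsilon(1)$ time. Hence for each $(u,v) \in E_S$ the total processing time is $O_\epsilon(1)$, and one epoch with $|E_S| = \delta = n^{1-\epsilon^3}$ sampled edges costs $O_\epsilon(\delta) = O_\epsilon(n^{1-\epsilon^3})$ time.

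Next I would account for the cost of obtaining $E_S$ in each epoch. In the adjacency matrix model, sampling $\delta$ ordered pairs and querying each takes $O(\delta)$ time. In the adjacency list model, we sample $\delta$ edges uniformly at random; this can be done by computing the degree of every vertex once via binary search on each adjacency list in $O(n \log n)$ preprocessing time, then sampling a vertex proportional to its degree followed by a uniformly random position in its list, at $O(1)$ amortized cost per edge. The one-time preprocessing cost $O(n\log n)$ is dominated by the total runtime bound, and the per-epoch sampling cost $O(\delta) = O_\epsilon(n^{1-\epsilon^3})$ matches the per-epoch update cost.

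Finally I would combine these bounds across the at most $T = n\beta^2 + 1 = O_\epsilon(n)$ epochs of the outer loop in \cref{alg:matching23}. The total time for constructing $H$ is
\[
O(n\log n) + T \cdot O_\epsilon(n^{1-\epsilon^3}) = O_\epsilon(n) \cdot O_\epsilon(n^{1-\epsilon^3}) = O_\epsilon(n^{2-\epsilon^3}),
\]
as claimed. I do not expect any substantive obstacle: the only point requiring care is ensuring that the inner cleanup loops are charged $O(\beta)$ rather than something larger, which follows immediately from the invariant that (P1) holds at the start of each edge's processing so that $\deg_H(u), \deg_H(v) \le \beta$. The termination of the outer loop within $T$ epochs is not needed for this runtime bound, but is in any case guaranteed by \cref{lem:edcs-moves-bound}, since the total number of insertions and deletions to $H$ across all epochs is at most $n\beta^2$.
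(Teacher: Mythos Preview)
Your proposal is correct and follows essentially the same approach as the paper: bound the per-edge work in \cref{alg:bernstein2020streaming} by $O(\beta)$ via the degree invariant from (P1), multiply by $|E_S| = \delta$ for the per-epoch cost, and then by $T = O_\epsilon(n)$ for the total. Your additional discussion of the sampling cost in each model is a welcome detail that the paper's proof leaves implicit.
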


\begin{proof}
Since \Cref{alg:matching23} samples at most $\delta = n^{1-\epsilon^3}$ pairs of vertices, we have $|E_S| = O(n^{1-\epsilon^3})$. \Cref{alg:bernstein2020streaming} iterates over the edges one by one in each epoch and adds edge $(u,v)$ to $H$ if $\deg_H(u)+\deg_H(v) < (1-\lambda)\beta$. Adding this edge can cause incident edges of $u$ and $v$ to have a degree higher than $\beta$ in $H$. In order to remove those edges, we iterate over edges of $u$ and $v$ in $H$ and remove the first edge with a degree higher than $\beta$. This can be done in $O(\beta)$ since each vertex has at most $\beta$ neighbors. Thus, the running time of each epoch is $O(n^{1-\epsilon^3}\beta) = O_\epsilon(n^{1-\epsilon^3})$. Since we have $O(n\beta^2)$ epochs, the total running time is $O(n^{2-\epsilon^3}\beta^3) = O_\epsilon(n^{2-\epsilon^3})$ by our choice of $\beta$.
\end{proof}

\begin{lemma}\label{lem:1-eps-approx-subroutine-new}
Let $G' = (V, E_H \cup E_U)$ be as defined in \Cref{alg:matching23}. Then there exists a $(1-\epsilon)$-approximation LCA algorithm for maximum matching of $G'$ with $\widetilde{O}_\epsilon(n^{2-\epsilon^3})$ preprocessing time and $\widetilde{O}_\epsilon(n^{1+\epsilon})$ additional time per query.
\end{lemma}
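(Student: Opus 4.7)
The plan is to reduce $G'$ to a subgraph of bounded maximum degree by deleting a sub-linear set of vertices that have unusually many neighbors in $G'$, and then invoke the LCA of \Cref{cor: fast-low-deg-alg} on this reduced graph. Fix a threshold $\Delta := n^{\epsilon^3}\log^{c} n$ for a sufficiently large $c=c(\epsilon)$, and let $V_{high} := \{v : \deg_{G'}(v) > \Delta\}$. Combining the bound $|E_H| = O(n\beta) = \widetilde{O}_\epsilon(n)$ with \Cref{lem:sparsification} yields $|E(G')| = \widetilde{O}(n^{1+\epsilon^3})$, so by handshake $|V_{high}| \le 2|E(G')|/\Delta = o(n)$. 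Defining $G'' := G' \setminus V_{high}$, we have that $G''$ has maximum degree at most $\Delta$, and $\mu(G'') \ge \mu(G') - |V_{high}| \ge \mu(G') - o(n)$, since removing a vertex decreases the maximum matching by at most one. Thus a $(1-\epsilon)$-approximate matching of $G''$ has size at least $(1-\epsilon)\mu(G') - o(n)$, which is what we need (the additive slack is absorbed by the $-n/(2\log n)$ correction in \Cref{alg:matching23}).

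The preprocessing has two components. First, I construct $H$ via \Cref{alg:matching23}, costing $\widetilde{O}_\epsilon(n^{2-\epsilon^3})$ time by \Cref{clm:constructing-edcs}; with $H$ in hand, both ``$(u,v)\in H$'' and ``$\deg_H(u)+\deg_H(v) < (1-\lambda)\beta$'' can be tested in $O(1)$, and these two tests together decide whether an edge $(u,v)\in G$ lies in $G'$. Second, I estimate $\deg_{G'}(v)$ for every $v$ via sampling. For each $v$ I draw $s = \widetilde{O}(n^{1-\epsilon^3})$ trials: in the adjacency matrix model each trial picks a uniform $u\in V$ and uses a single matrix query plus the stored $H$ to test ``$(u,v)\in G'$''; in the adjacency list model each trial picks a uniform index in $v$'s list (scanning the whole list if $\deg_G(v) < s$), reads the neighbor $w$ returned, and tests ``$(v,w)\in G'$'' using $H$, then scales by the known $\deg_G(v)$. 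A Chernoff bound (\Cref{prop:chernoff}) guarantees that with probability $1-1/\mathrm{poly}(n)$ the estimator places every $v$ with $\deg_{G'}(v) > \Delta$ into $V_{high}$ while never placing a $v$ with $\deg_{G'}(v) < \Delta/2$; the total cost is $\widetilde{O}(n\cdot s) = \widetilde{O}(n^{2-\epsilon^3})$, within budget.

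On a query to $v$, return ``unmatched'' if $v \in V_{high}$; otherwise invoke the LCA of \Cref{cor: fast-low-deg-alg} on $G''$, with lazy adjacency simulation. The first time the LCA requests the neighborhood of some $u\notin V_{high}$, I spend $O(n)$ time to enumerate $u$'s $G$-neighbors (via matrix queries or by scanning $u$'s list), discard those in $V_{high}$, keep those the cached $H$ certifies as being in $G'$, and cache the resulting list; subsequent LCA accesses to $u$ cost $O(1)$. Since the LCA touches only $\widetilde{O}_\epsilon(\Delta^{1/\epsilon^2}) = \widetilde{O}_\epsilon(n^{\epsilon})$ distinct vertices, the total per-query cost is $\widetilde{O}_\epsilon(n\cdot n^\epsilon) = \widetilde{O}_\epsilon(n^{1+\epsilon})$.

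The main obstacle is balancing $\Delta$: it must be large enough that $|V_{high}| = o(n)$ (so the $-o(n)$ additive loss is absorbable), yet small enough that $\Delta^{1/\epsilon^2} \le n^{\epsilon}$ keeps the per-query cost below $n^{1+\epsilon}$. The choice $\Delta = n^{\epsilon^3}\log^{O_\epsilon(1)} n$ is tight on both sides, with the polylogarithmic slack absorbed by $\widetilde{O}_\epsilon$; one small technical care is that sampling-based degree estimates must be concentrated enough that the borderline case $\deg_{G'}(v)\approx \Delta$ does not cause us to miss any high-degree vertex, which is why we use the two-sided classification $\{>\Delta,\,<\Delta/2\}$ and pay the extra polylog factors in $c$.
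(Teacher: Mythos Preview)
Your approach is the same as the paper's: remove high-degree vertices, then run the LCA of \Cref{cor: fast-low-deg-alg} on the bounded-degree remainder, simulating adjacency-list access via $O(n)$-time neighbor enumeration. The preprocessing and per-query cost analyses are fine.

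There is, however, a genuine gap in the approximation guarantee. You bound $|V_{high}| \le 2|E(G')|/\Delta = o(n)$ via a coarse global edge count, and then claim the resulting additive $o(n)$ loss is absorbed downstream. But the lemma promises a \emph{multiplicative} $(1-\epsilon)$-approximation of $\mu(G')$, and this is actually needed later (e.g.\ for \Cref{thm:apprx23theoremlist}, where one only assumes $\mu(G') = \Omega(n^{0.99})$; an $o(n)$ additive term need not be $o(\mu(G'))$ there). The paper obtains the stronger bound $|V_{high}| = o(\mu(G))$ by exploiting a structural fact you overlook: every vertex has $H$-degree at most $\beta = O_\epsilon(1)$, so any vertex with $\deg_{G'}(v) > \Delta$ must have $\deg_{E_U}(v) > \Delta - \beta$, whence $|V_{high}| \le 2|E_U|/(\Delta - \beta) = O(\mu(G)\, n^{\epsilon^3}\log n)/\widetilde\Theta(n^{\epsilon^3}) = o(\mu(G))$ directly from \Cref{lem:sparsification}. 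With this bound and \Cref{clm: large-matching-union} one gets $(1-\epsilon)\mu(G'') \ge (1-\epsilon)\mu(G') - o(\mu(G)) \ge (1-O(\epsilon))\mu(G')$, and rescaling $\epsilon$ finishes. Your argument is one line away from correct; just replace the $|E(G')|$-based handshake with the $|E_U|$-based one.
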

\begin{proof}
Combining \Cref{lem:sparsification} and our choice of $\beta$, the average degree of graph $G'$ is $O((\mu(G) \cdot n^{\epsilon^3} \cdot \log n )/ n)$. Note that the algorithm of \cite{levironitt} works with maximum degree. In order to use this algorithm as a subroutine in Line 9 of our algorithm, we ignore vertices of high degree and find a $(1-\epsilon)$-approximate maximum matching in the remaining graph by losing an additive error which depends on $\mu(G)$. However, we do not have access to the degree of vertices in $G'$ since we do not have access to the adjacency list of $E_U$.

For each vertex $v$, we sample $k = 100n^{1-\epsilon^3}\log^2 n$ vertices to estimate the degree of $v$. Let $X$ be the number of neighbors that are in $G'$ and $\widetilde{\deg}_{G'}(v) = nX/k$ be our estimate for the degree of $v$ in $G'$. Using Chernoff bound, we get
\begin{align*}
    \Pr\left[|\widetilde{\deg}_{G'}(v) - \deg_{G'}(v) | > n^{\epsilon^3}\right] \leq 2\exp\left(-\frac{10000\log^4 n}{1500\log^3 n}\right) < \frac{2}{n^6}.
\end{align*}
A union bound over all $n$ vertices yields that with a probability of $1-1/n^4$, we have an additive error of at most $n^{\epsilon^3}$ for the degree of all vertices. 

Now we ignore the vertices with estimated degrees larger than $n^{\epsilon^3} \log^2 n$. Combining the additive error bound for degree estimation and the fact that the average degree of the graph is at most $O((\mu(G) \cdot n^{\epsilon^3} \cdot \log n )/ n)$, the total number of ignored vertices is at most $o(\mu(G))$.

For the remaining vertices, the maximum degree is at most $\widetilde{O}(n^{\epsilon^3})$. Moreover, for an edge in $G$ we can check if the degree of the edge is smaller than $(1-\lambda)\beta$ or not (the edge belongs to $G'$ or not). Thus, we can run the algorithm of \Cref{cor: fast-low-deg-alg} in $\widetilde{O}_\epsilon(n^{1 + \epsilon})$ since the maximum degree is $\widetilde{O}(n^{\epsilon^3})$ and either we have access to adjacency matrix or adjacency list of $G$. Also, note that since the additive error is $o(\mu(G))$, the approximation ratio cannot be worse than $(1-\epsilon)\mu(G') - o(\mu(G)) \geq (1-2\epsilon)\mu(G')$. Proof of claim follows from using $\epsilon/2$ as the parameter.
\end{proof}

\begin{claim}\label{clm: large-matching-union}
    $ (\frac{2}{3} - \epsilon) \cdot \mu(G) \leq \mu(G') \leq \mu(G)$.
\end{claim}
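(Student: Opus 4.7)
The upper bound $\mu(G') \le \mu(G)$ is immediate from the fact that $G' = (V, E_H \cup E_U)$ is by construction a subgraph of $G$.

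For the lower bound, the plan is to invoke Proposition \ref{pro:large-matching-union-edcs} with the same $\epsilon, \lambda, \beta$ used by Algorithm \ref{alg:matching23}. Since $\lambda = \epsilon/128$ and $\beta = 16\log(1/\lambda)/\lambda^2$ by construction, the parameter hypothesis of the proposition is satisfied verbatim (assuming $\epsilon < 1/2$). I then need to check the two structural hypotheses: (i) the subgraph $H$ produced by the loop in Lines 3--5 of Algorithm \ref{alg:matching23} has bounded edge-degree $\beta$, and (ii) $E_U$ as defined in Line 7 contains every edge $(u,v) \in E$ with $\deg_H(u) + \deg_H(v) < (1-\lambda)\beta$. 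Item (ii) is definitional. Given both, Proposition \ref{pro:large-matching-union-edcs} yields $\mu(G') = \mu(H \cup \widetilde{H}) \ge (2/3 - \epsilon)\mu(G)$, which is the desired bound.

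The only step requiring work is (i), and I plan to prove it by induction on the iterations of Algorithm \ref{alg:bernstein2020streaming}. Assume that $\deg_H(x) + \deg_H(y) \le \beta$ for every $(x,y) \in H$ just before an edge $(u,v)$ is processed. The algorithm inserts $(u,v)$ only when $\deg_H(u) + \deg_H(v) < (1-\lambda)\beta$, so after insertion the sum for $(u,v)$ itself is at most $(1-\lambda)\beta + 2 \le \beta$ for our parameter regime. For any pre-existing edge $(u,w) \in H$, the sum $\deg_H(u) + \deg_H(w)$ grows by exactly one and hence becomes at most $\beta+1$; in particular, at most one such incident edge at $u$ can strictly exceed $\beta$, because the first deletion restores $\deg_H(u)$ to its pre-insertion value. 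This is precisely why the early \textbf{break} in the inner loops is safe. The symmetric argument handles neighbors of $v$. Once (i) is established, the claim follows by directly plugging $H$ and $\widetilde{H} = (V, E_U)$ into Proposition \ref{pro:large-matching-union-edcs}.

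The main subtlety I expect to have to be careful about is the interplay between the early-termination \textbf{break} in Algorithm \ref{alg:bernstein2020streaming} and the bounded-edge-degree invariant: one must be sure that removing a single conflicting neighbor suffices, which in turn relies on the fact that the degree of each endpoint only grows by one per insertion. Aside from this invariant, the proof is a direct application of the relaxed EDCS matching-size lemma, and involves no further nontrivial computation.
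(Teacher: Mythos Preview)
Your argument is correct under the literal reading of Line~7 of Algorithm~\ref{alg:matching23}, where $E_U$ contains \emph{all} edges of $G$ with $\deg_H(u)+\deg_H(v) < (1-\lambda)\beta$. With that reading, Proposition~\ref{pro:large-matching-union-edcs} applies directly to $G$, and the lower bound follows in one step; your invariant argument for the bounded edge-degree of $H$ is also correct (the paper leaves this point implicit).

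The paper, however, does not argue this way. It introduces the subgraph $G^{Unsampled}$ of edges never sampled during the loop, shows $\mu(G^{Unsampled}) \ge (1-2\epsilon)\mu(G)$ with high probability, applies Proposition~\ref{pro:large-matching-union-edcs} to $G^{Unsampled}$ rather than to $G$, and absorbs the extra $(1-2\epsilon)$ loss by rescaling $\epsilon$. The reason for this detour is that elsewhere --- in the informal overview, in the table of subgraphs, and in the statement of Lemma~\ref{lem:sparsification} --- the paper treats $E_U$ as consisting only of \emph{unsampled} violating edges, because the sparsity bound $|E_U| = \widetilde{O}(\mu(G)n^{\epsilon^3})$ needed for Lemma~\ref{lem:1-eps-approx-subroutine-new} is only established for those. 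Under that restricted interpretation your step~(ii) would fail, since sampled edges violating (P2) could be absent from $\widetilde{H}$; passing to $G^{Unsampled}$ recovers a base graph on which the proposition's hypothesis is met. So your route is cleaner if Line~7 is taken at face value, while the paper's route is what is actually required by the interpretation of $E_U$ used in the rest of the analysis.
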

\begin{proof}
    The fact that $G'$ is a subgraph of $G$ implies the second inequality. Let $G^{Unsampled}$ be the subgraph consisting of unsampled edges. By Lemma 2.2 of \cite{bernsteinstreaming2020}, we have $\mu(G^{Unsampled}) \geq (1-2\epsilon)\mu(G)$ with high probability. To see this, we sample $\Tilde{O}(m/n^{\epsilon^3})$ fraction of edges, however, in \cite{bernsteinstreaming2020}, they sample $\epsilon m$ edges. Hence, the graph $G^{Unsampled}$ has more unsampled edges and therefore a larger matching compared to unsampled edges of \cite{bernsteinstreaming2020}. Moreover, by \Cref{pro:large-matching-union-edcs}, we have $\mu(G') \geq (\frac{2}{3}-\epsilon) \mu(G^{Unsampled})$. Therefore, we obtain
    \begin{align*}
        \mu(G') \geq (\frac{2}{3}-\epsilon) \mu(G^{Unsampled}) \geq (\frac{2}{3}-\epsilon) (1-2\epsilon)\mu(G) \geq (\frac{2}{3} - 3\epsilon) \mu(G).
    \end{align*} 
    Proof of claim follows from using $\epsilon/3$ as the parameter.
\end{proof}

\begin{lemma}\label{lem:approx-ratio-32}
    let $\widetilde{\mu}$ be the output of \Cref{alg:matching23} on graph $G$. With high probability,
    $$
    (\frac{2}{3} - \epsilon) \mu(G) - o(n) \leq \widetilde{\mu} \leq \mu(G).
    $$
\end{lemma}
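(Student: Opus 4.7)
The plan is to condition on the high-probability events established in Lemma~\ref{lem:sparsification} and Lemma~\ref{lem:1-eps-approx-subroutine-new}, and then reduce $\widetilde{\mu}$ to a uniform-sampling estimator for the size of a specific near-maximum matching of $G'$. Concretely, I would fix $M$ to be the $(1-\epsilon)$-approximate maximum matching of $G'$ implicitly determined by the LCA of Lemma~\ref{lem:1-eps-approx-subroutine-new}, so that $(1-\epsilon)\mu(G') \leq |M| \leq \mu(G')$ and the LCA answers consistently with $M$ across all queries. Since each $v_i$ is drawn uniformly from $V$ and $X_i$ is the indicator that $v_i \in V(M)$, we have $\E[X_i] = 2|M|/n$, so the raw quantity $nX/(2r)$ is an unbiased estimator of $|M|$.

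Next I would apply Hoeffding's inequality (Proposition~\ref{prop:hoeffding}) to the bounded variables $X_i \in \{0,1\}$ with $r = 36\log^3 n$. A routine calculation with deviation parameter $t = 1/\log n$ gives
$$
\left|\frac{nX}{2r} - |M|\right| \;\leq\; \frac{n}{2\log n}
$$
with probability at least $1 - 1/\poly(n)$. The explicit subtraction of $n/(2\log n)$ in the definition of $\widetilde{\mu}$ is precisely calibrated so that, on this concentration event, the upper bound
$$
\widetilde{\mu} \;=\; \frac{nX}{2r} - \frac{n}{2\log n} \;\leq\; |M| \;\leq\; \mu(G') \;\leq\; \mu(G)
$$
holds deterministically.

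For the lower bound, on the same event we lose only an additional $n/\log n = o(n)$, so combining with Claim~\ref{clm: large-matching-union} yields
$$
\widetilde{\mu} \;\geq\; |M| - \frac{n}{\log n} \;\geq\; (1-\epsilon)\mu(G') - o(n) \;\geq\; (1-\epsilon)(\tfrac{2}{3}-\epsilon)\mu(G) - o(n) \;\geq\; (\tfrac{2}{3}-2\epsilon)\mu(G) - o(n).
$$
Rescaling $\epsilon$ by a constant factor and union-bounding the failure of Lemma~\ref{lem:sparsification}, the $r$ LCA queries of Lemma~\ref{lem:1-eps-approx-subroutine-new}, and the Hoeffding event (each of inverse-polynomial probability) then completes the proof. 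I do not anticipate any real obstacle here; the only delicate point is choosing $r$ large enough that the Hoeffding deviation is strictly dominated by the $n/(2\log n)$ penalty, which is exactly why $r$ is taken to be polylogarithmic rather than merely $\Theta(\log n)$.
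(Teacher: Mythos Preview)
Your proposal is correct and follows essentially the same approach as the paper: fix the $(1-\epsilon)$-approximate matching $M$ of $G'$ produced by the LCA, observe that the $X_i$ are i.i.d.\ Bernoulli with mean $2|M|/n$, apply a concentration bound with $r = 36\log^3 n$ to show $|nX/(2r) - |M|| \leq n/(2\log n)$ with high probability, and then invoke \cref{clm: large-matching-union} for the lower bound. The only cosmetic difference is that the paper uses Chernoff where you use Hoeffding, and the paper writes $\E|\hat{M}|$ rather than conditioning explicitly on the LCA randomness; your formulation is arguably cleaner on that point.
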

\begin{proof}
Let $\hat{M}$ be the $(1-\epsilon)$-approximate matching of \Cref{lem:1-eps-approx-subroutine-new} on graph $G'$ and $X_i$ be the indicator if vertex $i$ is matched in $\hat{M}$ or not. By \Cref{clm: large-matching-union}, we have 
\begin{align}\label{eq:bounds-on-M}
    (1-\epsilon)(\frac{2}{3} - \epsilon) \mu(G)  \leq \E|\hat{M}| \leq \mu(G).
\end{align}
Because the number of matching edges is half of the matched vertices,
\begin{align*}
    \E[X_i] = \Pr[X_i = 1] = \frac{2\E|\hat{M}|}{n}.
\end{align*}
Thus,
\begin{align}\label{eq:expected-matching}
    \E[X] = \frac{2r\E|\hat{M}|}{n}.
\end{align}
Using Chernoff bound and the fact that $X$ is the sum of $r$ independent Bernoulli random variables, we have
\begin{align*}
    \Pr[|X - \E[X]| \geq \sqrt{18\E[X]\log n}] \leq 2 \exp\left(-\frac{18\E[X]\log n}{3\E[X]}\right) = \frac{2}{n^6}.
\end{align*}
Hence, with probability $1-2/n^6$,
\begin{align*}
    \widetilde{\mu} = \frac{nX}{2r} - \frac{n}{2\log n} & \in \frac{n(\E[X] \pm \sqrt{18\E[X]\log n})}{2r} - \frac{n}{2\log n} \\
    & =  \E|\hat{M}| \pm \sqrt{\frac{9n\E|\hat{M}|\log n}{r}}- \frac{n}{2\log n} & (\text{By } \Cref{eq:expected-matching})\\
    & = \E|\hat{M}| \pm \sqrt{\frac{n\E|\hat{M}|}{4\log^2 n}}- \frac{n}{2\log n} & (\text{Since } r = 36 \log^3 n)\\ 
    & = \E|\hat{M}| \pm \frac{n}{2\log n}- \frac{n}{2\log n} & (\text{Since } \E|\hat{M}| \leq n).
\end{align*}
Plugging \Cref{eq:bounds-on-M} in the above range implies
\begin{align*}
    (1-\epsilon)(\frac{2}{3} - \epsilon) \mu(G) - \frac{n}{\log n} \leq \widetilde{\mu} \leq \mu(G).
\end{align*}
Since $\frac{n}{\log n} = o(n)$ and $(1-\epsilon)(\frac{2}{3} - \epsilon) > (\frac{2}{3} - 3\epsilon)$, we get
\begin{align*}
    (\frac{2}{3} - 3\epsilon) \mu(G) - o(n) \leq \widetilde{\mu} \leq \mu(G).
\end{align*}
Proof of lemma follows from using $\epsilon/3$ as the parameter.
\end{proof}

Now we are ready to complete the analysis of the 2/3-approximate maximum matching algorithm.

\apprxtheoremA*

\begin{proof}    
    We run \Cref{alg:matching23}. By \Cref{lem:approx-ratio-32}, we obtain the claimed approximation ratio. The proof of running time follows from combining \Cref{clm:constructing-edcs}, \Cref{lem:1-eps-approx-subroutine-new}, and $r = \widetilde{O}(1)$.
\end{proof}

\paragraph{Multiplicative Approximation for Adjacency List:} In the adjacency list model, we can assume that we are given the degree of each vertex. This assumption is without a loss of generality since we can use binary search for each vertex to find its exact degree. Thus, we know the total number of edges in the graph and if this number is not larger than $n^{1.99}$, then we can use linear time $1-\epsilon$ approximation for maximum matching. Equipped with this observation, we assume that $\mu(G) \geq n^{0.99} / 2$, otherwise, the number of edges cannot be more than $2n\cdot \mu(G)$. Moreover, the size $\mu(G')$ is at least $(2/3 - \epsilon)\mu(G)$, which implies that we can assume $\mu(G') = \Omega(n^{0.99})$. With this assumption and using standard Chernoff bound, it is not hard to see that we can estimate the size of the maximum matching of $G'$ with a multiplicative factor by sampling $r = \widetilde{O}(n^{0.01})$ vertices. By the running time of \Cref{clm:constructing-edcs}, preprocessing time of \Cref{lem:1-eps-approx-subroutine-new}, and query time of \Cref{lem:1-eps-approx-subroutine-new} combined with $r$ queries, results in multiplicative $(2/3 - \epsilon)$-approximation algorithm with $\widetilde{O}_\epsilon(n^{2-\epsilon^3})$ running time.

\apprxtheoremAlist*

\section{Beating 2/3-Approximation in Bipartite Graphs}\label{sec:beating}

In this section, we design a new algorithm that gets slightly better than $2/3$ approximation in sublinear time in both the adjacency list and matrix models for bipartite graphs. Our starting point is to use the tight case characterization of the instance that our algorithm in the previous part cannot obtain better than a $2/3$ approximation. Then we use that characterization to design a new algorithm to go beyond $2/3$. We prove the following theorem, which further formalizes the statement of \cref{thm:beating-2/3} in the introduction.

\begin{restatable}{theorem}{apprxtheoremB}\label{thm:apprxbetter23theorem}
    For an absolute constant $\alpha$, there exists an algorithm that estimates the size of the maximum matching in $\widetilde{O}(n^{2-\Omega(1)})$ time up to a multiplicative-additive factor of $(\frac{2}{3} + \alpha, o(n))$ with high probability in both adjacency list and matrix models.
\end{restatable}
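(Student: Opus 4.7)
The plan is to build on the (almost) $2/3$-approximation algorithm of \cref{alg:matching23} and patch its tight case. After running \cref{alg:matching23} we already have implicit access to the relaxed-EDCS-based subgraph $G' = (V, E_H \cup E_U)$ satisfying $(2/3-\epsilon)\mu(G) \le \mu(G') \le \mu(G)$ by \cref{clm: large-matching-union}. Either $\mu(G')$ itself already exceeds $(2/3+\alpha)\mu(G)$ for some absolute $\alpha$, in which case the Section~\ref{sec:23-algo} algorithm already suffices; or else we are in the tight-case regime characterized by \cite{behnezhad2022dynamic}, and we need to certify additional augmenting structure.

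The tight-case characterization of \cite{behnezhad2022dynamic} partitions $V$ (by $H$-degree, which we have stored in memory) into $V_{mid}$ and $V_{low}$ so that (i) there is an almost $(2/3)$-approximate matching $M$ of $G$ contained in $G'' := G'[V_{mid}, V_{low}]$, and (ii) the induced subgraph $G[A]$ with $A := V_{mid} \setminus V(M)$ still contains an almost $(1/3)$-approximate matching of $G$ that can be appended to $M$ to break the $2/3$ barrier. Hence the problem reduces to estimating $\mu(G[A])$ within $o(n)$ additive error and $n^{2-\Omega(1)}$ time. Bipartiteness enters through K\"onig's theorem, which is used to argue that $|V(M)|$ and the sizes of the augmenting layers behave as the characterization requires.

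The estimation of $\mu(G[A])$ is carried out by the regime split sketched in \cref{sec:techniques}. Let $d$ be a degree scale for $G[V_{mid}]$, approximated in $\widetilde{O}(n)$ time by sampling as in \cref{lem:1-eps-approx-subroutine-new}. If $d \le \sqrt{n}$, we sample $\widetilde{O}(1)$ random vertices $v \in V_{mid}$ and read all $\widetilde{O}(d)$ of their $V_{mid}$-neighbors, spending $O(n)$ per neighbor to classify its $V_{mid}$/$V_{low}$ membership using the stored $H$ and the degree estimator, giving $\widetilde{O}(nd) = \widetilde{O}(n\sqrt{n})$ total cost. If $d \ge \sqrt{n}$, we instead pick $\Theta(n/d)$ seed vertices in $V_{mid}$ and enumerate all their adjacency lists: since a vast majority of the neighbors of each $V_{mid}$-vertex lie in $V_{mid}$, this collectively discovers essentially all of $V_{mid}$ in time $O(n^2/d) \le O(n\sqrt{n})$. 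In both regimes we end up with sufficient local access to feed the low-degree LCA matching estimator of \cite{levironitt} (via \cref{cor: fast-low-deg-alg}) on the resulting subgraph of $G[A]$, obtaining a $(1-\epsilon)$-approximation of $\mu(G[A])$. Adding this to the estimate of $\mu(M)$ gives a matching lower bound of $(2/3+\Omega(1))\mu(G)-o(n)$, and taking the maximum with the Section~\ref{sec:23-algo} estimate handles the non-tight case uniformly.

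The main obstacle is that neither the matching $M$ nor its vertex set $V(M)$ can be constructed explicitly in sublinear time, so $G[A]$ is only implicitly defined and every access to it must be simulated by a local oracle. The plan is to layer three oracles: (a) an $H$-degree oracle (free, since $H$ is stored), (b) a degree-classification oracle for $V_{mid}$ vs $V_{low}$ with $o(n)$ additive error from sampling, and (c) an LCA for a near-maximum matching in $G''$, obtained by instantiating \cite{levironitt} through \cref{lem:lca-reduction} so that it can be simulated from queries to $G$ itself. The auxiliary objects $M_{AB}^i$ and $G_{M_{AB}}$ listed in the section's table will implement an iterative version of the augmentation above to cover the corner cases where a single pass does not yield the claimed gain, and a final Chernoff-style concentration over $\widetilde{O}(1)$ random seed vertices converts the local oracles into a global size estimate with high probability, matching the $n^{2-\Omega(1)}$ runtime and $(2/3+\alpha, o(n))$ approximation in both query models.
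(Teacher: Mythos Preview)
Your high-level framing is right and matches the paper: run the Section~\ref{sec:23-algo} algorithm, invoke the tight-case characterization of \cite{behnezhad2022dynamic} to obtain $V_{mid}, V_{low}, G'', M, A$, and reduce to estimating $\mu(G[A])$ with $A/B$ membership supplied by an LCA on $G''$ (\cref{lem:mid-low-lca}). The gap is in how you propose to estimate $\mu(G[A])$. The $d\lessgtr\sqrt{n}$ regime split you describe is the Section~\ref{sec:techniques} intuition for why the \emph{specific lower-bound construction} cannot push past $n^{3/2}$; it is not an algorithm for a general tight instance. In an arbitrary tight instance there is no single degree parameter $d$ for $G[V_{mid}]$ (degrees there can range up to $\Theta(n)$ and be highly non-uniform), and the assertion that ``a vast majority of the neighbors of each $V_{mid}$-vertex lie in $V_{mid}$'' (or in $B$) has no basis outside the hand-crafted construction. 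For the same reason you cannot feed $G[A]$ to the \cite{levironitt} LCA via \cref{cor: fast-low-deg-alg}: that LCA costs $\Delta^{1/\epsilon^2}$, and $G[A]$ has no sublinear degree bound. (Also, $V_{mid}/V_{low}$ membership is free from the stored $H$-degrees; what costs $\widetilde{O}(n^{1+\epsilon})$ per vertex is $A/B$ membership.)

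The paper's fix is precisely the machinery you gesture at in your last sentence but do not supply. It samples $k=n^{\gamma/2}$ greedy maximal matchings $M_{AB}^i$ on $G[V_{mid}]$, each from $n^{1+\gamma}$ random pairs; the key sparsification \cref{lem: sparsification-property} guarantees every $G[V_{mid}\setminus V(M_{AB}^i)]$ has maximum degree $\widetilde{O}(n^{1-\gamma})$ regardless of the original degrees. On this sparsified graph one does \emph{not} run \cite{levironitt}; instead one attaches $\widetilde{\Theta}(n^{1-\gamma})$ dummy leaves to every $B$-vertex and runs the random-greedy-maximal-matching oracle of \cite{behnezhad2021}, so $B$-vertices are absorbed by dummies and the oracle effectively measures $\mu(G[A\setminus V(M_{AB}^i)])$ at cost $\widetilde{O}(n^{1-\gamma})$ vertex-oracles times $\widetilde{O}(n^{1+\epsilon})$ per $A/B$ classification. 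If neither this (Case~2) nor a direct count of $A$--$A$ edges inside $M_{AB}^i$ (Case~1) yields the gain, the algorithm exploits that almost all $M_{AB}^i$-neighbors of sampled $A$-vertices land in $B$ to identify and delete a constant fraction of $B$ in $\widetilde{O}(n^{2+\epsilon-\gamma/2})$ time (Case~3), and iterates $O(1)$ times. Without this sparsify--dummy--iterate loop your plan does not give a sublinear estimator for $\mu(G[A])$.
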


\begin{restatable}{theorem}{apprxtheoremBlist}\label{thm:apprxbetter23theoremlist}
    For an absolute constant $\alpha$, there exists an algorithm that estimates the size of the maximum matching in $\widetilde{O}(n^{2-\Omega(1)})$ time up to a multiplicative factor of $(\frac{2}{3} + \alpha)$ with high probability in the adjacency list model. 
\end{restatable}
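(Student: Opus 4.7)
The plan is to build on the (almost) 2/3-approximation of \Cref{alg:matching23} and to improve it in precisely the instances that are tight for EDCS. We first run \Cref{alg:matching23} to produce the relaxed EDCS $H$ together with the set $U$ of unsampled underfull edges, so that $G' = (V, E_H \cup E_U)$ contains an (almost) 2/3-approximate matching of $G$. If the size estimate $\widetilde{\mu}(G')$ coming out of \Cref{lem:1-eps-approx-subroutine-new} already exceeds $(2/3 + \alpha)\mu(G)$ for some absolute constant $\alpha > 0$, we simply return it. Otherwise, we are in a tight case for EDCS, and we invoke the structural characterization from \cite{behnezhad2022dynamic}: there is a partition $V = V_{mid} \cup V_{low}$ based on degrees in $H$, together with an (almost) 2/3-approximate matching $M$ of $G'$ lying entirely in the bipartite subgraph $G'[V_{mid}, V_{low}]$, such that the residual subgraph $G[A]$ with $A := V_{mid} \setminus V(M)$ still contains an (almost) 1/3-approximate matching of $G$. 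Crucially, augmenting $M$ by any constant fraction of $\mu(G[A])$ gives a matching in $G$ of size strictly greater than $(2/3 + \alpha)\mu(G)$, so the whole task reduces to estimating $\mu(G[A])$ to within $o(n)$ in $n^{2-\Omega(1)}$ time.

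The key algorithmic idea, guided directly by our lower bound instance (where $V_{mid}$ plays the role of $A \cup B$ and $V_{low}$ the role of $S$), is a case split on the typical $G$-degree $d$ of vertices in $A$. In the low-degree regime $d \leq \sqrt{n}$, we sample $\widetilde{O}(1)$ random vertices of $A$, query all of their neighbors in $G$ (which costs $\widetilde O(n \cdot d) = \widetilde O(n^{3/2})$), and restrict attention to the subgraph induced on $A$; then we apply the LCA-based estimator of \Cref{lem:1-eps-approx-subroutine-new} on that low-degree subgraph. In the high-degree regime $d \geq \sqrt{n}$, we instead sample $\Theta(n/d)$ vertices in $A$ and list the full neighborhood of each, spending $O(n^2/d) = O(n^{3/2})$ time in total; since nearly all neighbors of any $v \in A$ lie in $V_{mid}$, this uncovers essentially all of $V_{mid}$ and lets us locally reconstruct a sufficient portion of $G[A]$ to estimate $\mu(G[A])$. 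Combined, the augmentation subroutine runs in $\widetilde O(n^{3/2})$ time, and since the EDCS construction from Section~\ref{sec:23-algo} already costs $n^{2-\Omega(1)}$, the overall running time remains $n^{2-\Omega(1)}$.

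The main obstacle is that the structural dichotomy is existential: we have no oracle for $V_{mid}$, $V_{low}$, or the witness matching $M$, and the "hidden" $A$/$B$-label in our lower bound instance shows precisely why a naive implementation cannot work. The technical heart of the proof is therefore an iterative refinement procedure that repeatedly extracts matchings $M^i_{AB}$ inside $G'[V_{mid}]$ and maintains the auxiliary graph $G_{M_{AB}} = (V_{mid}, \bigcup_i M^i_{AB})$ referenced in the subgraph table: each round either certifies that $M$ already beats 2/3 when combined with a subset of the $M^i_{AB}$ edges, or peels off vertices of $V_{mid} \setminus A$ so that the remaining induced subgraph becomes strictly sparser. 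Using bipartiteness together with K\"onig's theorem to bound the vertex cover of the residual graph, one can show that after $O(1)$ such rounds the algorithm either reports an estimate above $(2/3 + \alpha)\mu(G)$ or produces a certified lower bound on $\mu(G[A])$; bipartiteness is essential here because in general graphs the 2/3 barrier is not known to be breakable by this route.

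Finally, for the purely multiplicative guarantee of \Cref{thm:apprxbetter23theoremlist} in the adjacency list model, we reuse the reduction at the end of Section~\ref{sec:23-algo}: the degree sequence can be recovered by binary search, and when $\mu(G) < n^{1-\Omega(1)}$ the total number of edges is $O(n\,\mu(G))$ and an exact maximum matching can be computed in strongly subquadratic time directly; otherwise $\mu(G) = \Omega(n^{1-\Omega(1)})$ absorbs the $o(n)$ additive error and converts the guarantee into a multiplicative $(2/3 + \Omega(1))$-approximation.
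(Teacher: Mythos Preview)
Your high-level plan is in the right spirit but contains a genuine gap in the second paragraph: the degree-based case split on $d$ is the \emph{intuition} from \cref{sec:techniques} for why the lower bound cannot exceed $n^{3/2}$, not an algorithm that can actually be executed. In the low-degree branch you propose to sample $\widetilde{O}(1)$ vertices of $A$ and query all their $G$-neighbors at cost $\widetilde{O}(nd)$, but (i) vertices in $A$ have $G$-degree $\Theta(n)$ in the hard instance (because of the dummy $T$-vertices), not $d$; and (ii) seeing the neighborhoods of $\widetilde{O}(1)$ vertices does not let you estimate $\mu(G[A])$. In the high-degree branch, listing the neighborhoods of $\Theta(n/d)$ random $A$-vertices reveals many $B$-vertices in the specific lower-bound construction, but in a general tight instance there is no single parameter $d$, and you have not explained how ``uncovering $V_{mid}$'' yields a size estimate for $\mu(G[A])$.

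The paper's actual mechanism is different and you are missing its two key algorithmic ideas. First, for what you call ``either certifies that $M$ already beats $2/3$'': one samples $k=n^{\gamma/2}$ random maximal matchings $M^i_{AB}$ in $G[V_{mid}]$, and for each $i$ estimates (Case~1) how many edges of $M^i_{AB}$ have both endpoints in $A$, and (Case~2) the size of a random greedy maximal matching in $G[V_{mid}\setminus V(M^i_{AB})]$ after attaching $\widetilde{O}(n^{1-\gamma})$ dummy leaves to every $B$-vertex so that $B$-vertices are forced to match dummies. The dummy-leaf trick is what makes Case~2 implementable without knowing $A$ versus $B$ in advance, and it is entirely absent from your plan. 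Second, the ``peeling'' step (Case~3) does not use K\"onig's theorem at all; instead one samples $\widetilde{O}(n^{1-\gamma/2})$ vertices of $A$, looks at their neighbors in $G_{M_{AB}}=\bigcup_i M^i_{AB}$, and removes every vertex of $V_{mid}$ that has at least a threshold $\eta$ such sampled neighbors. The analysis shows that this removes a constant fraction of $B$ while removing only $O(\alpha)\mu(G)$ vertices of $A$ per round, so $O(1)$ rounds suffice. Your appeal to K\"onig's theorem to ``bound the vertex cover of the residual graph'' does not correspond to anything in the argument and would not by itself yield the required progress measure. The final paragraph on converting to a multiplicative guarantee in the adjacency list model is correct and matches the paper.
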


To break $2/3$-approximation, we build on a characterization of tight instances of EDCS due to a recent work of \citet*{behnezhad2022dynamic}. In our context, the characterization asserts that if the output of our \Cref{alg:matching23} is not already a $(2/3 + \alpha)$-approximate matching of $G$ for some constant $\alpha \ge  10^{-24}$ (we did not try to optimize the constant), then there exists a $2/3$-approximate matching in $G'$ (defined in \Cref{alg:matching23}) such that it is far from being maximal in the original graph $G$.

\paragraph{Notation} 
Let $G$ be the input bipartite graph, and let $G' = (V, E_H \cup E_U)$ and $H = (V, E_H)$ be as defined in \Cref{alg:matching23}. Let $V_{low}$ and $V_{mid}$ be the set of vertices $v$ such that $\deg_H(v) \in [0, 0.2\beta]$ and $\deg_H(v) \in [0.4\beta, 0.6\beta]$, respectively. Let $G'' \coloneqq G'[V_{low}, V_{mid}]$.
Also, fix $M$ to be a $(1-\epsilon)$-approximate matching of $G'' \coloneqq G'[V_{low}, V_{mid}]$.
We let $A \coloneqq V_{mid} \setminus V(M)$ and $B \coloneqq V_{mid} \cap V(M)$ for the rest of this section.

\begin{lemma}[\cite{behnezhad2022dynamic}]\label{lem:tight-instance-edcs}
    Let $\epsilon < 1/120$  and $\alpha$ be some constant. If $\mu(G') \leq (\frac{2}{3} + \alpha)\mu(G)$, then both of the following guarantees hold:
    \begin{itemize}
        \item {\bf (G1 - $G''$ contains a large matching $M$)} $\mu(G'') \geq (\frac{2}{3} - 120 \sqrt{\alpha}) \cdot\mu(G),$ 
        \item {\bf (G2 -  $M$ can be augmented in $G[A]$)} For any matching $M$ in $G''$, we have $\mu(G[V_{mid} \setminus V(M)]) \geq (\frac{1}{3} - 800 \alpha)\cdot \mu(G).$
    \end{itemize}
\end{lemma}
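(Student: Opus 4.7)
The plan is to derive the two structural guarantees from a tight-case analysis of EDCS, in the spirit of \cite{behnezhad2022dynamic}. The starting point is that the $2/3$-approximation guarantee of \cref{pro:large-matching-union-edcs} is proved by a charging argument over an optimal matching $M^*$ of $G$: each edge of $M^*$ either already lies in $H \cup \widetilde{H}$, or by property (P2) its endpoints have enough $H$-degree to justify charging the edge to an augmenting path through $H$. The hypothesis $\mu(G') \leq (2/3 + \alpha)\mu(G)$ asserts that this charging is essentially tight, which rigidly constrains how $M^*$ must interact with the degree classes induced by $H$.

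To set up, I would refine the degree partition of $V$ into $V_{low}$ ($\deg_H \leq 0.2\beta$), $V_{mid}$ ($\deg_H \in [0.4\beta,0.6\beta]$), an intermediate class in between, and a high class with $\deg_H > 0.6\beta$. By properties (P1) and (P2), an $M^*$-edge whose endpoint degrees sum to less than $(1-\lambda)\beta$ must already lie in $H$, while $M^*$-edges outside $H$ have degree-sum at least $(1-\lambda)\beta$. A case analysis over which pair of degree classes each $M^*$-edge lands in shows that the contribution of that edge to a matching in $H \cup \widetilde{H}$ is minimized precisely when both endpoints lie in $V_{mid} \cup V_{low}$, one on each side, with both degrees close to $0.5\beta$. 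For guarantee G1, I would quantify the deficit: if $\mu(G'')$ were smaller than $(2/3 - 120\sqrt{\alpha})\mu(G)$, then a $\sqrt{\alpha}$-fraction of $M^*$ would deviate from this exact profile, and each deviating edge would contribute a strict additive gain (through shared-class augmentation, or through a high-degree neighbor). Summing these gains would produce a matching in $G'$ strictly exceeding $(2/3 + \alpha)\mu(G)$, contradicting the hypothesis; the $\sqrt{\alpha}$ exponent comes from optimizing the trade-off between how much mass can be spread away from the tight profile and how large a per-edge gain can be extracted.

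For guarantee G2, I would exploit the rigidity from G1 to locate many $V_{mid}$-$V_{mid}$ edges of $M^*$. Since the tight characterization forces almost all $M^*$-edges to be of the $V_{mid}$-$V_{low}$ type and to match each covered $V_{mid}$-vertex to a distinct $V_{low}$-vertex, a bipartite count forces $|V_{mid}|$ to be of order $\tfrac{2}{3}\mu(G)$. Any matching $M$ in $G''$ covers at most $\mu(G'') \leq (2/3+\alpha)\mu(G)$ vertices of $V_{mid}$, so at least a $\tfrac{1}{3}$-fraction of the $V_{mid}$-vertices that $M^*$ must saturate are left in $A = V_{mid}\setminus V(M)$. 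The $M^*$-edges incident to these residual vertices cannot all reach into $V_{low}$ (the $V_{low}$ side has already been used up by the guaranteed $V_{mid}$-$V_{low}$ portion of $M^*$), so a $(\tfrac{1}{3}-800\alpha)$-fraction of them must go to other $V_{mid}$-vertices and hence produce a matching inside $G[A]$. The larger multiplicative constant $800$ absorbs the $\alpha$-slack propagating from G1 together with boundary effects between degree classes.

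The main obstacle is the careful bookkeeping across the many case splits in the charging argument to produce the explicit constants $120\sqrt{\alpha}$ and $800\alpha$: one must track, for each pair of degree classes, both the probability mass of $M^*$ that lands in it and the per-edge gain obtainable from each deviation pattern. A secondary subtlety is to adapt the argument from a strict EDCS to the relaxed EDCS $G' = H\cup\widetilde{H}$, which amounts to checking that including the underfull edges $\widetilde{H}$ on top of $H$ only tightens the charging and does not disturb it beyond the $\lambda$-slack already incorporated into (P2). Both ingredients are worked out in \cite{behnezhad2022dynamic}, and our role here is to invoke their characterization on the specific relaxed EDCS produced by \cref{alg:matching23}.
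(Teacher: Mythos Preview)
The paper does not prove this lemma: it is imported from \cite{behnezhad2022dynamic} and used as a black box. Your closing sentence --- that the role here is simply to invoke the cited characterization on the relaxed EDCS produced by \cref{alg:matching23} --- is exactly what the paper does, and nothing further is required.

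Since you also offer a proof sketch, two points in your G2 outline deserve correction. First, the estimate $|V_{mid}|\approx\tfrac{2}{3}\mu(G)$ is inconsistent with the conclusion you are trying to reach: a matching $M\subseteq G''$ touches one $V_{mid}$-vertex per edge and can have size up to $\mu(G'')\le\mu(G')\le(\tfrac{2}{3}+\alpha)\mu(G)$, so under your estimate $M$ could cover essentially all of $V_{mid}$, leaving $A$ empty and $\mu(G[A])=0$. In the tight instance $|V_{mid}|$ is of order $\tfrac{4}{3}\mu(G)$ (see the discussion in \cref{sec:techniques}, where $V_{mid}$ corresponds to $A\cup B$ in the lower-bound graph, of size $\approx 4N$ with $\mu(G)\approx 3N$). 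Second, the step ``the $V_{low}$ side has already been used up by the guaranteed $V_{mid}$--$V_{low}$ portion of $M^*$'' conflates the arbitrary matching $M\subseteq G''$ in the statement of G2 with the optimal matching $M^*$; nothing forces them to share $V_{low}$-endpoints, so you cannot deduce that residual $M^*$-edges are pushed into $V_{mid}$ by exhaustion of $V_{low}$. Relatedly, your premise that ``almost all $M^*$-edges are of the $V_{mid}$--$V_{low}$ type'' is not what the tight case looks like: in the lower-bound instance only about two thirds of $M^*$ is $V_{mid}$--$V_{low}$, and the remaining third already sits inside $V_{mid}$ --- that third is precisely what witnesses G2.
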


\paragraph{An Informal Overview of the Algorithm:} Let $M$ be a $(1-\epsilon)$-approximate matching of $G'' \coloneqq G'[V_{low}, V_{mid}]$. According to this characterization, if \Cref{alg:matching23} does not obtain better than a $(2/3 + \alpha)$ approximation, then matching $M$ preserves the size of matching of $G'$ and also, it can be augmented using a matching with edges of $G[V_{mid} \setminus V(M)]$.  By \Cref{lem:tight-instance-edcs}, if the matching that we estimate in \Cref{alg:matching23} does not obtain better than $2/3$ approximation, then there exists a large matching in $G''$ that can be augmented using a matching between vertices of $A$.

Therefore, if we want to obtain a better than $2/3$ approximation, all we need is to estimate the size of a constant fraction of a matching in $G[A]$. The first challenge is that we do not know which vertices of $V_{mid}$ are in $A$ and which are in $B$. With a similar proof as proof of \Cref{lem:1-eps-approx-subroutine-new}, we can show that it is possible to query if a vertex is matched in $M$ in $\widetilde{O}_\epsilon(n^{1+\epsilon})$ time, since $G''$ is a subgraph of $G'$ and its average degree is smaller than $G'$. Moreover, for an edge in $G$, we can check if the $H$-degree of the edge is smaller than $(1-\lambda)\beta$ (the edge belongs to $G'$ or not), and one endpoint is in $V_{mid}$ and the other one in $V_{low}$ (the edge belongs to $G''$ or not) in $O(1)$. Thus, we can run the algorithm of \Cref{cor: fast-low-deg-alg}. We restate the lemma for subgraph $G''$.

\begin{lemma}\label{lem:mid-low-lca}
    There exists a $(1-\epsilon)$-approximation LCA algorithm for maximum matching of $G'' \coloneqq G'[V_{low}, V_{mid}]$ with $\widetilde{O}_\epsilon(n^{2-\epsilon^3})$ preprocessing time and $\widetilde{O}_\epsilon(n^{1+\epsilon})$ additional time per query.
\end{lemma}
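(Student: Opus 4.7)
The plan is to mimic the proof of \cref{lem:1-eps-approx-subroutine-new} almost verbatim, replacing $G'$ by its induced bipartite subgraph $G'' = G'[V_{low}, V_{mid}]$. The preprocessing stage first constructs the approximate EDCS $H$ as in \cref{alg:matching23}, which takes $\widetilde{O}_\epsilon(n^{2-\epsilon^3})$ time by \cref{clm:constructing-edcs}; during this construction I would additionally record $\deg_H(v)$ for every vertex $v$, so that membership of $v$ in $V_{low}$ or $V_{mid}$ can be decided in $O(1)$ by comparing $\deg_H(v)$ to the thresholds $0.2\beta$ and $0.4\beta$. Since $G''$ is a subgraph of $G'$, \cref{lem:sparsification} together with our choice of $\beta = \widetilde{O}(1)$ still bounds its average degree by $O\bigl((\mu(G) \cdot n^{\epsilon^3} \cdot \log n)/n\bigr)$.

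Next I would build an $O(1)$-time oracle for the predicate ``$(u,v) \in G''$'': given an adjacency query in $G$, this predicate holds iff exactly one of $u,v$ lies in $V_{low}$ and the other in $V_{mid}$, and either $(u,v) \in H$ (stored explicitly) or $\deg_H(u) + \deg_H(v) < (1-\lambda)\beta$ (both quantities are in the preprocessed table). Using this oracle I would estimate $\deg_{G''}(v)$ by sampling $k = \widetilde{O}(n^{1-\epsilon^3})$ random vertices $u$ and counting the fraction for which $(u,v) \in G''$, exactly as in the proof of \cref{lem:1-eps-approx-subroutine-new}; a Chernoff plus union bound gives additive error at most $n^{\epsilon^3}$ with high probability, in $\widetilde{O}_\epsilon(n^{1-\epsilon^3})$ time per queried vertex. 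I would then ``ignore'' every vertex whose estimated degree exceeds $n^{\epsilon^3}\log^2 n$. The average-degree bound on $G''$ implies at most $o(\mu(G))$ vertices are ignored, so their removal costs only an additive $o(n)$ error in the matching size.

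On the remaining graph the maximum $G''$-degree is $\widetilde{O}(n^{\epsilon^3})$, so I would invoke the LCA of \cref{cor: fast-low-deg-alg} with the $O(1)$ membership oracle above: for any queried vertex $v$ it decides, in time $\widetilde{O}_\epsilon(n \cdot (n^{\epsilon^3})^{1/\epsilon^2}) = \widetilde{O}_\epsilon(n^{1+\epsilon})$, whether $v$ is matched by a fixed $(1-\epsilon)$-approximate matching of $G''$ restricted to the non-ignored vertices. Running this with parameter $\epsilon/2$ and absorbing the $o(\mu(G))$ additive loss from the ignored vertices (just as in \cref{lem:1-eps-approx-subroutine-new}) yields a $(1-\epsilon)$-approximate LCA for $\mu(G'')$ within the claimed budget.

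The only mild obstacle is the standard one of feeding a graph defined implicitly by an edge-membership predicate into an LCA designed for explicit adjacency access; this is precisely what \cref{lem:lca-reduction} (and hence \cref{cor: fast-low-deg-alg}) was set up to accommodate, and our $O(1)$ test for $G''$-membership together with the degree-capping step fits its interface without modification. Beyond this, the proof is a direct transcription of the argument for \cref{lem:1-eps-approx-subroutine-new}, with $G'$ replaced by $G''$ throughout and the additional $V_{low}$/$V_{mid}$ side condition folded into the oracle.
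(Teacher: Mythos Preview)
Your proposal is correct and takes essentially the same approach as the paper: the paper's own justification for \cref{lem:mid-low-lca} is simply that the proof of \cref{lem:1-eps-approx-subroutine-new} carries over verbatim once one observes that $G''$ is a subgraph of $G'$ (hence inherits the average-degree bound) and that membership of an edge in $G''$ can be tested in $O(1)$ by checking the $H$-degree condition together with the $V_{low}/V_{mid}$ side constraint. You have spelled out exactly these ingredients, with a bit more detail than the paper bothers to give.
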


Since $G'[V_{low}, V_{mid}]$ is a subgraph of $G'$, by \Cref{lem:1-eps-approx-subroutine-new}, we can query if a vertex is matched in $M$ in $\widetilde{O}_\epsilon(n^{1+\epsilon})$. 
So the time to classify if a vertex is in $A$ or $B$ is $\widetilde{O}_\epsilon(n^{1+\epsilon})$ since we can query if it is matched in matching $M$ or not. 

Unfortunately, the subgraph $G[A\cup B]$ can be a dense graph (possibly with an average degree of $\Theta(n)$), so it is not possible to run an adaptive sublinear algorithm that adaptively queries if a vertex is in $A$ or not while looking for a matching. So we first try to sparsify the graph by sampling edges and constructing several maximal matchings. More specifically, we query $\delta k$ pairs of vertices (for $\delta = n^{1+\gamma}$ and $k = n^{\gamma/2}$) and partition the queries to $k$ equal buckets. Then we build a greedy maximal matching $M_{AB}^i$ among those sampled edges of bucket $i$. By this construction and the sparsification property of greedy maximal matching, we are able to show that the maximum degree of subgraph $G[V_{mid} \setminus V(M^i_{AB})]$ is $O(n^{1-\gamma})$ with high probability for each $i$.

We split the rest of the analysis into three possible cases.

    \paragraph{(Case 1) A constant fraction of vertices of $A$ are matched in at least one of $M_{AB}^i$:} in this case we immediately beat $2/3$ since we found a constant fraction of a maximum matching edges of $G[A]$.

    \paragraph{(Case 2) A constant fraction of maximum matching of $G[A]$ has both endpoints unmatched in at least one of $M_{AB}^i$:} Let $i^*$ be an index of matching $M_{AB}^{i^*}$ that a constant fraction of maximum matching of $G[A]$ has both endpoints unmatched. In this case, we prove that it is possible to estimate a constant approximation of the size of the matching between unmatched vertices of $A$ in sublinear time. 
    
    Let $G_{AB}^{i^*}$ be the same as subgraph $G[V_{mid} \setminus V(M_{AB}^{i^*})]$, except that we connect each vertex in $B \setminus V(M_{AB}^{i^*})$ to $\kappa = \widetilde{O}(n^{1-\gamma})$ dummy singleton vertices. The reason to connect singleton vertices to $B$ is that if we run a random greedy maximal matching on $G_{AB}^{i^*}$, most of the vertices in $B$ will match to singleton vertices with some additive error. Thus, if there exists a large matching between unmatched $A$ vertices, random greedy maximal matching will find at least half its edges. We sample $\widetilde{O}(1)$ vertices and test if they are matched in the random greedy maximal matching of $G_{AB}^{i^*}$. To do this, we use the algorithm of \cite{behnezhad2021} which has a running time of $\widetilde{O}(n^{1-\gamma})$ for a random vertex. However, for each vertex that this algorithm recursively makes a query, we have to classify if it is in $A$ or $B$ since we do not explicitly construct $G_{AB}^{i^*}$. As we discussed at the beginning of this overview, this task can be done in $\widetilde{O}_\epsilon(n^{1+\epsilon})$ time. For each maximal matching, the running time for this case is $\widetilde{O}_\epsilon(n^{2+\epsilon-\gamma})$. Therefore, the total running time is $\widetilde{O}_\epsilon(n^{2+\epsilon-\gamma/2})$ for all $n^{\gamma/2}$ maximal matchings which is sublinear if we choose $\gamma$ sufficiently larger than $\epsilon$.

Note that if we are not in one of the above cases, almost all edges of $n^{\gamma/2}$ maximal matchings have at most one endpoint in $A$ (as a result of not being in Case 1), and almost all edges of the maximum matching of $G[A]$ have at least one endpoint matched by all $n^{\gamma/2}$ maximal matchings (as a result of not being in Case 2). Thus, if we consider a vertex in $A$, almost all of its incident maximal matching edges go to set $B$. For this case, we now describe how it is possible to estimate the matching of $G[A]$.

    \paragraph{(Case 3) Almost every edge of maximum matching of $G[A]$ has at least one of its endpoints matched by almost all $M_{AB}^i$:} 
    In this case we will show that we can efficiently remove many $B$-vertices, and then repeat the algorithm and and analysis of Cases 1 and 2 (using fresh samples of $M^i_{AB}$. Because we can't keep removing $B$-vertices indefinitely, eventually we have to end up at either Case 1 or 2.
    
    Our goal is henceforth to efficiently identify many $B$-vertices. Suppose that we sample a vertex and check if it is in $A$ or $B$. If the sampled vertex is in $A$ and is one of the endpoints of maximum matching of $A$, most of its incident edges in $M_{AB}^1, M_{AB}^2, \ldots, M_{AB}^k$ are connected to vertices of $B$. Therefore, we are able to find $k = \Theta(n^{\gamma/2})$ vertices of the set $B$ by spending $\widetilde{O}_\epsilon(n^{1+\epsilon})$ to find a vertex in $A$. Hence, if we sample $\widetilde{\Theta}(n^{1-\gamma/2})$ such vertices, we are expecting to see a constant fraction of vertices of $B$; we can remove these vertices and run the same algorithm on the remaining graph again. One challenge that arises here is that some of the neighbors of the sampled vertex might be in $A$. To resolve this issue, we choose a threshold $\eta$ and only remove vertices that have at least $\eta$ maximal matching edges that are connected to sampled $A$ vertices. 
    This will help us to avoid removing many $A$ vertices in each round. Since initially the size of $B$ and $A$ is roughly the same (up to a constant factor), after a few rounds, we can estimate the size of the maximum matching of $A$. The total running time of this case is also $\widetilde{O}_\epsilon(n^{2+\epsilon-\gamma/2})$ which is the same as the previous case.

In what follows, we present the formal algorithm (\Cref{alg:matchingbetterthan23}) and formalize the technical overview given in previous
paragraphs.

\begin{algorithm}[]
\caption{Better Than 2/3-Approximate Matching Algorithm in Adjacency List and Matrix}
\label{alg:matchingbetterthan23}
\textbf{Parameter:} $\epsilon, \gamma$.

$r_1 \leftarrow 72 \log^3 n, r_2 \leftarrow 288 \log^3 n, r_3 \leftarrow 10n^{1-\gamma/2}\log n$.

$T \leftarrow 200, \alpha \leftarrow 10^{-10}, \kappa \leftarrow 48n^{1-\gamma}\log^2 n, \delta \leftarrow n^{1+\gamma}, k \leftarrow n^{\gamma/2}$, $c \leftarrow 10^{16}$.

Let $\widetilde{\mu}$ be the output of \Cref{alg:matching23} on $G$ with parameter $\epsilon$. Also, let $\beta$, $H$, $E_U$, and $G' = (V, E_H \cup E_U)$ be as defined in \Cref{alg:matching23}.

$\eta \leftarrow \frac{n\log n}{100\widetilde{\mu}_1}$.

Let $V_{low} = \{v | \deg_H(v) \in [0, 0.2\beta] \}$, $V_{mid} = \{v | \deg_H(v) \in [0.4\beta, 0.6\beta] \}$, and $G'' \coloneqq G'[V_{low}, V_{mid}]$.

Let $\widetilde{\mu}_1$ be an estimate with an additive error of $o(n)$ for the size of $(1-\epsilon)$-approximate matching of $G''$ by sampling $\widetilde{O}(1)$ vertices and running the algorithm of \Cref{lem:mid-low-lca}.

\For{$j$ in $1 \ldots T$}{
    Sample $k\delta$ pairs of vertices of $V_{mid}$ and partition them into $k$ buckets of equal size. Let $M_{AB}^i$ be the greedy maximal matching on the sampled edges of bucket $i$.

    \For{$i$ in $1 \ldots k$}{

        \textbf{// Beginning of case 1}
        
        Sample $r_1$ random edges $e_1, \ldots, e_{r_1}$ of $M_{AB}^i$.

        Let $X_{\ell}$ be the indicator if neither endpoint of $e_{\ell}$ is matched in the $(1-\epsilon)$-approximate matching of $G''$ by running algorithm of \Cref{lem:mid-low-lca}. 

        Let $X \leftarrow \sum^{r_1}_{\ell=1} X_{\ell}$ and $\widetilde{\mu}_2 = \frac{nX}{r_1} - \frac{n}{2\log n}$.

        \lIf{$\widetilde{\mu}_2 \geq c\alpha \widetilde{\mu}_1$}{
            \Return $\widetilde{\mu}_1 + \widetilde{\mu}_2$
        }

        \textbf{// End of case 1}

        \textbf{// Beginning of case 2}
        
        Let $G_{AB}^{i}$ be the same as subgraph $G[V_{mid} \setminus V(M_{AB}^{i})]$, except that we connect each vertex in $B \setminus V(M_{AB}^{i})$ to $\kappa$ singleton vertices.

        Sample $r_2$ random vertices $v_1, \ldots, v_{r_2}$ of vertex set $A \setminus V(M_{AB}^{i})$.

        Let $Y_{\ell}$ be the indicator if $v_{\ell}$ is matched in the random greedy maximal matching of $G_{AB}^i$ (algorithm of \cite{behnezhad2021}).

        Let $Y \leftarrow \sum^{r_2}_{\ell=1} Y_{\ell}$ and $\widetilde{\mu}_3 = \frac{nY}{2r_2} - \frac{n}{2\log n}$.

        \lIf{$\widetilde{\mu}_3 \geq c\alpha \widetilde{\mu}_1$}{
            \Return $\widetilde{\mu}_1 + \widetilde{\mu}_3$
        }

        \textbf{// End of case 2}
    }

    \textbf{// Beginning of case 3}
        
    Sample $r_3$ vertices $u_1, u_2, \ldots u_{r_3}$ of vertex set $A$.

    Define $G_{M_{AB}} = (V_{mid}, \bigcup_i^k M_{AB}^i)$.

    Let $C \subseteq V_{mid}$ be the set of vertices that has at least $\eta$ neighbors in $G_{M_{AB}}$ among $r_3$ sampled vertices of $A$.
    \algorithmiccomment{$C$ vertices likely in $B$}

    $V_{mid} \leftarrow V_{mid} \setminus C$. \algorithmiccomment{Remove those vertices} \label{Line:remove}

    \textbf{// End of case 3}
}

\Return $\widetilde{\mu}$

\end{algorithm}

\subsection{Analysis of Running Time}

We analyze each part of \Cref{alg:matchingbetterthan23} separately in this section and at the end, we put everything together. Before starting the analysis of the running time of \Cref{alg:matchingbetterthan23}, we first restate the following result on the time complexity of estimating the size of the maximal matching by Behnezhad \cite{behnezhad2021}.

\begin{lemma}[Lemma 4.1 of \cite{behnezhad2021}]\label{lem:RGMM-query-complexity}
    There exists an algorithm that draws a random permutation over the edges of the graph and for an arbitrary vertex $v$ in graph $G$, it determines if $v$ is matched in the random greedy maximal matching of $G$ in $\widetilde{O}(\bar{d})$ time with high probability, using adjacency list of $G$, where $\bar{d}$ is the average degree of $G$.
\end{lemma}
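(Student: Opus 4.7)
The plan is to implement the standard Yoshida--Yamamoto--Ito style ``query oracle'' for the random greedy maximal matching (RGMM) and show that for any fixed vertex $v$, the oracle terminates in $\widetilde{O}(\bar{d})$ time with high probability over the random edge permutation $\pi$.

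First, I would define the recursive procedure $\textsc{Matched}(v)$. The procedure inspects $v$'s incident edges in order of increasing $\pi$-rank. For each edge $e=(v,u)$ taken in this order, the procedure recursively asks whether $(v,u)$ is in the RGMM; this reduces to checking, for every edge incident to $v$ or $u$ with rank smaller than $\pi(e)$, whether \emph{that} edge is in the RGMM. The edge $(v,u)$ belongs to the RGMM iff none of those lower-ranked incident edges does; as soon as $\textsc{Matched}(v)$ detects an incident edge that is in the RGMM, it returns $\true$, and otherwise $\false$. The adjacency list is accessed only via a ``reveal the next lowest-rank unexplored neighbor'' primitive, which can be supported in polylogarithmic amortized time by lazily sampling neighbors in a permutation consistent with $\pi$.

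The heart of the proof is bounding the size of the resulting recursion tree. Writing $T(v)$ for the total number of edges explored when $\textsc{Matched}(v)$ is called, a backwards analysis over the uniformly random ranks shows $\mathbb{E}[T(v)] = \widetilde{O}(\bar{d})$: at each recursive call on an edge $e$ of rank $r$, only the edges incident to the endpoints of $e$ with rank below $r$ are explored, and integrating over the uniform $r\in[0,1]$ together with $\sum_u d_u = 2m$ telescopes the expected work down to $O(m/n)\cdot \operatorname{polylog}(n)$. To upgrade this expectation bound into a high-probability statement, I would truncate the recursion at a rank threshold $\tau = \Theta(\log n/\bar{d})$: with high probability every vertex whose status is needed by $\textsc{Matched}(v)$ is already decided by edges of rank at most $\tau$, and the total number of rank-$\leq\tau$ edges encountered in the local exploration around $v$ concentrates at $\widetilde{O}(\bar{d})$ by a Chernoff bound, since in expectation it equals $\tau$ times the local edge count.

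The main obstacle is converting the in-expectation bound on the recursion size into a per-vertex high-probability bound. A naive union bound over $v$ would cost a factor of $n$; instead one must argue that the local exploration around any fixed $v$ is dominated by a bounded-depth branching process whose total progeny concentrates sharply, and that the truncation at threshold $\tau$ fails with probability only $1/\operatorname{poly}(n)$ at each vertex. A secondary subtlety is that the adjacency-list model exposes neighbors in an arbitrary order, so one must implement the ``next lowest-rank neighbor'' primitive without enumerating all $d_v$ neighbors of a high-degree vertex up front; this is what ultimately gives the $\widetilde{O}(\bar{d})$ rather than $O(\Delta)$ bound, and it is the step most sensitive to the exact query model assumed.
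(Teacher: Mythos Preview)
The paper does not prove this lemma; it is quoted as Lemma~4.1 of \cite{behnezhad2021} and invoked as a black box in \cref{clm:estimating-aa-missing-matching}. There is therefore no ``paper's own proof'' to compare your proposal against.

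On the substance of your sketch: the recursive oracle and the expected-work analysis you outline in the first two paragraphs are indeed the approach of \cite{behnezhad2021}. The key quantitative fact there is that $\sum_{v} \E_\pi[T(v)] = \widetilde O(m)$, i.e., the \emph{average} query cost over a uniformly random starting vertex is $\widetilde O(\bar d)$. Your third paragraph, however, tries to upgrade this to a per-vertex high-probability bound via a rank truncation at $\tau=\Theta(\log n/\bar d)$ and a Chernoff bound on the local edge count; this step does not go through as written. At that threshold a vertex of degree $\Delta$ already has $\Theta(\Delta\log n/\bar d)$ incident rank-$\le\tau$ edges in expectation, which is not $\widetilde O(\bar d)$ in general (think of the center of a star), so neither the truncation nor the concentration argument yields $\widetilde O(\bar d)$ for an \emph{arbitrary} vertex. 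The actual guarantee in \cite{behnezhad2021} is for a random vertex (boosted to high probability by repeating with independent permutations and applying Markov), and that is also how the present paper uses it: the vertices on which the oracle is called in \cref{alg:matchingbetterthan23} are themselves randomly sampled.
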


\begin{claim}\label{clm:fast-maximal-matching}
    Let $M_{AB}^i$ be as defined in \Cref{alg:matchingbetterthan23}. The running time for constructing $M_{AB}^i$ for all $1 \leq i\leq k$, takes $O(n^{1+3\gamma/2})$ time for all $T$ iterations of the algorithm.
\end{claim}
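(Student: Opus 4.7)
The proof of this claim reduces to routine accounting, since the greedy maximal matching is built only over the sampled edges and not over all incident edges in $G[V_{mid}]$.

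The plan is first to bound the total time spent sampling. Across the $T$ outer iterations, the algorithm draws a total of $T \cdot k \cdot \delta = 200 \cdot n^{\gamma/2} \cdot n^{1+\gamma} = O(n^{1+3\gamma/2})$ pairs of vertices from $V_{mid}$. Each such draw costs $O(1)$ time: in the adjacency matrix model, we pick two uniform random vertices of $V_{mid}$ and issue a single matrix query to determine whether they form an edge; in the adjacency list model, we sample an edge directly by first selecting a vertex with probability proportional to its degree (via a prefix-sum or alias table built once in $O(n)$ time from the known degrees) and then selecting a uniform random neighbor in $O(1)$.

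Next, I would bound the time needed to build the greedy maximal matchings themselves. For a fixed bucket $i$ of size $\delta$, the matching $M_{AB}^i$ can be constructed in $O(\delta)$ time by a single pass over the sampled edges, maintaining a boolean ``matched'' array indexed by $V$ so that the freeness of both endpoints can be checked and updated in $O(1)$ per edge. Summing over the $k = n^{\gamma/2}$ buckets in one outer iteration gives $O(k \delta) = O(n^{1+3\gamma/2})$, and summing over the $T = O(1)$ outer iterations preserves this bound.

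There is essentially no difficult step here: the only subtlety worth flagging is that at this stage we have never explicitly materialized the potentially dense induced subgraph $G[V_{mid}]$, so each $M_{AB}^i$ is a greedy maximal matching of a sparse list of $\delta$ candidate edges rather than of any vertex's full neighborhood. Combining the two contributions above yields the claimed $O(n^{1+3\gamma/2})$ total runtime across all $T$ iterations.
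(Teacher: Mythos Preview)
The proposal is correct and takes essentially the same approach as the paper: both account for $T \cdot k \cdot \delta = O(1) \cdot n^{\gamma/2} \cdot n^{1+\gamma} = O(n^{1+3\gamma/2})$ sampled pairs and observe that a greedy maximal matching over each bucket runs in time linear in the bucket size. Your version supplies more implementation detail (the $O(1)$ cost per sample in each query model, the boolean-array realization of the greedy scan) than the paper's terse proof, but the underlying accounting is identical.
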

\begin{proof}
    Fix an iteration in the algorithm. Since we sample $\delta k = n^{1+3\gamma/2}$ pairs of vertices and we have $k$ buckets of equal size, each bucket can have at most $\delta$ edges. For each of the buckets, in order to construct the maximal matching we need to iterate over the edges of the bucket one by one which takes $O(\delta k)$ time for all buckets. The proof follows from $T = O(1)$.
\end{proof}

\begin{claim}\label{clm:preprocessing-time}
    The total preprocessing time before all iterations of \Cref{alg:matchingbetterthan23} is $\widetilde{O}_\epsilon(n^{2-\epsilon^3})$.
\end{claim}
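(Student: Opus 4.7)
The plan is to inspect each step of Algorithm \ref{alg:matchingbetterthan23} that executes before the outer \textbf{for} loop over $j$ and bound its running time, then sum the bounds. The preprocessing consists of: (i) setting numerical parameters, (ii) invoking Algorithm \ref{alg:matching23} on $G$ to obtain $\widetilde{\mu}$, $\beta$, $H$, $E_U$, and $G'$, (iii) setting $\eta$, (iv) defining $V_{low}$, $V_{mid}$, and the induced subgraph $G''$, and (v) estimating $\widetilde{\mu}_1$ via the LCA of Lemma \ref{lem:mid-low-lca} on a small random sample of vertices. Steps (i) and (iii) require only $O(1)$ arithmetic on already-computed quantities.

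For step (ii), I would invoke Theorem \ref{thm:apprx23theorem} (which is the formal version of Algorithm \ref{alg:matching23}'s guarantees); it runs in $\widetilde{O}_\epsilon(n^{2-\epsilon^3})$ time, and as a byproduct it produces the subgraph $H$, the list of edges $E_U$ needed to define $G'$ implicitly, and the estimate $\widetilde{\mu}$. For step (iv), we do not need to materialize $G''$; we only need to identify membership in $V_{low}$ and $V_{mid}$. Since $H$ is explicitly maintained during the course of Algorithm \ref{alg:matching23} and satisfies $|E_H| = O(n\beta) = \widetilde{O}_\epsilon(n)$, we can compute all degrees $\deg_H(v)$ in $\widetilde{O}_\epsilon(n)$ time and partition the vertices accordingly. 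Edge queries within $G''$ during later stages will be handled by the oracle of Lemma \ref{lem:mid-low-lca} (which itself only needs $H$ in memory and access to $G$ in order to test the defining properties of $G'$ and of $G''$).

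For step (v), Lemma \ref{lem:mid-low-lca} provides a $(1-\epsilon)$-approximate LCA for $\mu(G'')$ with preprocessing $\widetilde{O}_\epsilon(n^{2-\epsilon^3})$ and per-query cost $\widetilde{O}_\epsilon(n^{1+\epsilon})$. To turn this into an estimate of $\mu(G'')$ with only $o(n)$ additive error, I would mimic the sampling scheme already used in Algorithm \ref{alg:matching23} (Lines 8–10): sample $r = \Theta(\log^3 n)$ vertices, query the LCA at each to decide whether it is matched, and use the standard Chernoff argument from Lemma \ref{lem:approx-ratio-32} to conclude that the scaled count concentrates around $\mu(G'')$ up to additive error $O(n/\log n) = o(n)$ with high probability. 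This contributes $\widetilde{O}_\epsilon(n^{2-\epsilon^3}) + \widetilde{O}(1) \cdot \widetilde{O}_\epsilon(n^{1+\epsilon}) = \widetilde{O}_\epsilon(n^{2-\epsilon^3})$.

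Summing the bounds gives $\widetilde{O}_\epsilon(n^{2-\epsilon^3})$ overall, as claimed. There is no conceptual obstacle; the only mild subtlety is verifying that the LCA machinery of Lemma \ref{lem:mid-low-lca} can indeed be deployed using only the data already stored after Algorithm \ref{alg:matching23} (namely $H$, the threshold $(1-\lambda)\beta$, and the class labels $V_{low}, V_{mid}$), together with query access to $G$. This follows because, given any edge $(u,v)$ of $G$, membership in $G' = (V, E_H \cup E_U)$ can be decided in $O(1)$ time by checking $\deg_H(u)+\deg_H(v) < (1-\lambda)\beta$, and membership in $G''$ additionally requires only that $\{u,v\}$ straddles $V_{low}$ and $V_{mid}$, both tests being $O(1)$.
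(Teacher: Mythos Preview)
Your proposal is correct and follows essentially the same approach as the paper, which simply cites the $\widetilde{O}_\epsilon(n^{2-\epsilon^3})$ preprocessing bound of \cref{lem:mid-low-lca}. Your write-up is considerably more detailed than the paper's one-line proof, explicitly itemizing the cost of running \cref{alg:matching23}, computing the degree classes from $H$, and sampling for $\widetilde{\mu}_1$, but all of this is implicitly folded into the preprocessing of \cref{lem:mid-low-lca} in the paper's version.
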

\begin{proof}
    \Cref{alg:matchingbetterthan23} uses subroutine of \Cref{lem:mid-low-lca} to query if a vertex is matched in $(1-\epsilon)$-approximate matching of $G''$ during its execution. Thus, by \Cref{lem:mid-low-lca}, the total preprocessing time needed for the algorithm is $\widetilde{O}_\epsilon(n^{2-\epsilon^3})$.
\end{proof}

\begin{claim}\label{clm:maximal-matching-aa-edges}
    Every time that \Cref{alg:matchingbetterthan23} calculates $\widetilde{\mu}_2$, it takes $\widetilde{O}_\epsilon(n^{1+\epsilon})$ time. Furthermore, the total running time for computing $\widetilde{\mu}_2$ in the whole process of algorithm is $\widetilde{O}_\epsilon(n^{1+\epsilon + \gamma/2})$ time.
\end{claim}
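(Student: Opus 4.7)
The plan is to directly account for the queries made when computing $\widetilde{\mu}_2$ and then multiply by the number of times the algorithm enters the relevant code block. Recall from the algorithm that $\widetilde{\mu}_2$ is obtained by first sampling $r_1 = 72\log^3 n$ random edges $e_1, \ldots, e_{r_1}$ of $M_{AB}^i$ (which is already in memory after Line 8 of \Cref{alg:matchingbetterthan23}), and then for each endpoint of each such edge invoking the LCA of \Cref{lem:mid-low-lca} to decide whether that endpoint is matched in the $(1-\epsilon)$-approximate matching of $G''$. Sampling the $r_1$ edges from the list $M_{AB}^i$ costs only $\widetilde{O}(1)$ time.

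For the per-computation bound I would argue as follows. The preprocessing of the LCA is done once, before the main loop, and its cost is absorbed into \Cref{clm:preprocessing-time}; it is therefore not charged here. Each individual query to the LCA costs $\widetilde{O}_\epsilon(n^{1+\epsilon})$ additional time by \Cref{lem:mid-low-lca}. Since we make exactly $2r_1 = \widetilde{O}(1)$ such queries (two endpoints per sampled edge), one full computation of $\widetilde{\mu}_2$ incurs a cost of $\widetilde{O}(1)\cdot \widetilde{O}_\epsilon(n^{1+\epsilon}) = \widetilde{O}_\epsilon(n^{1+\epsilon})$, giving the first part of the claim.

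For the aggregate bound, I will count how many times $\widetilde{\mu}_2$ is computed over the whole execution of \Cref{alg:matchingbetterthan23}. The $\widetilde{\mu}_2$ block sits inside the inner \textbf{for} loop over $i \in \{1,\ldots,k\}$ with $k = n^{\gamma/2}$, which in turn sits inside the outer \textbf{for} loop over $j \in \{1,\ldots,T\}$ with $T = 200 = O(1)$. Hence $\widetilde{\mu}_2$ is computed at most $T\cdot k = O(n^{\gamma/2})$ times. Multiplying the per-computation bound $\widetilde{O}_\epsilon(n^{1+\epsilon})$ by $O(n^{\gamma/2})$ yields the total cost $\widetilde{O}_\epsilon(n^{1+\epsilon+\gamma/2})$, as stated.

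I do not expect any real obstacle: the entire argument is just bookkeeping that combines the constant-in-$n$ sample size $r_1$, the per-query LCA time from \Cref{lem:mid-low-lca}, and the fact that the outer loops contribute only a $T\cdot k = O(n^{\gamma/2})$ overhead. The only mild subtlety is to make it explicit that the $\widetilde{O}_\epsilon(n^{2-\epsilon^3})$ preprocessing time of \Cref{lem:mid-low-lca} is not counted here because it has already been charged once and for all in \Cref{clm:preprocessing-time}; otherwise the per-computation bound would not hold.
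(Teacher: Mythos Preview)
Your proposal is correct and follows the same approach as the paper: invoke the $\widetilde{O}_\epsilon(n^{1+\epsilon})$ per-query cost from \Cref{lem:mid-low-lca}, use that $r_1=\widetilde{O}(1)$ samples are drawn, and then multiply by the $T\cdot k=O(n^{\gamma/2})$ iterations for the aggregate bound. If anything, your write-up is more explicit than the paper's (which leaves the second part implicit), and your remark about the preprocessing being charged separately in \Cref{clm:preprocessing-time} is a helpful clarification.
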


\begin{proof}
    By \Cref{lem:mid-low-lca}, each query to see if a vertex is matched in $(1-\epsilon)$-approximate matching of $G''$, takes $\widetilde{O}_\epsilon(n^{1+\epsilon})$ time. Since we sample $\widetilde{O}(1)$ vertices, we get the claimed time complexity.
\end{proof}

\begin{lemma}\label{lem: sparsification-property}
    Subgraph $G[V_{mid} \setminus V(M^i_{AB})]$ has a maximum degree of $6n^{1-\gamma}\log n$ with high probability.
\end{lemma}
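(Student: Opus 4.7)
The plan is to apply a standard greedy sparsification argument. Set $D := 6n^{1-\gamma}\log n$, $N := |V_{mid}| \le n$, and let
\[
p := 1 - \Big(1 - \tfrac{1}{\binom{N}{2}}\Big)^\delta \;\ge\; \tfrac{2\delta}{N^2} \;\ge\; 2n^{\gamma-1}
\]
denote the probability that any fixed pair in $\binom{V_{mid}}{2}$ is drawn at least once among the $\delta = n^{1+\gamma}$ samples of bucket $i$; in particular $p(D+1) \ge 12\log n$.

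First I would fix a vertex $v \in V_{mid}$ and bound the probability that it violates the claimed degree bound in $G[V_{mid} \setminus V(M^i_{AB})]$; a union bound over $|V_{mid}| \le n$ vertices then yields the lemma. If $d_v := \deg_{G[V_{mid}]}(v) \le D$ the bound is vacuous, so assume $d_v > D$. The critical fact, coming from maximality of $M^i_{AB}$ on the sampled pairs, is: if $v$ is unmatched in $M^i_{AB}$ and $u$ is an unmatched $G[V_{mid}]$-neighbor of $v$, then $(v,u)$ cannot have been sampled---otherwise the greedy procedure would have matched $v$ with $u$ when $(v,u)$ was processed, contradicting $v \notin V(M^i_{AB})$. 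Thus the bad event \emph{``$v$ unmatched with residual degree $> D$''} implies that there exist $D+1$ specific incident pairs $(v, u_1), \ldots, (v, u_{D+1})$ none of which is sampled, an event that for any fixed such subset has probability at most $(1-p)^{D+1} \le e^{-p(D+1)} \le n^{-12}$.

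The remaining step is to convert this per-subset probability into a per-vertex probability without the naive $\binom{d_v}{D+1}$ blowup from union-bounding over subsets. I would handle this by bounding directly the number $s_v$ of sampled incident pairs of $v$: the bad event forces $s_v < d_v - D$, while $\E[s_v] = d_v p \ge 12\log n$. A Chernoff-type concentration applied to the negatively correlated pair-sampling indicators then gives bad probability at most $n^{-3}$, and a union bound over the at most $n$ vertices of $V_{mid}$ finishes the proof.

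The main obstacle I expect is the regime where $d_v$ is much larger than $D$, in which $\E[s_v] = d_v p$ may fall well short of $d_v - D$, so the Chernoff bound on $s_v$ does not on its own rule out the bad event. In this regime one instead exploits the random-order greedy dynamics: when $v$ has many sampled incident pairs, at least one is processed sufficiently early in the permutation that its other endpoint is still free, which matches $v$ with overwhelming probability. Combining this with the earlier Chernoff argument via a case split on $d_v$ (say, $D < d_v \le n^{1-\gamma}\log^2 n$ versus $d_v$ larger) and then taking the union bound over $v \in V_{mid}$ completes the argument.
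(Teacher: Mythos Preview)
Your Chernoff argument on $s_v$ does not work in the regime where you claim it does. Take $d_v = 2D$: the bad event forces $s_v < d_v - D = D = 6n^{1-\gamma}\log n$, but $\E[s_v] = d_v\, p \approx 2D \cdot 2n^{\gamma-1} = 24\log n$, which is vastly smaller than $D$. So $\Pr[s_v < D]$ is essentially $1$, and no concentration bound will make it small. More generally, $\Pr[s_v < d_v - D]$ can only be small when $d_v - D \lesssim \E[s_v] = d_v p$, i.e.\ when $d_v \le D/(1-p) \approx D + O(\log n)$; outside this tiny window the bound is vacuous. Your proposed case split at $d_v \approx n^{1-\gamma}\log^2 n$ therefore does not salvage anything, since the failure already occurs at $d_v = 2D$, well inside your first case.

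The paper's proof is both simpler and avoids this issue entirely. It argues sequentially over the $\delta = n^{1+\gamma}$ sampling steps: if the bad event eventually holds, then at \emph{every} intermediate step $v$ is still unmatched and still has more than $D$ currently-unmatched $G[V_{mid}]$-neighbors (the unmatched set only shrinks). Conditioned on this, the step samples one of those $>D$ live edges with probability at least $D/n^2$, and doing so matches $v$ immediately. Hence the bad configuration survives all $\delta$ steps with probability at most $(1 - D/n^2)^{\delta} \le e^{-6\log n} = n^{-6}$, and a union bound over $v$ finishes. The quantity to track is the number of \emph{currently unmatched} neighbors of $v$ (which the bad event pins above $D$ throughout), not the static count $s_v$ of sampled incident pairs. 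Your last-paragraph ``dynamics'' intuition is in fact exactly this argument---it is what is needed for \emph{all} $d_v > D$, and the $s_v$-Chernoff step should simply be dropped.
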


\begin{proof}
    Consider a vertex $v$ that is not matched in $M^i_{AB}$ and has a degree larger than $6n^{1-\gamma}\log n$ in $G[V_{mid} \setminus V(M^i_{AB})]$. The only way that $v$ remains unmatched is that none of its incident edges in $G[V_{mid} \setminus V(M^i_{AB})]$ get sampled in $n^{1+\gamma}$ pairs of vertices we sampled. The probability that we sample one of these edges is at least $6n^{1-\gamma}\log n/n^2$, since there are at most $n^2$ possible pairs of vertices. Thus, the probability that none of these edges sampled in $n^{1+\gamma}$ samples when constructing $M_{AB}^i$ is at most
    \begin{align*}
        \left( 1 - \frac{6n^{1-\gamma}\log n}{n^2} \right)^{n^{1 + \gamma}} = \left( 1 - \frac{6\log n}{n^{1+\gamma}} \right)^{n^{1 + \gamma}} \leq n^{-6}.
    \end{align*}
    Taking a union bound over all vertices of the graph completes the proof.
\end{proof}

\begin{claim}\label{clm:estimating-aa-missing-matching}
    Every time that \Cref{alg:matchingbetterthan23} calculates $\widetilde{\mu}_3$, it takes $\widetilde{O}_\epsilon(n^{2+\epsilon-\gamma})$ time. Furthermore, the total running time for computing $\widetilde{\mu}_3$ in the whole process of algorithm is $\widetilde{O}_\epsilon(n^{2+\epsilon-\gamma/2})$ time.
\end{claim}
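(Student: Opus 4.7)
The plan is to bound the cost of a single evaluation of $\widetilde{\mu}_3$ by bounding the per-query cost of the RGMM oracle (\Cref{lem:RGMM-query-complexity}) applied to $G_{AB}^i$, then multiply by the $r_2 = \widetilde{O}(1)$ sampled vertices and by the $T \cdot k = O(n^{\gamma/2})$ invocations of this step across the algorithm. First I would upper-bound the (maximum, and therefore average) degree of $G_{AB}^i$: \Cref{lem: sparsification-property} gives a max-degree bound of $\widetilde{O}(n^{1-\gamma})$ on the real edges $G[V_{mid}\setminus V(M_{AB}^i)]$, and each $B$-vertex contributes an additional $\kappa = \widetilde{O}(n^{1-\gamma})$ dummy singleton neighbors, so $\bar{d}(G_{AB}^i) = \widetilde{O}(n^{1-\gamma})$.

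Next I would argue that each adjacency probe of the RGMM oracle on $G_{AB}^i$ can be simulated in $\widetilde{O}_\epsilon(n^{1+\epsilon})$ time using access only to $G$. To list neighbors of a visited vertex $v$ in $G_{AB}^i$ we walk the $G$-adjacency list of $v$ and keep only those endpoints lying in $V_{mid}\setminus V(M_{AB}^i)$ (each filter test is $O(1)$, since $V_{mid}$ is known after the preprocessing of $H$ and $V(M_{AB}^i)$ is precomputed); in addition, we must decide whether $v\in B$ in order to append its $\kappa$ dummy neighbors, and since $B = V_{mid}\cap V(M)$ this amounts to a query to the $(1-\epsilon)$-matching of $G''$, which \Cref{lem:mid-low-lca} answers in $\widetilde{O}_\epsilon(n^{1+\epsilon})$ time (cached once per vertex). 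This yields the per-probe overhead bound, and combining with the $\widetilde{O}(\bar{d}) = \widetilde{O}(n^{1-\gamma})$ probe bound of \Cref{lem:RGMM-query-complexity} gives $\widetilde{O}_\epsilon(n^{2-\gamma+\epsilon})$ per call to the $Y_\ell$ oracle.

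Drawing a fresh $v_\ell$ from $A\setminus V(M_{AB}^i)$ by rejection-sampling $V_{mid}$ also costs only $\widetilde{O}_\epsilon(n^{1+\epsilon})$ per success, and is dominated by the $Y_\ell$ oracle cost whenever $|A\setminus V(M_{AB}^i)|$ is large enough for $\widetilde{\mu}_3$ to potentially exceed the $c\alpha\widetilde{\mu}_1$ threshold (in the opposite regime we can simply cap the number of rejection attempts, since the branch will not be taken). Summing over $r_2 = \widetilde{O}(1)$ samples therefore yields $\widetilde{O}_\epsilon(n^{2+\epsilon-\gamma})$ per evaluation of $\widetilde{\mu}_3$, and summing over the $T \cdot k = O(n^{\gamma/2})$ evaluations yields the claimed total $\widetilde{O}_\epsilon(n^{2+\epsilon-\gamma/2})$.

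The main obstacle is the per-probe simulation analysis, but this essentially reprises the reduction used in the proof of \Cref{lem:1-eps-approx-subroutine-new}, where adjacency queries to $G'$ were served from $G$ together with a low-degree membership oracle. The only new ingredient here is the constant-time insertion of $\kappa$ dummy singleton edges for each $B$-vertex, which does not affect the asymptotics of the simulation.
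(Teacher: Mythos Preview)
Your proposal is correct and follows essentially the same approach as the paper: bound the degree of $G_{AB}^i$ via \Cref{lem: sparsification-property} (plus the $\kappa$ dummy edges), invoke the $\widetilde{O}(\bar d)$ RGMM oracle of \Cref{lem:RGMM-query-complexity}, charge an $\widetilde{O}_\epsilon(n^{1+\epsilon})$ overhead per visited vertex for the $A$/$B$ classification via \Cref{lem:mid-low-lca}, and multiply by $r_2=\widetilde{O}(1)$ and $T\cdot k=O(n^{\gamma/2})$. Your write-up is in fact more explicit than the paper's on two points the paper leaves implicit (the contribution of the $\kappa$ dummy singletons to the degree bound, and the mechanics of simulating the $G_{AB}^i$ adjacency list from $G$), and your rejection-sampling remark corresponds to the paper's one-line note that $|A|$ is a constant fraction of $|V_{mid}|$.
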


\begin{proof}
    By \Cref{lem: sparsification-property}, the maximum degree of subgraph $G_{AB}^i$ is $\widetilde{O}(n^{1-\gamma})$ with high probability. Thus, by \Cref{lem:RGMM-query-complexity}, indicating if a vertex is matched in the random greedy maximal matching of $G^i_{AB}$ takes $\widetilde{O}(n^{1-\gamma})$ time. Moreover, for each vertex in the process of running a random greedy maximal matching local query algorithm, we need to distinguish if it is a vertex in $A$ or $B$ since we build subgraph $G^i_{AB}$ on the fly (we cannot afford to build the whole graph at the beginning which takes $\Theta(n^2)$ time). By \Cref{lem:mid-low-lca}, each of these queries takes $\widetilde{O}_\epsilon(n^{1+\epsilon})$ time. Therefore, it takes $\widetilde{O}_\epsilon(n^{2+\epsilon-\gamma})$ time to calculate $\widetilde{\mu}_3$ each time. Combining with the fact that $r_2 = \widetilde{O}(1)$, we have $O(n^{\gamma/2})$ maximal matchings, the size of set $A$ is a constant fraction of $V_{mid}$, and $T = O(1)$ implies the claimed running time.
\end{proof}

\begin{claim}\label{clm:updating-vmid}
    Updating vertex set $V_{mid}$ at the end of case 3 in \Cref{alg:matchingbetterthan23} takes $\widetilde{O}_\epsilon(n^{2+\epsilon-\gamma})$ time. Furthermore, the total running time for computing $V_{mid}$ in the whole process of algorithm is $\widetilde{O}_\epsilon(n^{2+\epsilon-\gamma})$ time.
\end{claim}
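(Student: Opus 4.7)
The plan is to decompose the end-of-case-3 update into three substeps — (i) drawing the $r_3$ samples from $A$, (ii) constructing the set $C$ from the auxiliary graph $G_{M_{AB}}$, and (iii) performing the actual removal $V_{mid} \leftarrow V_{mid} \setminus C$ — and to bound each substep in turn, with the dominant contribution coming from (i). Since $A = V_{mid} \setminus V(M)$ and drawing a vertex of $V_{mid}$ is $O(1)$, a uniform sample from $A$ can be produced by rejection sampling: pick $v \in V_{mid}$ uniformly, query the LCA of \Cref{lem:mid-low-lca} to test whether $v \in V(M)$ at cost $\widetilde{O}_\epsilon(n^{1+\epsilon})$, and keep $v$ iff the test rejects. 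Conditioned on reaching case 3, I would invoke \Cref{lem:tight-instance-edcs} to conclude $|A| = \Omega(|V_{mid}|)$ (otherwise $\widetilde{\mu}_1$ would be large enough to have triggered an earlier return), so the expected number of trials per successful draw is $O(1)$ and a standard Chernoff argument gives $O(r_3)$ total trials w.h.p., yielding cost $\widetilde{O}_\epsilon(r_3 \cdot n^{1+\epsilon})$ for substep (i).

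For substep (ii), I would use the fact that the matchings $M^1_{AB}, \ldots, M^k_{AB}$ are already stored explicitly after \Cref{clm:fast-maximal-matching}, so each vertex has its at most $k$ neighbors in $G_{M_{AB}} = (V_{mid}, \bigcup_i M^i_{AB})$ accessible in $O(k)$ time. Iterating over the $r_3$ samples, incrementing a counter at each of their $G_{M_{AB}}$-neighbors, then making a linear pass over $V_{mid}$ to collect the vertices whose counter reaches $\eta$, costs $O(r_3 k + n)$. Substep (iii) is a trivial $O(n)$ set subtraction. Summing the three substeps, the per-iteration cost is dominated by substep (i), and since the outer loop of \Cref{alg:matchingbetterthan23} runs only $T = O(1)$ times, the total running time over the whole process is of the same order as a single iteration, giving the claimed bound.

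The main obstacle I expect is maintaining the invariant $|A|/|V_{mid}| = \Omega(1)$ across all $T$ rounds so that the rejection sampling in step (i) stays efficient every iteration, even after we have already subtracted several sets $C$ from $V_{mid}$. The way to handle this is to appeal to the correctness analysis of case 3 (carried out in the next subsection): the removed set $C$ consists of vertices with at least $\eta$ neighbors in $G_{M_{AB}}$ among sampled $A$-vertices, and one shows that with high probability these are almost all in $B$, while vertices of $A$ survive the threshold test. Consequently $|A|$ is essentially preserved while $|V_{mid}|$ strictly shrinks, so the ratio can only improve across iterations, and the invariant propagates. Everything else reduces to routine bookkeeping and standard concentration.
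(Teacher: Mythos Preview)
Your proposal is correct and follows essentially the same approach as the paper: rejection-sample from $A$ by drawing uniformly from $V_{mid}$ and testing membership via the LCA of \cref{lem:mid-low-lca} at cost $\widetilde{O}_\epsilon(n^{1+\epsilon})$ per trial, then scan the at-most-$k$ maximal-matching edges of each sampled vertex to build $C$; the paper similarly invokes \cref{lem:tight-instance-edcs} for the constant-fraction property that makes rejection sampling efficient. Your treatment is in fact slightly more careful than the paper's, which does not explicitly argue that the $|A|/|V_{mid}|=\Omega(1)$ invariant persists across the $T$ rounds after successive removals of $C$.

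One point worth flagging: both your computation and the paper's own proof yield $\widetilde{O}_\epsilon(r_3\cdot n^{1+\epsilon}) = \widetilde{O}_\epsilon(n^{2+\epsilon-\gamma/2})$ for substep~(i), not the $\widetilde{O}_\epsilon(n^{2+\epsilon-\gamma})$ written in the claim statement. The exponent $-\gamma$ in the statement appears to be a typo for $-\gamma/2$; indeed the paper's proof explicitly arrives at $n^{2+\epsilon-\gamma/2}$, and this is the bound carried into \cref{lem:time-better-23}. So you should not be alarmed that your argument gives the weaker exponent --- it matches the paper's actual proof and is what is used downstream.
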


\begin{proof}
    Each time that we sample a vertex, we need to test if it is a vertex in $A$. This takes $\widetilde{O}_\epsilon(n^{1+\epsilon})$ by \Cref{lem:mid-low-lca}. Since we need to sample $r_3 = \widetilde{O}(n^{1-\gamma/2})$ vertices of $A$ and the size of set $A$ is a constant fraction of $V_{mid}$ by \Cref{lem:tight-instance-edcs}, we need to sample at most $\widetilde{O}(n^{1-\gamma})$ vertices of $V_{mid}$. Hence, it takes $\widetilde{O}_\epsilon(n^{2+\epsilon-\gamma/2})$ time to find vertices $u_1, u_2, \ldots, u_k \in A$. Each of the vertices has at most $k$ neighbors in all maximal matchings $M_{AB}^1, M_{AB}^2 \ldots, M_{AB}^{k}$ which implies that we need to spend $\widetilde{O}(n)$ time to find and remove vertices of set $C$ which completes the proof since $T = O(1)$.
\end{proof}

Now we are ready to complete the analysis of the running time of \Cref{alg:matchingbetterthan23}.

\begin{lemma}\label{lem:time-better-23}
    The total running time of \Cref{alg:matchingbetterthan23} is $\widetilde{O}_\epsilon(\max(n^{2+\epsilon-\gamma/2}, n^{2-\epsilon^3}, n^{1+3\gamma/2}, n^{1+\epsilon+\gamma/2}))$.
\end{lemma}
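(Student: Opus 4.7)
The plan is to add up the running times of each component of \Cref{alg:matchingbetterthan23}, each of which has already been bounded in one of the preceding claims, and then observe that the stated bound is simply the maximum of those contributions. Since no new technical argument is required---only careful accounting---the main ``obstacle'' is just to make sure that every subroutine call made during the $T = O(1)$ outer iterations is charged to one of the existing claims and that no factor of $k = n^{\gamma/2}$ or of $T$ is double-counted.

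First I would collect the one-time costs incurred before entering the outer loop: running \Cref{alg:matching23} to obtain $\widetilde{\mu}$, $\beta$, $H$, $E_U$, and $G'$, which by \Cref{thm:apprx23theorem} costs $\widetilde{O}_\epsilon(n^{2-\epsilon^3})$; and the preprocessing for the LCA subroutine of \Cref{lem:mid-low-lca}, which by \Cref{clm:preprocessing-time} also costs $\widetilde{O}_\epsilon(n^{2-\epsilon^3})$. Estimating $\widetilde{\mu}_1$ uses $\widetilde{O}(1)$ queries to this LCA, contributing an extra $\widetilde{O}_\epsilon(n^{1+\epsilon})$ which is already dominated by the other terms appearing in the claimed maximum.

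Next I would account for the costs arising inside the loop, noting $T = O(1)$. Constructing all greedy maximal matchings $M_{AB}^i$ throughout the entire execution takes $O(n^{1+3\gamma/2})$ by \Cref{clm:fast-maximal-matching}. The Case-1 tests (computing $\widetilde{\mu}_2$) across all iterations together contribute $\widetilde{O}_\epsilon(n^{1+\epsilon+\gamma/2})$ by \Cref{clm:maximal-matching-aa-edges}. The Case-2 tests (computing $\widetilde{\mu}_3$) contribute $\widetilde{O}_\epsilon(n^{2+\epsilon-\gamma/2})$ by \Cref{clm:estimating-aa-missing-matching}. The Case-3 updates to $V_{mid}$ contribute $\widetilde{O}_\epsilon(n^{2+\epsilon-\gamma/2})$ by \Cref{clm:updating-vmid}.

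Summing these contributions and absorbing the subdominant $\widetilde{O}_\epsilon(n^{1+\epsilon})$ sampling overhead, I would conclude that the total runtime is
\[
\widetilde{O}_\epsilon\!\left(n^{2-\epsilon^3} + n^{1+3\gamma/2} + n^{1+\epsilon+\gamma/2} + n^{2+\epsilon-\gamma/2}\right),
\]
which is exactly $\widetilde{O}_\epsilon(\max(n^{2+\epsilon-\gamma/2},\, n^{2-\epsilon^3},\, n^{1+3\gamma/2},\, n^{1+\epsilon+\gamma/2}))$, as required. I anticipate no genuine technical difficulty: the prior claims already internalize the $k = n^{\gamma/2}$ factor from iterating over the $k$ maximal matchings and the $T = O(1)$ factor from the outer loop, so the proof reduces to a one-line summation and a comparison of exponents.
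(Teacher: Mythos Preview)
Your proposal is correct and follows essentially the same approach as the paper, which simply states that the lemma follows by combining \Cref{lem:mid-low-lca}, \Cref{clm:fast-maximal-matching}, \Cref{clm:preprocessing-time}, \Cref{clm:maximal-matching-aa-edges}, \Cref{clm:estimating-aa-missing-matching}, and \Cref{clm:updating-vmid}. Your write-up is in fact more explicit than the paper's one-line proof; the only minor slip is that \Cref{clm:updating-vmid} gives the Case-3 bound as $\widetilde{O}_\epsilon(n^{2+\epsilon-\gamma})$ rather than $\widetilde{O}_\epsilon(n^{2+\epsilon-\gamma/2})$, but this is dominated by the Case-2 term anyway and does not affect the conclusion.
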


\begin{proof}
    The proof follows from combining \Cref{lem:mid-low-lca}, \Cref{clm:fast-maximal-matching}, \Cref{clm:preprocessing-time}, \Cref{clm:maximal-matching-aa-edges}, \Cref{clm:estimating-aa-missing-matching}, and \Cref{clm:updating-vmid}.
\end{proof}

\subsection{Analysis of Approximation Ratio}

In this section, we assume that $\widetilde{\mu} \leq (\frac{2}{3} + \alpha)\cdot\mu(G)$ and we prove with this assumption, the algorithm outputs a better than $2/3$ approximation in one of $T$ iterations. The proof for the other case where $\widetilde{\mu} > (\frac{2}{3} + \alpha)\cdot\mu(G)$ is trivial since the algorithm outputs an estimation which is at least $\widetilde{\mu}$.

\subsubsection{Approximation analysis of Case 1: Many $A$-vertices matched in some $M_{AB}^i$}

\begin{lemma}\label{lem:case1-approx}
    If $\widetilde{\mu}_2 \geq c\alpha\widetilde{\mu}_1$, then $\widetilde{\mu}_1 + \widetilde{\mu}_2$ is an estimate for $\mu(G)$ up to a multiplcative-additive factor of $(\frac{2}{3} + \alpha, o(n))$ with high probability.
\end{lemma}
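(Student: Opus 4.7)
The plan is to sandwich $\widetilde{\mu}_1 + \widetilde{\mu}_2$ between $(\tfrac{2}{3}+\alpha)\mu(G) - o(n)$ and $\mu(G) + o(n)$ whenever the triggering condition $\widetilde{\mu}_2 \geq c\alpha\widetilde{\mu}_1$ holds. Both estimators are empirical averages over $\mathrm{poly}\log n$ indicator samples, so a Chernoff argument analogous to that of \Cref{lem:approx-ratio-32} gives, with probability $1 - 1/\mathrm{poly}(n)$, that $\widetilde{\mu}_1 \in |M| \pm o(n)$ and that $\widetilde{\mu}_2$ is within $o(n)$ of its expectation. Here $M$ denotes the $(1-\epsilon)$-approximate matching of $G''$ realized by the LCA of \Cref{lem:mid-low-lca}, $A := V_{mid} \setminus V(M)$, and we write $q$ for the number of edges of $M_{AB}^i$ with both endpoints in $A$ (equivalently, the $\ell$ with $X_\ell=1$).

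For the upper bound, the structural observation is that the $A$-$A$ edges of $M_{AB}^i$ form a matching inside $G[A]$ since $M_{AB}^i$ is itself a matching, and they are vertex-disjoint from $M$ by the definition of $A$. Consequently $M$ together with these edges is a valid matching in $G$ of size $|M| + q$, which yields the deterministic inequality $|M| + q \leq \mu(G)$. Combined with the concentration of $\widetilde{\mu}_1$ and $\widetilde{\mu}_2$ around $|M|$ and $q$ respectively, this will give $\widetilde{\mu}_1 + \widetilde{\mu}_2 \leq \mu(G) + o(n)$.

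For the lower bound, we first reduce to the interesting regime $\widetilde{\mu} \leq (\tfrac{2}{3}+\alpha)\mu(G)$; the complementary case is already settled by the guarantee of \Cref{alg:matching23} (the global algorithm can fall back to $\widetilde{\mu}$). Under this assumption, guarantee (G1) of \Cref{lem:tight-instance-edcs} provides $\mu(G'') \geq (\tfrac{2}{3} - 120\sqrt{\alpha})\mu(G)$, so $|M| \geq (1-\epsilon)(\tfrac{2}{3} - 120\sqrt{\alpha})\mu(G)$. The triggering hypothesis then yields
\[
    \widetilde{\mu}_1 + \widetilde{\mu}_2 \;\geq\; (1+c\alpha)\,\widetilde{\mu}_1 \;\geq\; (1+c\alpha)(1-\epsilon)\bigl(\tfrac{2}{3} - 120\sqrt{\alpha}\bigr)\mu(G) - o(n),
\]
and for the chosen parameters $c = 10^{16}$ and $\alpha = 10^{-10}$ together with sufficiently small $\epsilon$, the $\tfrac{2}{3}c\alpha$ gain comfortably dominates the losses $\alpha + O(\sqrt{\alpha} + \epsilon)$, pushing the bound past $(\tfrac{2}{3}+\alpha)\mu(G) - o(n)$.

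The main delicate step is the concentration statement $\widetilde{\mu}_2 \leq q + o(n)$ used in the upper bound argument. The empirical mean $X/r_1$ concentrates around $p = q/|M_{AB}^i|$, whereas $\widetilde{\mu}_2$ rescales it by $n$ rather than $|M_{AB}^i|$; sharp control of the upper bound will require either using the explicit value of $|M_{AB}^i|$ (which is known since $M_{AB}^i$ is constructed by the algorithm) in place of $n$ in the rescaling, or exploiting the $-n/(2\log n)$ offset together with a careful Chernoff tail to absorb both the sampling fluctuation and any residual slack between $np$ and $q$.
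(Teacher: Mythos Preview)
Your approach matches the paper's: Chernoff for the upper bound $\widetilde{\mu}_2 \leq \mu(G[A])$ combined with $\widetilde{\mu}_1 \leq |M|$ and the vertex-disjointness of $M$ and any matching in $G[A]$, and then the same $(1+c\alpha)\widetilde{\mu}_1$ calculation for the lower bound using (G1) of \Cref{lem:tight-instance-edcs}. Your concern in the last paragraph is well-founded: the paper's normalization $\frac{nX}{r_1}$ tacitly treats the sampling universe as having size $n$, whereas the edges are drawn from $M_{AB}^i$; the clean fix is exactly your first suggestion---replace $n$ by $|M_{AB}^i|$ in the definition of $\widetilde{\mu}_2$ (the algorithm knows this quantity), after which $\frac{|M_{AB}^i|\,\E[X]}{r_1} = q \leq \mu(G[A])$ and the rest of the Chernoff computation goes through verbatim.
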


\begin{proof}
First, we prove that $\widetilde{\mu}_1 + \widetilde{\mu}_2 \leq \mu(G)$. Let $X$ be as defined in \Cref{alg:matchingbetterthan23}. Note that $\frac{n\E[X]}{r_1} \leq \mu(G[A])$ since it counts the number of edges in $M_{AB}^i$ that have both of their endpoints in $A$. Since $X$ is the sum of $r_1$ independent Bernoulli random variables, using Chernoff bound we have
\begin{align*}
    \Pr[|X - \E[X]| \geq \sqrt{18\E[X]\log n}] \leq 2 \exp\left(-\frac{18\E[X]\log n}{3\E[X]}\right) = \frac{2}{n^6}.
\end{align*}
Thus, with a probability of at least $1-2/n^6$,
\begin{align*}
    \widetilde{\mu}_2 \leq \frac{nX}{r_1} - \frac{n}{2\log n} & \leq \frac{n\left(\E[X] + \sqrt{18\E[X]\log n}\right)}{r_1} - \frac{n}{2\log n} \\
    & \leq \mu(G[A]) + \sqrt{\frac{18n\cdot\mu(G[A])\log n}{r_1}} - \frac{n}{2\log n} & (\text{Since } \frac{n\E[X]}{r_1}\leq \mu(G[A]))\\
    & \leq \mu(G[A]) + \sqrt{\frac{n\cdot \mu(G[A])}{4\log^2 n}} - \frac{n}{2\log n} & (\text{Since } r_1 = 72 \log^3 n)\\
    & \leq \mu(G[A]) & (\text{Since } \mu(G[A]) \leq n).
\end{align*}
Moreover, we have that $\widetilde{\mu}_1 \leq \mu(G'')$. Since $G[A]$ only contains vertices that are not matched in the matching that we found in $G''$, we get $\widetilde{\mu}_1 + \widetilde{\mu}_2 \leq \mu(G)$. On the other hand, by \Cref{lem:tight-instance-edcs}, we have $\widetilde{\mu}_1 \geq (1-\epsilon)\cdot (\frac{2}{3} - 120\sqrt{\alpha})\mu(G) - o(n)$. Thus,
\begin{align*}
    \widetilde{\mu}_1 + \widetilde{\mu}_2 \geq (1+c\alpha)\widetilde{\mu}_1 & \geq (1+c\alpha)\left((1 - \epsilon)\left(\frac{2}{3}-120\sqrt{\alpha}\right)\cdot\mu(G) - o(n)\right) \\
    & \geq (1+c\alpha)\cdot \left(\frac{2}{3} - 120\sqrt{\alpha} - \frac{2}{3}\epsilon\right)\cdot \mu(G) - o(n)\\
    & \geq \left(\frac{2}{3} - 120\sqrt{\alpha} - \frac{2}{3}\epsilon +\frac{2}{3}c\alpha - 120c\alpha\sqrt{\alpha} - \frac{2}{3}c\alpha \epsilon \right)\cdot\mu(G) - o(n) \\
    & \geq \left(\frac{2}{3} + \alpha \right)\mu(G) - o(n),
\end{align*}
where the last inequality holds by our choice of $c$, $\alpha$, and if we choose $\epsilon$ sufficiently small enough, which leads to our claimed approximation ratio.
\end{proof}

\subsubsection{Approximation analysis of Case 2: Large matching in $G[A \setminus M^i_{AB}]$ (for some $M^i_{AB}$)}

\begin{lemma}\label{lem:case2-approx}
    If $\widetilde{\mu}_3 \geq c\alpha\widetilde{\mu}_1$, then $\widetilde{\mu}_1 + \widetilde{\mu}_3$ is an estimate for $\mu(G)$ up to a multiplcative-additive factor of $(\frac{2}{3} + \alpha, o(n))$ with high probability.
\end{lemma}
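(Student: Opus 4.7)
The plan is to mimic the structure of the proof of Lemma~5.3, proving the two inequalities $\widetilde{\mu}_1 + \widetilde{\mu}_3 \leq \mu(G)$ and $\widetilde{\mu}_1 + \widetilde{\mu}_3 \geq (2/3 + \alpha)\mu(G) - o(n)$ separately, both with high probability. The main new content lies in the upper bound, because $\widetilde{\mu}_3$ does not directly count an observed matching in $G$ but rather counts sampled $A$-vertices matched in the random greedy maximal matching of the auxiliary graph $G_{AB}^i$ (which contains dummy singletons). The lower bound is essentially identical to its Lemma~5.3 counterpart.

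For the upper bound, I will exhibit an explicit matching in $G$ whose size is at least $\widetilde{\mu}_1 + \widetilde{\mu}_3 - o(n)$. Let $M$ be the underlying $(1-\epsilon)$-approximate matching of $G''$ whose size $\widetilde{\mu}_1$ estimates; by the same Chernoff bound as in Lemma~5.3, $\widetilde{\mu}_1 \leq |M|$ with high probability (the $-n/(2\log n)$ slack is absorbed into $o(n)$). Let $N$ be the random greedy maximal matching of $G_{AB}^i$ and let $N' \subseteq N$ be the sub-matching consisting of those edges whose both endpoints lie in $A \setminus V(M_{AB}^i)$. Since $A = V_{mid} \setminus V(M)$ is vertex-disjoint from $V(M)$, we have that $M \cup N'$ is a matching in $G$, hence $|M| + |N'| \leq \mu(G)$. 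The heart of the argument is showing $\widetilde{\mu}_3 \leq |N'| + o(n)$ with high probability. Here I use the \emph{singleton trick}: every $b \in B \setminus V(M_{AB}^i)$ has $\kappa = 48\,n^{1-\gamma}\log^2 n$ singleton neighbors but, by Lemma~5.5, at most $6\,n^{1-\gamma}\log n$ real neighbors. Hence in RGMM the probability that $b$ is matched to a singleton is at least $1 - O(1/\log n)$, and a union bound yields that only $o(n/\log n)$ $B$-vertices are matched to $A$-vertices, so the contribution of $A$-$B$ edges of $N$ to $Y$ is at most $o(n/\log n)$. The rest of $Y$ counts the two endpoints of each edge in $N'$, and a Chernoff bound on $Y$ (identical to the one on $X$ in Lemma~5.3) finishes the upper bound.

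For the lower bound, using the hypothesis $\widetilde{\mu}_3 \geq c\alpha\,\widetilde{\mu}_1$ together with the estimate $\widetilde{\mu}_1 \geq (1-\epsilon)(\tfrac{2}{3} - 120\sqrt{\alpha})\mu(G) - o(n)$ (which follows from Lemma~5.1(G1) applied to $G''$ and the concentration analysis that produces $\widetilde{\mu}_1$), the computation is the same as in Lemma~5.3:
\begin{equation*}
\widetilde{\mu}_1 + \widetilde{\mu}_3 \;\geq\; (1+c\alpha)\,\widetilde{\mu}_1 \;\geq\; (1+c\alpha)(1-\epsilon)\bigl(\tfrac{2}{3} - 120\sqrt{\alpha}\bigr)\mu(G) - o(n) \;\geq\; \bigl(\tfrac{2}{3} + \alpha\bigr)\mu(G) - o(n),
\end{equation*}
where the last inequality holds for $c = 10^{16}$, $\alpha = 10^{-10}$, and all sufficiently small $\epsilon$.

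The main obstacle is the upper bound, specifically arguing that $\widetilde{\mu}_3$ is nearly an estimator for a valid matching size in $G$ even though its definition goes through a RGMM on an auxiliary graph with dummy singletons. The singleton trick depends crucially on the sparsification bound of Lemma~5.5, which guarantees that real degrees in $G[V_{mid} \setminus V(M_{AB}^i)]$ are much smaller than $\kappa$; any weakening of that bound would require $\kappa$ to be chosen larger and could inflate the running time. Carefully quantifying the failure probability of the ``$b$ matched to singleton'' event across all $b$, and showing that the resulting slack fits inside the $o(n)$ additive term while not eating into the $c\alpha$-factor gap in the lower bound computation, is the key technical step.
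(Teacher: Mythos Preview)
Your proposal is correct and follows essentially the same approach as the paper: the paper also uses the singleton trick (defining $R$ as the set of $B$-vertices not matched to singletons and bounding $|R| \leq n/(4\log n)$ via an expectation computation plus Chernoff, rather than a union bound), then applies a Chernoff bound on $Y$ to obtain $\widetilde{\mu}_3 \leq \mu(G[A])$, combines with $\widetilde{\mu}_1 \leq \mu(G'')$ and vertex-disjointness of $A$ from $V(M)$ for the upper bound, and carries out the identical lower-bound computation from $\widetilde{\mu}_3 \geq c\alpha\widetilde{\mu}_1$ and guarantee (G1). Your phrase ``union bound'' should read ``expectation plus Chernoff'' and your $o(n/\log n)$ is really $O(n/\log n)$, but neither slip affects the argument.
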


\begin{proof}
Note that since we connect each vertex of $B \setminus V(M_{AB}^{i})$ to $\kappa = \widetilde{O}(n^{1-\gamma})$ singleton vertices, most of the vertices of $B \setminus V(M_{AB}^{i})$ will be matched to singleton vertices when running random greedy maximal matching. Let $v \in B \setminus V(M_{AB}^{i})$. The first incident edge of $v$ in the permutation of edges is not an edge to singleton vertices with a probability of at most $(6n^{1-\gamma}\log n) / (\kappa + 6n^{1-\gamma}\log n)$ since the maximum degree of subgraph $G[V_{mid} \setminus V(M^i_{AB})]$ is at most $6n^{1-\gamma}\log n$ by \Cref{lem: sparsification-property}. We denote the vertices of $B \setminus V(M^i_{AB})$ that are not matched with singleton vertices by $R$. Hence, we have the following bounds,
\begin{align*}
    \E|R| \leq n\cdot\left(\frac{6n^{1-\gamma}\log n}{\kappa + 6n^{1-\gamma}\log n}\right) \leq n\cdot\left(\frac{6n^{1-\gamma}\log n}{\kappa}\right) = \frac{n}{8\log n},\\
    \E|R| \geq n\cdot\left(\frac{6n^{1-\gamma}\log n}{\kappa + 6n^{1-\gamma}\log n}\right) \geq n\cdot\left(\frac{6n^{1-\gamma}\log n}{2\kappa}\right) = \frac{n}{16\log n}.
\end{align*}
Therefore, using a Chernoff bound, we can show that $|R| \leq \frac{n}{4\log n}$ with a high probability.

Now, we prove that $\widetilde{\mu}_1 + \widetilde{\mu}_3 \leq \mu(G)$. Let $Y$ be as defined in \Cref{alg:matchingbetterthan23}, then we have $\frac{n\E[Y]}{2r_2} \leq \mu(G[A]) + |R|$, since it counts the number of edges in output of random greedy maximal matching with both endpoints in $A$ and there are at most $|R|$ vertices of $B$ can match to vertices of $A$. Also, each edge is counted at most twice since it has two endpoints in $A$. Thus,
\begin{align*}
    \Pr[|Y - \E[Y]| \geq \sqrt{18\E[Y]\log n}] \leq 2 \exp\left(-\frac{18\E[Y]\log n}{3\E[Y]}\right) = \frac{2}{n^6},
\end{align*}
since $Y$ is the sum of independent Bernoulli random variables. Therefore, with a probability of at least $1 - 2/n^6$,
\begin{align*}
    \widetilde{\mu}_3 & \leq \frac{nY}{2r_2} - \frac{n}{2\log n} \\
    & \leq \frac{n\left(\E[Y] + \sqrt{18\E[Y]\log n}\right)}{2r_2} - \frac{n}{2\log n} \\
    & \leq \mu(G[A]) + |R| + \sqrt{\frac{9n\cdot(\mu(G[A]) + |R|)\log n}{r_2}} - \frac{n}{2\log n} & (\text{Since } \frac{n\E[Y]}{2r_2}\leq \mu(G[A]) + |R|)\\
    & \leq \mu(G[A]) + |R| + \sqrt{\frac{n\cdot \left(\mu(G[A]) + |R| \right)}{32\log^2 n}}  - \frac{n}{2\log n} & (\text{Since } r_2 = 288 \log^3 n)\\
    & \leq \mu(G[A]), & (\text{Since } \mu(G[A])\leq n \text{ and }|R| \leq \frac{n}{4\log n})
\end{align*}
which implies that $\widetilde{\mu}_1 + \widetilde{\mu}_3 \leq \mu(G)$ with the same argument as proof of \Cref{lem:case1-approx}. Now we are ready to finish the proof since
\begin{align*}
    \widetilde{\mu}_1 + \widetilde{\mu}_3 \geq (1+c\alpha)\widetilde{\mu}_1 & \geq (1+c\alpha)\left((1 - \epsilon)\left(\frac{2}{3}-120\sqrt{\alpha}\right)\cdot\mu(G) - o(n)\right) \\
    & \geq (1+c\alpha)\cdot \left(\frac{2}{3} - 120\sqrt{\alpha} - \frac{2}{3}\epsilon\right)\cdot \mu(G) - o(n)\\
    & \geq \left(\frac{2}{3} - 120\sqrt{\alpha} - \frac{2}{3}\epsilon +\frac{2}{3}c\alpha - 120c\alpha\sqrt{\alpha} - \frac{2}{3}c\alpha \epsilon \right)\cdot\mu(G) - o(n) \\
    & \geq \left(\frac{2}{3} + \alpha \right)\mu(G) - o(n),
\end{align*}
\end{proof}

\subsubsection{Approximation analysis of Case 3: Almost every edge of maximum matching of $G[A]$ has at least one of its endpoints matched by almost all $M_{AB}^i$}

\begin{lemma}\label{lem:bound-on-A}
Before the start of the $i$-th iteration of \Cref{alg:matchingbetterthan23} for $i \in [1, T]$: if the algorithm has not terminated yet, then it holds that $\mu(G[A]) \geq \left(\frac{1}{4} - (i-1)\cdot(10^{20}\alpha)\right) \cdot \mu(G)$.
\end{lemma}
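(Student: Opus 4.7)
The plan is to proceed by induction on $i$.

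For the base case $i=1$, no vertices have yet been removed from $V_{mid}$, so $A = V_{mid}\setminus V(M)$ for the fixed matching $M$ returned by the LCA of \Cref{lem:mid-low-lca}. I will assume $\widetilde\mu \leq (\tfrac{2}{3}+\alpha)\mu(G)$, since otherwise the final line of \Cref{alg:matchingbetterthan23} already outputs a valid estimate; combined with the $(1-\epsilon, o(n))$-approximation guarantee of \Cref{lem:approx-ratio-32}, this gives $\mu(G') \leq (\tfrac{2}{3}+O(\alpha))\mu(G)$, so \Cref{lem:tight-instance-edcs}(G2) will yield $\mu(G[A]) \geq (\tfrac{1}{3}-800\alpha)\mu(G) \geq \tfrac{1}{4}\mu(G)$ for the chosen $\alpha$.

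For the inductive step, assume the bound holds at the start of iteration $i-1$ and that the algorithm has not terminated during that iteration. I will show that the Case-3 update $V_{mid}\leftarrow V_{mid}\setminus C$ at Line~\ref{Line:remove} of iteration $i-1$ shrinks $\mu(G[A])$ by at most $10^{20}\alpha\,\mu(G)$; since each removed $A$-vertex cuts $\mu(G[A])$ by at most one, it suffices to bound $|C\cap A|$. First, non-triggering of Case 1 for every $j\in[k]$ means the estimator $X$ satisfies $nX/r_1 < c\alpha\widetilde\mu_1 + n/(2\log n)$, and I will use a standard multiplicative Chernoff bound (splitting on whether $\E[X] \geq \Theta(\log n)$) to promote this to the high-probability bound $|M_{AB}^j\cap(A\times A)| \leq O(c\alpha\mu(G)+n/\log n)$. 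Summing over $j$ then gives $|E(G_{M_{AB}}[A])| \leq k\cdot O(c\alpha\mu(G)+n/\log n)$.

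To bound $|C\cap A|$, I will partition $A$ by degree in $G_{M_{AB}}[A]$ at threshold $\eta|A|/(2r_3)$. A low-degree $v\in A$ has $\mathrm{Bin}(r_3,\deg(v)/|A|)$-expected sampled matching-neighbors at most $\eta/2$, so by a Chernoff upper tail it enters $C$ with probability $2^{-\Omega(\eta)} = n^{-\omega(1)}$, and a union bound over $n$ vertices will kill these. The number of high-degree $v\in A$ is at most $4|E(G_{M_{AB}}[A])|\cdot r_3/(\eta|A|)$ by the handshake lemma. Substituting $|A|\geq\mu(G[A])\geq \Omega(\mu(G))$ (from the induction hypothesis together with the choices of $T$ and $\alpha$), $\eta = n\log n/(100\widetilde\mu_1)$, $r_3 k = 10\, n\log n$, and $\widetilde\mu_1=\Theta(\mu(G))$ will simplify this to $O(c\alpha\mu(G) + n/\log n)$, which with $c=10^{16}$ is at most $10^{20}\alpha\mu(G)$, where the $n/\log n$ slack is absorbed into the $o(n)$ additive error of \Cref{thm:apprxbetter23theorem}.

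The hard part will be carefully executing the two-step concentration chain — going from the observed small value of $X$ in Case 1 to a small true count of double-$A$ edges in each $M_{AB}^j$, and from there to control of $|C\cap A|$ via the $r_3$-sample randomness — and taking a union bound over the $T\cdot k = O(n^{\gamma/2})$ Case-1 events and the $T$ Case-3 events, which keeps the total failure probability at $o(1)$.
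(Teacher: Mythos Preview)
Your proposal is correct and follows the same inductive skeleton as the paper: establish the base case via \Cref{lem:tight-instance-edcs}(G2), then in the step use non-triggering of Case~1 to bound the total number of $A$--$A$ edges in $G_{M_{AB}}$ and convert this into a bound on $|C\cap A|$.

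The execution of the last step differs. The paper lets $Z$ be the number of $A$--$A$ edges of $G_{M_{AB}}$ incident to the $r_3$ sampled vertices, bounds $\E[Z]\le 60c\alpha n\log n$ from the average-degree bound, applies a single Hoeffding bound to get $Z\le 90c\alpha n\log n$ w.h.p., and then argues by contradiction: each $v\in C\cap A$ contributes at least $\eta$ such edges, so $|C\cap A|\cdot\eta\le Z$ directly yields $|C\cap A|\le 10^{20}\alpha\,\mu(G)$. Your degree-partition route (Chernoff for low-degree vertices, handshake for high-degree vertices) arrives at the same bound with comparable effort; it has the mild advantage that you make explicit the concentration step from the observed $\widetilde\mu_2$ to the true $A$--$A$ edge count in each $M_{AB}^j$, which the paper glosses over. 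Conversely, the paper's single-Hoeffding-plus-division argument is slightly shorter since it avoids the case split on degree. Both are fine; the paper's is marginally more direct. Your remark about the $n/\log n$ slack is accurate and the paper implicitly relies on the same absorption.
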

\begin{proof}
    In the beginning of the section we assumed that $\widetilde{\mu} \leq (\frac{2}{3} + \alpha)\cdot\mu(G)$. Note that $A$ contains vertices that are not matched by the matching of estimation $\widetilde{\mu}_1$. Therefore, before the first iteration of the algorithm, by \Cref{lem:tight-instance-edcs}, set $A$ must contain a matching of size larger than $(\frac{1}{3}-800\alpha)\cdot \mu(G) \geq \mu(G)/4$ where the inequality follows from our choice of $\alpha$.

    We use induction to prove the lemma. Suppose that the algorithm has not terminated before the $i$-th iteration for $i > 1$. Note that since the algorithm has not terminated in the previous iteration for any of the maximal matching $M_{AB}^i$, the total number of maximal matching edges between vertices of $A$ is at most $c\alpha \widetilde{\mu}_1 n^{\gamma/2}
    \leq c\alpha \mu(G)n^{\gamma/2}$, which implies that the average degree of a vertex in $G_{M_{AB}}[A]$ is at most
    \begin{align*}
        \frac{c\alpha\mu(G)n^{\gamma/2}}{|A|} & \leq \frac{c\alpha\mu(G)n^{\gamma/2}}{(\frac{1}{4} - (i-2)(10^5\alpha)) \cdot \mu(G)} & (\text{By induction hypothesis})\\
       & \leq \frac{c\alpha\mu(G)n^{\gamma/2}}{1/6\cdot \mu(G)} & (i \leq T \text{ and } 10^{20} T \alpha \leq \frac{1}{12})\\
       & = 6c\alpha n^{\gamma/2}.
    \end{align*}
    Let $Z$ be the number of edges of $G[A]$ that are in one of the maximal matchings. Since we sample $r_3 = 10n^{1 - \gamma/2}\log n$ vertices of $A$, we have
    \begin{align*}
        \E[Z] \leq 60 c \alpha n \log n.
    \end{align*}
    Since each vertex of $A$ have at most $n^{\gamma/2}$ neighbors in $A$ (because we have $k = n^{\gamma/2}$ maximal matchings), then by using Hoeffding’s inequality, we obtain
    \begin{align*}
        \Pr[|Z-\E[Z]| \geq 30 c \alpha n \log n] \leq 2 \exp\left(- \frac{2 \cdot(10n^{1-\gamma/2}\log n) \cdot (12 \alpha n^{\gamma /2})}{n^\gamma} \right) \leq \frac{1}{n^{10}}.
    \end{align*}
    
    Therefore, with high probability, there are at most $90 c \alpha n \log n$ edges of maximal matchings of sampled vertices with both endpoints in $A$ (*).

    Now we show that at most $10^{20} \alpha \mu(G)$ vertices of $A$ are removed in Line~\ref{Line:remove} (of Algorithm~\ref{alg:matchingbetterthan23}) in the previous iteration. For the sake of contradiction, assume that more than $10^{20} \alpha \mu(G)$ are removed in the previous iteration. Then, we have at least
    \begin{align*}
        10^{20} \alpha \mu(G) \cdot \eta & = 10^{20} \alpha \mu(G) \cdot \left( \frac{n \log n}{100 \widetilde{\mu}_1}\right)\\
        & \geq 10^{20} \alpha \mu(G) \cdot \left( \frac{n \log n}{100 \mu(G)}\right) & (\text{Since } \widetilde{\mu}_1 \leq \mu(G))\\
        & = 100 c  \alpha n \log n & (\text{Since } c = 10^{16})
    \end{align*}
    edges of maximal matching between vertices of $A$ since each removed vertex must have at least $\eta$ neighbors in the sampled vertices, which is a contradiction to (*). By induction, before iteration $i - 1$, we have $\mu(G[A]) \geq (\frac{1}{4} - (i-1)\cdot(10^{20}\alpha)) \cdot \mu(G)$. Also, at most $10^{20} \alpha \mu(G)$ vertices of $A$ can be removed in the previous step which implies the claimed bound for iteration $i$.
\end{proof}

\begin{corollary}\label{cor:size-of-a}
    Before the start of the $i$-th iteration of \Cref{alg:matchingbetterthan23} for $i \in [1, T]$: if the algorithm has not terminated yet, then it holds that $|A| \geq \frac{1}{5}\mu(G)$ 
\end{corollary}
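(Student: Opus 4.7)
The plan is to derive \cref{cor:size-of-a} as an essentially immediate consequence of \cref{lem:bound-on-A}. The only additional ingredient needed is the elementary combinatorial observation that any matching supported on the vertex set $A$ saturates twice as many vertices as it has edges, and all of these vertices lie in $A$; hence $|A| \geq 2\mu(G[A])$. This converts a lower bound on the matching number of $G[A]$ into a lower bound on $|A|$ itself, which is what we need.

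Concretely, I would assume we are at the start of iteration $i \in [1, T]$ and that the algorithm has not terminated in any earlier iteration, so that the hypothesis of \cref{lem:bound-on-A} applies. Chaining the trivial bound with \cref{lem:bound-on-A} gives
\[
|A| \;\geq\; 2\mu(G[A]) \;\geq\; 2\left(\tfrac{1}{4} - (i-1)\cdot 10^{20}\alpha\right)\mu(G) \;=\; \left(\tfrac{1}{2} - 2(i-1)\cdot 10^{20}\alpha\right)\mu(G).
\]
The next step is to verify that the correction term is small enough. Inside the proof of \cref{lem:bound-on-A} the authors already invoked the parameter regime $10^{20}T\alpha \leq \tfrac{1}{12}$ (which follows from $T = 200$ and a sufficiently small choice of $\alpha$); since $i \leq T$, this immediately yields $2(i-1)\cdot 10^{20}\alpha \leq \tfrac{1}{6}$, and hence $|A| \geq (\tfrac{1}{2} - \tfrac{1}{6})\mu(G) = \tfrac{1}{3}\mu(G) \geq \tfrac{1}{5}\mu(G)$.

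There is no real obstacle in this proof: it is a one-line arithmetic manipulation of \cref{lem:bound-on-A} via $|A| \geq 2\mu(G[A])$, and the parameter regime required is exactly the one already set up for the lemma. The factor of $\tfrac{1}{5}$ stated in the corollary is loose enough (we actually get $\tfrac{1}{3}$) that no sharper accounting of constants is necessary.
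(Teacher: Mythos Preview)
Your proposal is correct and follows essentially the same approach as the paper: both derive the corollary directly from \cref{lem:bound-on-A} via the elementary observation that $|A|$ is at least (twice) the size of any matching in $G[A]$, then absorb the $(i-1)\cdot 10^{20}\alpha$ term using the parameter regime $10^{20}T\alpha \ll 1$. The only cosmetic difference is that you use the sharper $|A|\ge 2\mu(G[A])$ to land at $\tfrac{1}{3}\mu(G)$, whereas the paper uses $|A|\ge \mu(G[A])$ and goes straight to $\tfrac{1}{5}\mu(G)$; either way the stated bound follows.
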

\begin{proof}
    By \Cref{lem:bound-on-A}, we have that
    \begin{align*}
        \mu(G[A]) \geq (\frac{1}{4} - (i-2)(10^{20}\alpha)) \cdot \mu(G) \geq \frac{1}{5}\mu(G),
    \end{align*}
    where the last inequality follows from our choice of $\alpha$ and $T$. We finish the proof by the fact that the number of vertices of a graph is larger than its matching size.
\end{proof}

\begin{observation}
    Suppose that \Cref{alg:matchingbetterthan23} does not terminate in iteration $i$. Then, it holds $|B| \geq \mu(G)/5$ before iteration $i$.
\end{observation}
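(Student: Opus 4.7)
The plan is to derive $|B| \geq \mu(G)/5$ by combining \Cref{lem:tight-instance-edcs} with the simple structural fact that every edge of the matching $M$ contributes exactly one endpoint to $V_{mid}\cap V(M) = B$. As in the proof of \Cref{lem:bound-on-A}, the hypothesis of the observation is relevant only when $\widetilde{\mu} \leq (\tfrac{2}{3}+\alpha)\mu(G)$; otherwise the algorithm's output already beats $\tfrac{2}{3}+\alpha$ and the observation is used only vacuously. Under this assumption, guarantee (G1) of \Cref{lem:tight-instance-edcs} yields $\mu(G'') \geq (\tfrac{2}{3}-120\sqrt{\alpha})\mu(G)$.

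Next, because $M$ is the $(1-\epsilon)$-approximate matching of $G'' = G'[V_{low}, V_{mid}]$ supplied by the LCA of \Cref{lem:mid-low-lca}, we have $|M| \geq (1-\epsilon)\mu(G'') \geq (1-\epsilon)(\tfrac{2}{3}-120\sqrt{\alpha})\mu(G)$. Since $G''$ is bipartite with parts $V_{low}$ and $V_{mid}$, each edge of $M$ has exactly one endpoint in $V_{mid}$, so $|B| = |V_{mid}\cap V(M)| = |M|$ in the initial state. Plugging in $\alpha = 10^{-10}$ gives $120\sqrt{\alpha} < 1.2\cdot 10^{-3}$, which together with sufficiently small $\epsilon$ yields an initial $|B| > 0.66\,\mu(G)$, with roughly $0.46\,\mu(G)$ of slack above the target $\mu(G)/5$.

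The remaining issue is that subsequent iterations may shrink $|B|$ by removing $V(M)$-vertices from $V_{mid}$ in Line~\ref{Line:remove}. I would handle this via a bookkeeping argument parallel to the one inside the proof of \Cref{lem:bound-on-A}: a $V(M)$-vertex $v\in V_{mid}$ enters $C$ only if it accumulates at least $\eta$ neighbors in $G_{M_{AB}}$ among the sampled $A$-vertices; summing the available edges in $\bigcup_i M_{AB}^i$ incident to $A$ (bounded by the maximal-matching structure together with the failure of Case~1, which restricts the number of $A$-$A$ edges) gives a per-iteration $B$-loss of $O(\alpha)\mu(G)$. Multiplying by $T = 200$ keeps the cumulative loss well within the $0.46\,\mu(G)$ slack, so $|B| \geq \mu(G)/5$ holds before any iteration $i\in[1,T]$.

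The main obstacle is this last step: unlike the $A$-side argument of \Cref{lem:bound-on-A}, bounding $B$-removals does not immediately reduce to Case~1's failure, since a $B$-vertex can legitimately accumulate sampled $A$-neighbors through $A$-$B$ edges of the $M_{AB}^i$'s. A cleaner workaround I would fall back on is to re-interpret $M$ as being recomputed from the current $G''$ at each iteration; then $|B|$ equals the size of a fresh $(1-\epsilon)$-approximate matching of the current $G''$, and it suffices to bound the shrinkage of $\mu(G''_{\mathrm{curr}})$ by the number of $V_{mid}$-vertices removed so far, for which the cumulative $A$-removal bound of \Cref{lem:bound-on-A} plus a direct count of $|C\setminus A|$ per iteration is sufficient.
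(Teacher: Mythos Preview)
Your opening calculation is fine: initially $|B|=|M|\geq (1-\epsilon)(\tfrac{2}{3}-120\sqrt{\alpha})\mu(G)$, so the observation holds before iteration~1 with room to spare. The gap is in the inductive step. You conjecture that the per-iteration $B$-loss is $O(\alpha)\mu(G)$, but this is exactly backwards: \Cref{lem:removing-b} shows that whenever the algorithm does not terminate, at least $0.01\,|B|$ vertices of $B$ are removed in that iteration, and the termination argument (\Cref{lem:termination}) relies precisely on $B$ being driven to empty after $T=200$ rounds. So a cumulative-loss bookkeeping against a fixed initial budget cannot work here; the quantity you are trying to bound from below is the very quantity the algorithm is designed to drain. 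You flag this obstacle yourself, but neither of the proposed workarounds closes it: the failure of Case~1 controls only $A$--$A$ edges in the $M_{AB}^i$, not $A$--$B$ edges, and recomputing $M$ on the shrunken $V_{mid}$ changes the object under discussion without giving any lower bound on $\mu(G''_{\mathrm{curr}})$.

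The paper's argument avoids tracking $|B|$ across iterations altogether and instead uses the hypothesis ``does not terminate in iteration~$i$'' directly. Fix the $M_{AB}^1$ constructed \emph{in} iteration~$i$. Because both Case~1 and Case~2 fail for this $M_{AB}^1$, all but at most $2c\alpha\,\widetilde{\mu}_1$ edges of a maximum matching of $G[A]$ must have an endpoint that is matched by $M_{AB}^1$ to a vertex of $B$ (if both endpoints were unmatched they would feed Case~2; if an endpoint were matched to another $A$-vertex it would feed Case~1). These $B$-partners are distinct since $M_{AB}^1$ is a matching, hence
\[
|B| \;\geq\; \mu(G[A]) - 2c\alpha\,\widetilde{\mu}_1 \;\geq\; \bigl(\tfrac{1}{4}-(i-1)10^{20}\alpha - 2c\alpha\bigr)\mu(G) \;\geq\; \tfrac{1}{5}\mu(G),
\]
using \Cref{lem:bound-on-A} for the middle inequality. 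The key conceptual point you were missing is that the lower bound on $|B|$ is \emph{derived from} non-termination in the current iteration, not preserved from earlier ones.
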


\begin{proof}
    Consider maximal matching $M_{AB}^1$. Since the algorithm has not terminated in iteration $i$, at least $\mu(G[A]) - 2c\alpha \widetilde{\mu}_1$ edges of $\mu(G[A])$ are blocked by maximal matching edges since the algorithm does not find more than $2c\alpha \widetilde{\mu}_1$ edges of matching $\mu(G[A])$ in case 1 and 2. This means that at least one endpoint of these edges is matched with a vertex in $B$. Thus, we have
    \begin{align*}
        |B| \geq \mu(G[A]) - 2c\alpha \widetilde{\mu}_1 \geq (\frac{1}{4} - (i-1)(10^{20} \alpha) - 8\alpha)\cdot\mu(G) \geq \frac{1}{5} \cdot \mu(G),
    \end{align*}
    where the second inequality follows from \Cref{lem:bound-on-A} and $\widetilde{\mu}_1 \leq \mu(G)$, and the last inequality follows by the choice of $T$, $\alpha$, and $c$.
\end{proof}

\begin{lemma}\label{lem:removing-b}
    Suppose that the \Cref{alg:matchingbetterthan23} does not terminate in iteration $i$. Then, at the end of this iteration, the algorithm removes at least $0.01 \cdot |B|$ vertices of $B$.
\end{lemma}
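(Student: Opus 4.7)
The plan is to exploit the structural consequences of neither Case 1 nor Case 2 triggering for any $i \in [k]$ in this iteration, in order to show that many $B$-vertices receive at least $\eta$ sampled $A$-neighbors in $G_{M_{AB}}$ and are therefore placed into $C$.

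First I would apply Chernoff concentration to the estimators $\widetilde{\mu}_2$ and $\widetilde{\mu}_3$ (as in the proofs of \Cref{lem:case1-approx,lem:case2-approx}) to translate the inequalities $\widetilde{\mu}_2 < c\alpha\widetilde{\mu}_1$ and $\widetilde{\mu}_3 < c\alpha\widetilde{\mu}_1$ into the following high-probability bounds for every $i$: (a) $|M_{AB}^i \cap (A\times A)| \leq O(\alpha \mu(G))$; and (b) $\mu(G[A \setminus V(M_{AB}^i)]) \leq O(\alpha \mu(G))$. Combining (b) with $\mu(G[A]) \geq \mu(G)/5$ from \Cref{lem:bound-on-A} forces at least $\Omega(\mu(G))$ edges of the maximum matching of $G[A]$ to have an endpoint in $V(M_{AB}^i)$, so $|V(M_{AB}^i) \cap A| \geq \Omega(\mu(G))$. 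Each such $A$-vertex has a partner in $M_{AB}^i$, and by (a) only $O(\alpha\mu(G))$ of those partners lie in $A$, so $M_{AB}^i$ contributes at least $\Omega(\mu(G))$ edges between $A$ and $B$ to $G_{M_{AB}}$. Summing over $i = 1,\ldots,k$ yields $\Omega(k\mu(G))$ total $A$-to-$B$ edges in $G_{M_{AB}}$.

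Writing $d_A(v)$ for the number of $A$-neighbors of $v \in B$ in $G_{M_{AB}}$, the previous step gives $\sum_{v \in B} d_A(v) = \Omega(k\mu(G))$. Since $B = V_{mid}\cap V(M)$ is one side of the (bipartite) matching $M$, we have $|B| \leq \mu(G)$, so the average of $d_A(v)$ over $B$ is $\Omega(k)$. Because $d_A(v) \leq k$ always, a standard averaging argument produces $B^{\star} \subseteq B$ with $|B^{\star}| = \Omega(|B|)$ such that every $v \in B^{\star}$ satisfies $d_A(v) = \Omega(k)$. For such $v$, the number $X_v$ of sampled $A$-vertices among its $A$-neighbors in $G_{M_{AB}}$ satisfies
\[
\E[X_v] \;=\; d_A(v)\cdot\frac{r_3}{|A|} \;=\; \Omega\!\Bigl(\frac{kr_3}{|A|}\Bigr) \;=\; \Omega\!\Bigl(\frac{n\log n}{|A|}\Bigr).
\]
Using the EDCS structural bound $|V_{mid}| = O(\mu(G))$ (which, in the bipartite setting, follows from every $v \in V_{mid}$ having $H$-degree at least $0.4\beta$ combined with $|E_H| \leq \beta\mu(G)$ via K\"onig's theorem applied to a vertex cover of $H$) together with $\widetilde{\mu}_1 = \Omega(\mu(G))$ from \Cref{lem:tight-instance-edcs}, we get $|A| \leq |V_{mid}| = O(\widetilde{\mu}_1)$. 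Thus $\E[X_v]$ exceeds $\eta = \frac{n\log n}{100\widetilde{\mu}_1}$ by a constant factor strictly greater than $1$, so a Chernoff bound gives $X_v \geq \eta$ with probability $1 - 1/\poly(n)$, placing $v$ into $C$. A union bound over $v \in B^{\star}$ yields $|C \cap B| \geq |B^{\star}|/2 \geq 0.01|B|$.

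The main obstacle is the final concentration step: ensuring that $\E[X_v] > (1+\Omega(1))\eta$ requires carefully tracking the ratio $\widetilde{\mu}_1/|A|$, which in turn relies on the EDCS bound $|V_{mid}| = O(\mu(G))$ (valid in bipartite graphs) and the quantitative $(2/3)$-approximation guarantee $\widetilde{\mu}_1 \geq (1-\epsilon)(2/3-120\sqrt{\alpha})\mu(G) - o(n)$. A secondary subtlety is that steps (a) and (b) are simultaneous Chernoff statements across all $k$ indices $i$, which is harmless because $r_1,r_2 = \Theta(\log^3 n)$ yields an inverse-polynomial failure probability per index, absorbing a union bound over $k = n^{\gamma/2}$ indices.
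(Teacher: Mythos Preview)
Your overall strategy is sound and parallels the paper's: both arguments start by showing that the failure of Cases~1 and~2 for every $i$ forces $\sum_{v\in B} d_A(v)=\Omega(k\,\mu(G))$ across the $k$ maximal matchings. From there the two proofs diverge. The paper argues globally: it lower-bounds the \emph{total} number of edges from sampled $A$-vertices into $B$ (via Hoeffding), simultaneously upper-bounds the number of such edges any single $B$-vertex can receive (using $|A|\ge \mu(G)/5$ from \cref{cor:size-of-a}), and then derives a contradiction from the assumption that fewer than $0.01|B|$ vertices reach threshold $\eta$. You instead proceed locally: average to extract $B^{\star}\subseteq B$ of size $\Omega(|B|)$ with $d_A(v)=\Omega(k)$, and then argue $X_v\ge\eta$ per vertex. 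Your extra ingredient $|V_{mid}|\le 5\mu(G)$ via K\"onig on $H$ (not used in the paper) is correct and is what lets you upper-bound $|A|$ and hence lower-bound $\E[X_v]/\eta$.

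The one genuine gap is the final sentence. When $\mu(G)=\Theta(n)$ your lower bound on $\E[X_v]$ is only $\Theta(\log n)$ with a small leading constant (concretely, $\E[X_v]\ge 1.8\,n\log n/|A|$ and $|A|$ can be as large as $5\mu(G)$, giving $\E[X_v]\gtrsim 0.7\log n$). The Chernoff lower tail then yields $\Pr[X_v<\eta]\le n^{-c}$ with $c<1$, which does \emph{not} survive a union bound over $|B^{\star}|=\Theta(n)$ vertices. The fix is easy: use linearity of expectation to get $\E[|B^{\star}\setminus C|]\le |B^{\star}|\cdot n^{-c}$ and apply Markov to conclude $|B^{\star}\cap C|\ge |B^{\star}|/2$ with probability $1-o(1)$. (Alternatively, one can mimic the paper's route and upper-bound the maximum $X_v$ via the lower bound $|A|\ge \mu(G)/5$, which makes the Chernoff exponent large enough for a genuine union bound.)
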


\begin{proof}
    Since the algorithm does not terminate in iteration $i$ for any of $k = n^{\gamma/2}$ maximal matchings, at least $(\mu(G[A]) - 2c\alpha\mu(G))\cdot n^{\gamma/2}$ edges of $\mu(G[A])$ are blocked by each of maximal matchings. This means that at least one endpoint of these edges is matched with a vertex in $B$. Thus, there are at least \begin{align*}
        (\mu(G[A]) - 2c\alpha\cdot \mu(G))\cdot n^{\gamma/2} \geq (\frac{1}{4} - (i-1)(10^{20} \alpha) - 2c\alpha)\cdot\mu(G) n^{\gamma /2} \geq \frac{1}{5} \cdot \mu(G)n^{\gamma/2}
    \end{align*}
    edges with one endpoint in $A$ and one endpoint in $B$ in all maximal matchings. Hence, the expected average degree of a vertex in $B$ in the subgraph $G_{M_{AB}}$ is at least
    \begin{align*}
        \frac{\frac{1}{5} \cdot \mu(G)n^{\gamma/2}}{|B|} \leq \frac{1}{5} \cdot n^{\gamma/2}, 
    \end{align*}
    where the inequality is because $|B| \leq \mu(G'') \leq \mu(G)$. Let $Z$ be the number of edges of $G_{M_{AB}}$ with one endpoint in $A$ that is among $r_3$ sampled vertices and one endpoint in $B$. Since we sample $10n^{1-\gamma /2} \log n$ vertices of $A$, we have
    \begin{align*}
        \E[Z] \geq 2n\log n.
    \end{align*}
    With the same argument as proof of \Cref{lem:bound-on-A}, each vertex of $A$ can have at most $n^{\gamma/2}$ neighbors in $B$, then by using Hoeffding's inequality, we get
    \begin{align*}
        \Pr[|Z-\E[Z]| \geq n \log n] \leq 2 \exp\left(- \frac{2 \cdot(10n^{1-\gamma/2}\log n) \cdot (
        \frac{1}{10} n^{\gamma /2})}{n^\gamma} \right) \leq \frac{1}{n^{10}},
    \end{align*}
    which implies that with high probability, there are at least $n\log n$ edges of maximal matchings with one endpoint in sampled $A$ vertices and one in $B$.

    Next, we prove for each vertex in $B$, the number of sampled $A$ vertices in $G_{M_{AB}}$ is small with high probability. Note that each vertex in $A$ is sampled with a probability of $\frac{10n^{1-\gamma/2}}{|A|}$. Moreover, each vertex in $B$ can have at most $n^{\gamma/2}$ neighbors in  $G_{M_{AB}}$. Let $W_v$ be the expected number of sampled $A$ vertices that are connected to $v$ in $G_{M_{AB}}$ for $v \in B$. Therefore,
    \begin{align*}
        \E[W_v] \leq \frac{10n \log n}{|A|} \leq \frac{50 n \log n}{\mu(G)}.
    \end{align*}
    where the last inequality holds because of \Cref{cor:size-of-a}. Then, by Chernoff bound
    \begin{align*}
        \Pr\left[|W_v - \E[W_v]| \geq \frac{25n\log n}{\mu(G)}\right] \leq 2\exp\left(-\frac{625n\log n}{150\mu(G)}\right) \leq \frac{2}{n^8}.
    \end{align*}
    Using a union bound on all vertices of $B$ implies that with high probability, each of the vertices of $B$ has at most $75n\log n / \mu(G)$ neighbors in $G_{M_{AB}}$ among sampled $A$ vertices. 

    For the sake of contradiction, suppose that the algorithm removes less than $0.01|B|$ vertices of $B$ at the end of the iteration. This implies that the maximum number of edges between vertices of $B$ and sampled $A$ vertices is
    \begin{align*}
        0.01|B| \cdot \left(\frac{75n\log n}{\mu(G)}\right) + 0.99|B| \cdot \left(\frac{n\log n}{100\widetilde{\mu}_1}\right) & \leq 0.01|B| \cdot \left(\frac{75n\log n}{\widetilde{\mu}_1}\right) + 0.99|B| \cdot \left(\frac{n\log n}{100\widetilde{\mu}_1}\right) \\
        &\leq   \frac{0.76|B|\cdot n\log n}{\widetilde{\mu}_1}\\
        & \leq 0.76 n \log n,
    \end{align*}
    where the first term is the total number of edges of removed vertices of $B$ and the second term is the total number of edges of vertices of $B$ that are not removed. Since we proved that with high probability, there are at least $n\log n$ edges, then the above upper bound on the number of edges is a contradiction that completes the proof.
\end{proof}

\subsection{Putting it all together: Beating two-thirds}

\begin{lemma}\label{lem:termination}
    If $\widetilde{\mu} \leq (\frac{2}{3} + \alpha)\cdot\mu(G)$, then the \Cref{alg:matchingbetterthan23} terminates in one of $T$ iterations.
\end{lemma}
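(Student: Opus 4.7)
The plan is to argue by contradiction: assume that the algorithm never takes the early-return branch in case 1 or case 2 during any of its $T$ outer iterations, and derive a contradiction by tracking the size of $B$ from iteration to iteration. Write $B_i$ for the value of $B$ at the start of iteration $i$.

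First, I establish the initial bound $|B_1| \le 2\mu(G)$: since $B = V_{mid} \cap V(M)$ is contained in the vertex set of a matching of $G''$, we have $|B_1| \le 2|M| \le 2\mu(G'') \le 2\mu(G)$. Next, I iterate \Cref{lem:removing-b}. Conditioned on the algorithm not terminating in iteration $i$, that lemma guarantees (with high probability) that the algorithm deletes at least $0.01\,|B_i|$ vertices of $B$ from $V_{mid}$ in Line~\ref{Line:remove}, so $|B_{i+1}| \le 0.99\,|B_i|$. Unrolling across all $T$ iterations gives
\[
    |B_{T+1}| \;\le\; (0.99)^{T}\, |B_1| \;\le\; 2(0.99)^{T}\,\mu(G).
\]

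Finally, I apply the unnumbered observation immediately preceding \Cref{lem:removing-b}: if the algorithm has still not terminated by the start of iteration $T+1$, then $|B_{T+1}| \ge \mu(G)/5$. Combining the two bounds yields $\mu(G)/5 \le 2(0.99)^{T}\mu(G)$, i.e., $(0.99)^{T} \ge 1/10$, which fails for our large absolute constant $T$ (any $T \ge \lceil \ln 10/\ln(100/99) \rceil$ suffices; this is where the precise constant $T=200$ may need to be read as ``a sufficiently large constant''). This contradiction proves the lemma.

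The main obstacle is bookkeeping rather than any new technical idea. The statements of \Cref{lem:bound-on-A}, \Cref{cor:size-of-a}, \Cref{lem:removing-b}, and the preceding observation each hold with probability $1-o(1)$ per iteration, and they are invoked at most $T = O(1)$ times, so a single union bound keeps the total failure probability $o(1)$ as required. A minor subtlety is that $A$ and $B$ are defined with respect to the \emph{current} $V_{mid}$ (which shrinks across iterations), so each iteration's invocation of the observation must be applied to the post-removal state; but the relations $|B_i| \ge \mu(G)/5$ and $|A_i| \ge \mu(G)/5$ used in the observation and in \Cref{cor:size-of-a} are exactly what \Cref{lem:bound-on-A} preserves under the bound $T \cdot 10^{20}\alpha \le 1/12$ enforced by the parameter choices in the algorithm.
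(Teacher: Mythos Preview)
Your argument is correct and follows essentially the same strategy as the paper: track the geometric decay of $|B|$ forced by \Cref{lem:removing-b} and contrast it with the lower bound on $|B|$ that must hold whenever the algorithm fails to terminate. The paper phrases the endgame more informally---it asserts that after $T$ iterations ``none of the vertices of $B$ will remain'' and then that nothing is left to block the matching of $G[A]$---whereas you invoke the unnumbered observation directly to get $|B| \ge \mu(G)/5$, which is cleaner and avoids the extra informal step. Your caveat about $T=200$ is well taken: with a $0.99$ contraction per round, $2(0.99)^{T}$ does not drop below $1/5$ until $T$ is roughly $230$, so both your argument and the paper's need $T$ to be read as a sufficiently large absolute constant. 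One small indexing slip: the observation, as stated, gives $|B_i| \ge \mu(G)/5$ whenever iteration $i$ fails to terminate, so the contradiction should be derived at $i=T$ (comparing $|B_T| \ge \mu(G)/5$ with $|B_T| \le 2(0.99)^{T-1}\mu(G)$) rather than at a nonexistent iteration $T+1$; this is cosmetic and does not affect the argument.
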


\begin{proof}
    Suppose that the algorithm does not terminate in any of $T$ rounds. By \Cref{lem:removing-b}, in each iteration, the algorithm will remove $0.01 |B|$ vertices of $B$. Thus, after $T = 200$ iterations, none of the vertices of $B$ will remain in the graph. However, plugging $T =200$ in \Cref{lem:bound-on-A}, we have $\mu(G[A]) \geq \frac{1}{5}\mu(G)$. Therefore, either $\widetilde{\mu}_2 \geq c\alpha \widetilde{\mu}_1$ or $\widetilde{\mu}_3 \geq c\alpha \widetilde{\mu}_1$ in iteration $200$, since there is no vertices of $B$ to block the matching of $G[A]$.
\end{proof}

\begin{lemma}\label{lem:approx-ratio-better-32}
    The output of \Cref{alg:matchingbetterthan23} is a $(\frac{2}{3} + \alpha, o(n))$ estimate for maximum matching of $G$.
\end{lemma}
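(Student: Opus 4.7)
The plan is to split into two cases according to whether the output $\widetilde{\mu}$ of \Cref{alg:matching23} (computed in Line 4 of \Cref{alg:matchingbetterthan23}) already exceeds $(\frac{2}{3}+\alpha)\mu(G)$. In the easy case $\widetilde{\mu} > (\frac{2}{3}+\alpha)\mu(G)$, the bound holds essentially for free: \Cref{lem:approx-ratio-32} gives $\widetilde{\mu} \leq \mu(G)$, so $\widetilde{\mu}$ itself is a valid $(\frac{2}{3}+\alpha, o(n))$-estimate. If the loop falls through without triggering cases 1 or 2, the algorithm returns exactly $\widetilde{\mu}$ and we are done. If some iteration does return a value from case 1 or case 2, I would point out that the upper-bound portions of \Cref{lem:case1-approx,lem:case2-approx} (which argue $\widetilde{\mu}_1+\widetilde{\mu}_2 \leq \mu(G)$ and $\widetilde{\mu}_1+\widetilde{\mu}_3 \leq \mu(G)$ purely from the vertex-disjointness of $M$ and $G[A]$, together with the one-sided Chernoff bound) do not invoke \Cref{lem:tight-instance-edcs} at all; therefore reporting the maximum of $\widetilde{\mu}$ with the values obtained inside the loop yields an estimate in $[(\tfrac{2}{3}+\alpha)\mu(G),\, \mu(G)]$ with high probability.

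For the substantive case $\widetilde{\mu} \leq (\frac{2}{3}+\alpha)\mu(G)$, the plan is to invoke \Cref{lem:termination}, which asserts that under this assumption the algorithm must exit the outer loop via a case-1 or case-2 return within its $T$ iterations. If termination happens via case 1, the returned value is $\widetilde{\mu}_1 + \widetilde{\mu}_2$ with $\widetilde{\mu}_2 \geq c\alpha \widetilde{\mu}_1$, so \Cref{lem:case1-approx} (which does use the tight-instance guarantee from \Cref{lem:tight-instance-edcs}, valid in this regime since $\mu(G') \leq \widetilde{\mu} + o(n) \leq (\tfrac{2}{3}+\alpha)\mu(G) + o(n)$) yields the desired $(\frac{2}{3}+\alpha, o(n))$-estimate. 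If termination happens via case 2, an analogous application of \Cref{lem:case2-approx} handles the returned $\widetilde{\mu}_1 + \widetilde{\mu}_3$. Either way the conclusion of the lemma holds.

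The proof is therefore essentially an orchestration of previously established statements; the main conceptual obstacle is really only the easy case, where one must check that an alternative return value from inside the loop cannot spoil the lower bound — this is resolved simply by reporting the maximum or, equivalently, by noting that the upper bound $\cdot \leq \mu(G)$ holds unconditionally in all branches while the lower bound $\widetilde{\mu} > (\frac{2}{3}+\alpha)\mu(G)$ is already in hand. A final union bound over the $O(Tk)$ concentration events used inside the sub-lemmas (the Chernoff bounds in \Cref{lem:case1-approx,lem:case2-approx}, the sparsification guarantee of \Cref{lem: sparsification-property}, the high-probability bounds in \Cref{lem:bound-on-A,lem:removing-b}, and correctness of the LCA subroutine from \Cref{lem:mid-low-lca}) ensures that the overall success probability is $1 - o(1)$, completing the argument.
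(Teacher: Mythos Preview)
Your proof follows exactly the paper's approach: case-split on whether $\widetilde{\mu}$ already exceeds $(\tfrac{2}{3}+\alpha)\mu(G)$, and in the substantive case invoke \Cref{lem:termination} to force an early return, then apply \Cref{lem:case1-approx} or \Cref{lem:case2-approx}. The paper's own proof of this lemma is two sentences and does precisely this (the easy case is dismissed separately at the start of the approximation-ratio subsection).

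One small caveat worth flagging, since you explicitly raised it: in the easy case, \Cref{alg:matchingbetterthan23} as written does \emph{not} report a maximum with $\widetilde{\mu}$; if a case-1 or case-2 return fires, the output is $\widetilde{\mu}_1+\widetilde{\mu}_2$ (resp.\ $\widetilde{\mu}_1+\widetilde{\mu}_3$) alone. Your ``equivalently'' alternative therefore does not go through---having $\widetilde{\mu} > (\tfrac{2}{3}+\alpha)\mu(G)$ in hand says nothing about the value actually returned if an early return fires, and without the tight-instance guarantee there is no lower bound on $\widetilde{\mu}_1$. The paper glosses over this identically, simply asserting that ``the algorithm outputs an estimation which is at least $\widetilde{\mu}$'' without argument. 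Your first fix (take the max) is the correct patch, but strictly speaking it modifies the algorithm rather than proving the lemma about \Cref{alg:matchingbetterthan23} as stated; you are no more and no less rigorous than the paper on this point.
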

\begin{proof}
    If $\widetilde{\mu}$ is not a $(\frac{2}{3} + \alpha, o(n))$ estimate for maximum matching of $G$, then by \Cref{lem:termination}, the algorithm terminates in one of the rounds. Therefore, by \Cref{lem:case1-approx} and \Cref{lem:case2-approx}, the output is a $(\frac{2}{3} + \alpha, o(n))$ estimate for maximum matching of $G$.
\end{proof}

\apprxtheoremB*

\begin{proof}
    By \Cref{lem:approx-ratio-better-32}, we obtain the claimed approximation ratio. Also, since we choose $\epsilon$ to be smaller than $\gamma$ in \Cref{alg:matchingbetterthan23}, the running time is $\widetilde{O}_\epsilon(n^{2-\epsilon^3})$ by \Cref{lem:time-better-23}.
\end{proof}

\paragraph{Multiplicative Approximation for Adjacency List:} With the same argument as the multiplicative approximation for the adjacency list in the previous section, we can assume $\mu(G') = \Omega(n^{0.99})$. Hence, if $\widetilde{\mu} < (2/3 + \alpha)\mu(G)$, we have $\mu(G[A]) = \Omega(n^{0.99}) $ by \Cref{lem:tight-instance-edcs}. Therefore, with the exact same argument as previous section, using a standard Chernoff bound, for getting multiplicative approximation, we can use $r_1 = \widetilde{O}(n^{0.01})$ and $r_2 = \widetilde{O}(n^{0.01})$ samples for estimation of $\widetilde{\mu}_2$ and $\widetilde{\mu}_3$. Note that the running time of \Cref{clm:maximal-matching-aa-edges} and \Cref{clm:estimating-aa-missing-matching} will increase to $\widetilde{O}_\epsilon(n^{1+\epsilon + \gamma/2 + 0.01})$ and $\widetilde{O}_\epsilon(n^{2+\epsilon-\gamma/2 + 0.01})$, respectively because of the larger number of samples. However, if we choose $\gamma$ larger than $0.02 + 2\epsilon + 2\epsilon^3$, the running time of the algorithm will remain $\widetilde{O}(n^{2-\epsilon^3})$ by \Cref{lem:time-better-23}.

\apprxtheoremBlist*


\bibliographystyle{plainnat}
\bibliography{references}

\begin{thebibliography}{24}
\providecommand{\natexlab}[1]{#1}
\providecommand{\url}[1]{\texttt{#1}}
\expandafter\ifx\csname urlstyle\endcsname\relax
  \providecommand{\doi}[1]{doi: #1}\else
  \providecommand{\doi}{doi: \begingroup \urlstyle{rm}\Url}\fi

\bibitem[Abboud and Williams(2014)]{AbboudW14}
Amir Abboud and Virginia~Vassilevska Williams.
\newblock Popular conjectures imply strong lower bounds for dynamic problems.
\newblock In \emph{55th {IEEE} Annual Symposium on Foundations of Computer
  Science, {FOCS} 2014, Philadelphia, PA, USA, October 18-21, 2014}, pages
  434--443, 2014.

\bibitem[Assadi and Behnezhad(2021)]{AssadiB21}
Sepehr Assadi and Soheil Behnezhad.
\newblock Beating two-thirds for random-order streaming matching.
\newblock In Nikhil Bansal, Emanuela Merelli, and James Worrell, editors,
  \emph{48th International Colloquium on Automata, Languages, and Programming,
  {ICALP} 2021, July 12-16, 2021, Glasgow, Scotland (Virtual Conference)},
  volume 198 of \emph{LIPIcs}, pages 19:1--19:13. Schloss Dagstuhl -
  Leibniz-Zentrum f{\"{u}}r Informatik, 2021.

\bibitem[Assadi and Bernstein(2019)]{AssadiB19}
Sepehr Assadi and Aaron Bernstein.
\newblock Towards a unified theory of sparsification for matching problems.
\newblock In Jeremy~T. Fineman and Michael Mitzenmacher, editors, \emph{2nd
  Symposium on Simplicity in Algorithms, {SOSA} 2019, January 8-9, 2019, San
  Diego, CA, {USA}}, volume~69 of \emph{OASIcs}, pages 11:1--11:20. Schloss
  Dagstuhl - Leibniz-Zentrum f{\"{u}}r Informatik, 2019.

\bibitem[Behnezhad(2021{\natexlab{a}})]{BehnezhadEdmondsGallai}
Soheil Behnezhad.
\newblock {Improved Analysis of {EDCS} via Gallai-Edmonds Decomposition}.
\newblock \emph{CoRR}, abs/2110.05746, 2021{\natexlab{a}}.

\bibitem[Behnezhad(2021{\natexlab{b}})]{behnezhad2021}
Soheil Behnezhad.
\newblock {Time-Optimal Sublinear Algorithms for Matching and Vertex Cover}.
\newblock In \emph{62nd {IEEE} Annual Symposium on Foundations of Computer
  Science, {FOCS} 2021, Denver, CO, USA, February 7-10, 2022}, pages 873--884.
  {IEEE}, 2021{\natexlab{b}}.

\bibitem[Behnezhad(2023)]{behnezhad2022dynamic}
Soheil Behnezhad.
\newblock {Dynamic Algorithms for Maximum Matching Size}.
\newblock In \emph{Proceedings of the 2023 {ACM-SIAM} Symposium on Discrete
  Algorithms, {SODA} 2023, to appear}. {SIAM}, 2023.

\bibitem[Behnezhad and Khanna()]{soheilsanjeev}
Soheil Behnezhad and Sanjeev Khanna.
\newblock \emph{New Trade-Offs for Fully Dynamic Matching via Hierarchical
  EDCS}, pages 3529--3566.

\bibitem[Behnezhad et~al.(2023)Behnezhad, Roghani, Rubinstein, and
  Saberi]{behnezhadsublinear22}
Soheil Behnezhad, Mohammad Roghani, Aviad Rubinstein, and Amin Saberi.
\newblock {Beating Greedy Matching in Sublinear Time}.
\newblock In \emph{Proceedings of the 2023 {ACM-SIAM} Symposium on Discrete
  Algorithms, {SODA} 2023, to appear}. {SIAM}, 2023.

\bibitem[Bernstein(2020)]{bernsteinstreaming2020}
Aaron Bernstein.
\newblock {Improved Bounds for Matching in Random-Order Streams}.
\newblock In Artur Czumaj, Anuj Dawar, and Emanuela Merelli, editors,
  \emph{47th International Colloquium on Automata, Languages, and Programming
  (ICALP 2020)}, volume 168 of \emph{Leibniz International Proceedings in
  Informatics (LIPIcs)}, pages 12:1--12:13, Dagstuhl, Germany, 2020. Schloss
  Dagstuhl--Leibniz-Zentrum f{\"u}r Informatik.
\newblock ISBN 978-3-95977-138-2.

\bibitem[Bernstein and Stein(2015)]{bernsteinstein2015}
Aaron Bernstein and Cliff Stein.
\newblock Fully dynamic matching in bipartite graphs.
\newblock In Magn{\'u}s~M. Halld{\'o}rsson, Kazuo Iwama, Naoki Kobayashi, and
  Bettina Speckmann, editors, \emph{Automata, Languages, and Programming},
  pages 167--179, Berlin, Heidelberg, 2015. Springer Berlin Heidelberg.
\newblock ISBN 978-3-662-47672-7.

\bibitem[Bernstein and Stein(2016)]{bersteinsteingeneral}
Aaron Bernstein and Cliff Stein.
\newblock Faster fully dynamic matchings with small approximation ratios.
\newblock In \emph{Proceedings of the Twenty-Seventh Annual ACM-SIAM Symposium
  on Discrete Algorithms}, SODA '16, page 692–711, USA, 2016. Society for
  Industrial and Applied Mathematics.
\newblock ISBN 9781611974331.

\bibitem[Bhattacharya et~al.(2022)Bhattacharya, Kiss, and Saranurak]{BKS22}
Sayan Bhattacharya, Peter Kiss, and Thatchaphol Saranurak.
\newblock Sublinear algorithms for $(1.5 + \epsilon)$-approximate matching.
\newblock 2022.

\bibitem[Bhattacharya et~al.(2023)Bhattacharya, Kiss, Saranurak, and
  Wajc]{bhattacharyaKSW-SODA23}
Sayan Bhattacharya, Peter Kiss, Thatchaphol Saranurak, and David Wajc.
\newblock {Dynamic Matching with Better-than-2 Approximation in Polylogarithmic
  Update Time}.
\newblock 2023.

\bibitem[Chen et~al.(2020)Chen, Kannan, and Khanna]{ChenICALP20}
Yu~Chen, Sampath Kannan, and Sanjeev Khanna.
\newblock {Sublinear Algorithms and Lower Bounds for Metric {TSP} Cost
  Estimation}.
\newblock In \emph{47th International Colloquium on Automata, Languages, and
  Programming, {ICALP} 2020, July 8-11, 2020, Saarbr{\"{u}}cken, Germany
  (Virtual Conference)}, pages 30:1--30:19, 2020.

\bibitem[Goel et~al.(2012)Goel, Kapralov, and Khanna]{GoelKK12}
Ashish Goel, Michael Kapralov, and Sanjeev Khanna.
\newblock On the communication and streaming complexity of maximum bipartite
  matching.
\newblock In Yuval Rabani, editor, \emph{Proceedings of the Twenty-Third Annual
  {ACM-SIAM} Symposium on Discrete Algorithms, {SODA} 2012, Kyoto, Japan,
  January 17-19, 2012}, pages 468--485. {SIAM}, 2012.

\bibitem[Henzinger et~al.(2015)Henzinger, Krinninger, Nanongkai, and
  Saranurak]{HenzingerKNS15}
Monika Henzinger, Sebastian Krinninger, Danupon Nanongkai, and Thatchaphol
  Saranurak.
\newblock Unifying and strengthening hardness for dynamic problems via the
  online matrix-vector multiplication conjecture.
\newblock In \emph{Proceedings of the Forty-Seventh Annual {ACM} on Symposium
  on Theory of Computing, {STOC} 2015, Portland, OR, USA, June 14-17, 2015},
  pages 21--30, 2015.

\bibitem[Hopcroft and Karp(1973)]{HopcroftK73}
John~E. Hopcroft and Richard~M. Karp.
\newblock An n\({}^{\mbox{5/2}}\) algorithm for maximum matchings in bipartite
  graphs.
\newblock \emph{{SIAM} J. Comput.}, 2\penalty0 (4):\penalty0 225--231, 1973.

\bibitem[Kapralov et~al.(2020)Kapralov, Mitrovic, Norouzi{-}Fard, and
  Tardos]{KapralovSODA20}
Michael Kapralov, Slobodan Mitrovic, Ashkan Norouzi{-}Fard, and Jakab Tardos.
\newblock {Space Efficient Approximation to Maximum Matching Size from Uniform
  Edge Samples}.
\newblock In \emph{Proceedings of the 2020 {ACM-SIAM} Symposium on Discrete
  Algorithms, {SODA} 2020, Salt Lake City, UT, USA, January 5-8, 2020}, pages
  1753--1772, 2020.

\bibitem[Levi et~al.(2015)Levi, Rubinfeld, and Yodpinyanee]{levironitt}
Reut Levi, Ronitt Rubinfeld, and Anak Yodpinyanee.
\newblock Local computation algorithms for graphs of non-constant degrees.
\newblock In \emph{Proceedings of the 27th ACM Symposium on Parallelism in
  Algorithms and Architectures}, SPAA '15, page 59–61, New York, NY, USA,
  2015. Association for Computing Machinery.
\newblock ISBN 9781450335881.

\bibitem[Nguyen and Onak(2008)]{NguyenOnakFOCS08}
Huy~N. Nguyen and Krzysztof Onak.
\newblock {Constant-Time Approximation Algorithms via Local Improvements}.
\newblock In \emph{49th Annual {IEEE} Symposium on Foundations of Computer
  Science, {FOCS} 2008, October 25-28, 2008, Philadelphia, PA, {USA}}, pages
  327--336, 2008.

\bibitem[Onak et~al.(2012)Onak, Ron, Rosen, and Rubinfeld]{OnakSODA12}
Krzysztof Onak, Dana Ron, Michal Rosen, and Ronitt Rubinfeld.
\newblock {A Near-Optimal Sublinear-Time Algorithm for Approximating the
  Minimum Vertex Cover Size}.
\newblock In \emph{Proceedings of the Twenty-Third Annual {ACM-SIAM} Symposium
  on Discrete Algorithms, {SODA} 2012, Kyoto, Japan, January 17-19, 2012},
  pages 1123--1131, 2012.

\bibitem[Parnas and Ron(2007)]{ParnasRon07}
Michal Parnas and Dana Ron.
\newblock {Approximating the Minimum Vertex Cover in Sublinear Time and a
  Connection to Distributed Algorithms}.
\newblock \emph{Theor. Comput. Sci.}, 381\penalty0 (1-3):\penalty0 183--196,
  2007.

\bibitem[Yao(1977)]{Yao77}
Andrew~Chi{-}Chih Yao.
\newblock Probabilistic computations: Toward a unified measure of complexity
  (extended abstract).
\newblock In \emph{18th Annual Symposium on Foundations of Computer Science,
  Providence, Rhode Island, USA, 31 October - 1 November 1977}, pages 222--227.
  {IEEE} Computer Society, 1977.

\bibitem[Yoshida et~al.(2009)Yoshida, Yamamoto, and Ito]{YoshidaYISTOC09}
Yuichi Yoshida, Masaki Yamamoto, and Hiro Ito.
\newblock An improved constant-time approximation algorithm for maximum
  matchings.
\newblock In Michael Mitzenmacher, editor, \emph{Proceedings of the 41st Annual
  {ACM} Symposium on Theory of Computing, {STOC} 2009, Bethesda, MD, USA, May
  31 - June 2, 2009}, pages 225--234. {ACM}, 2009.

\end{thebibliography}
	
\end{document}